\documentclass[12pt]{article}
\usepackage{amsmath}
\usepackage{color,xcolor}
\usepackage{subfiles}
\usepackage{amsfonts}
\usepackage{amsthm}
\usepackage{graphicx}
\usepackage{enumerate}
\usepackage{natbib}
\usepackage{float}
\usepackage{adjustbox}
\usepackage{booktabs}
\usepackage{enumitem}
\usepackage{changepage}
\usepackage{setspace}
\usepackage{soul}
\usepackage{comment}
\usepackage{multirow}

\usepackage[section]{placeins}
\usepackage[colorlinks,linkcolor=blue,anchorcolor=blue,urlcolor=blue,citecolor=blue]{hyperref}

\usepackage{multirow}
\usepackage[marginal]{footmisc}

\usepackage{url} 
\usepackage{xr}	
\newtheorem{theorem}{Theorem}
\newtheorem{corollary}{Corollary}

\newtheorem{lemma}{Lemma}[section]
\newtheorem{assumption}{Assumption}
  {
	\theoremstyle{plain}
	\newtheorem*{assumption4'}{Assumption 4'}
}
\newtheorem{remark}{Remark}


\definecolor{red}{RGB}{255,18,0}
\definecolor{yellow}{RGB}{240,228,66}
\definecolor{green}{RGB}{0,158,115}
\definecolor{UAred}{rgb}{0.57, 0.36, 0.51}
\definecolor{lightblue}{rgb}{0.68, 0.85, 0.9}
\definecolor{oldrose}{rgb}{0.75, 0.5, 0.51}
\definecolor{jonquil}{rgb}{0.98, 0.85, 0.37}
\definecolor{UAred}{RGB}{171, 5, 32}
\definecolor{UAblue}{RGB}{21, 36, 74}
\definecolor{Chili}{RGB}{139, 0 , 21}
\definecolor{leaf}{RGB}{112, 184, 101}
\definecolor{AZURITE}{RGB}{30, 82, 136}
\definecolor{MSU}{RGB}{24, 69, 59}
\definecolor{SufeRed}{HTML}{811C21}
\definecolor{UIUCOrange}{HTML}{FF552E}
\definecolor{UIUCBlue}{HTML}{13294B}
\definecolor{DarkCyan}{HTML}{008B8B}
\definecolor{Maroon}{HTML}{8B008B}
\definecolor{SGreen}{HTML}{18453B}

\mathchardef\mhyphen="2D 



\newcommand{\mF}{\mathcal{F}}

\newcommand{\mR}{\mathcal{R}}

\newcommand{\mS}{\mathcal{S}}
\newcommand{\mP}{\mathcal{P}}

\newcommand{\bbF}{\mathbb{F}}
\newcommand{\bbK}{\mathbb{K}}
\newcommand{\bbQ}{\mathbb{Q}}

\newcommand{\tW}{\widetilde{W}}

\def\E{\textnormal{E}}


\addtolength{\oddsidemargin}{-.5in}%
\addtolength{\evensidemargin}{-1in}%
\addtolength{\textwidth}{1in}%
\addtolength{\textheight}{1.7in}%
\addtolength{\topmargin}{-1in}%

\linespread{1.5}

\makeatother

\title{Treatment Effects Inference with High-Dimensional Instruments and Control Variables\footnote{We thank Kaicheng Chen, Hidehiko Ichimura, James Powell, Tiemen Woutersen, and participants at the seminar at the University of Arizona for very helpful and constructive comments. Computer programs to replicate the numerical analyses are available from the authors. All remaining errors are our own.}
} 

\author{
    Xiduo Chen\thanks{\small{Department of Economics, University of Arizona, 
    Email: xiduochen@arizona.edu
    }} 
   ,
    Xingdong Feng\thanks{\small School of Statistics and Data Science, Shanghai University of Finance and Economics, Email: feng.xingdong@mail.shufe.edu.cn}
    ,
    Antonio F. Galvao\thanks{\small Department of Economics, Michigan State University, Email: agalvao@msu.edu}
    ,
    Yeheng Ge\thanks{\small Department of Data Science and Artificial Intelligence, The Hong Kong Polytechnic University, Email: yeheng.ge@polyu.edu.hk}
}

\date{\today}

\begin{document}
\maketitle

\begin{spacing}{1.05}
\begin{abstract}
Obtaining valid treatment effect inference remains a challenging problem when dealing with numerous instruments and non-sparse control variables. In this paper, we propose a novel ridge regularization-based instrumental variables method for estimation and inference in the presence of both high-dimensional instrumental variables and high-dimensional control variables. These methods are applicable both with and without sparsity assumptions. To remove the estimation bias, we introduce a two-step procedure employing a ridge regression coupled with data-splitting in the first step, and a ridge style projection matrix with a simple least squares regression in the second. We establish statistical properties of the estimator, including consistency and asymptotic normality. Furthermore, we develop practical statistical inference procedures by providing a consistent estimator for the asymptotic variance of the estimator. The finite sample performance of the proposed methods is evaluated through numerical simulations. Results indicate that the new estimator consistently outperforms existing sparsity-based approaches across various settings, offering valuable insights for complex scenarios. Finally, we provide an empirical application estimating the causal effect of schooling on earnings addressing potential endogeneity through the use of high-dimensional instrumental variables and high-dimensional covariates.


\end{abstract}

\vspace{2mm}

\noindent {\it Keywords:}  Ridge regression,
high-dimensional instruments, high-dimensional covariates, endogeneity.
\end{spacing}

\doublespacing
\newpage
\section{Introduction}


Estimation of causal and treatment effects has provided a valuable method of statistical analysis and understanding of policy variable effects. This is especially true for program evaluation studies in economics, finance, and statistics, where these methods help to analyze how treatments or social programs affect the outcome variables of interest. In addition, with the increasing availability of richer data sets, the use of high-dimension methods in treatment effects models has become widespread in the literature (see, e.g., \citet{BelloniChernozhukovHansen14}, \citet{chernozhukov2015post}, \citet{chernozhukov2018debiased}, and 
\citet{angrist2022machine}). 

Endogeneity of the variable of interest is a very pervasive issue in empirical applications, and one of the most important challenges faced by researchers. The instrumental variables (IV) analysis is commonly used in practice to compute treatment effects for endogenous regressors (see, e.g., \citet{ImbensRubin15} and references therein). This is notably relevant for estimating causal effects when instruments are exogenous conditional on observables.
Recently, the use of high-dimensional linear IV models has proven to be crucial for correct and accurate estimation and inference in these studies (see, e.g., \citet{belloni2012sparse}, \citet{hansen2014instrumental}, \citet{zhu2018sparse}, \citet{CattaneoJanssonMa19}, \citet{gold2020inference}, and \citet{QiuTaoZhou21}).
This scenario is prevalent when the number of variables is very large, or when interactions between variables should be considered. 
It has been common in the literature to separately allow for high-dimensional IV or high-dimensional control variables.\footnote{There is also an existing literature considering the case of many IV. In this case, it is common that the number of IV grows together with the sample size, at the same rate or slower, but it is not allowed to be larger than the sample size (see, e.g., \cite{hausman2012instrumental}, \cite{ChaoSwansonWoutersen23}, and \cite{mikusheva2020inference}). In the high-dimension IV literature, the number of instruments is allowed to be larger than the sample size.} Ignoring the high dimension may result in biased inferences and misinterpretation of causal effects.


A typical solution to the high-dimension problem is to assume variable selection under the (approximate) sparsity, which effectively imposes that virtually all control coefficients are (very near) zero, and only a limited number of variables have substantial magnitude effects on the response. Many penalized-based regression methods have been proposed to obtain consistent estimators and confidence intervals for scalar parameters under the approximate sparsity assumption, including literature on selecting among high-dimension controls (see, e.g., \citet{zhang2014confidence}, \citet{farrell2015robust}, \citet{chernozhukov2015post}, \citet{belloni2015uniform}, and \citet{zhu2018sparse}) and also in the high-dimension IV context (see, e.g., \citet{belloni2012sparse}, and \citet{belloni2016post}).   
Although the sparsity assumption is common for estimation and inference in many models in the presence of high-dimensional instruments and controls, it is unverifiable and may often be violated in empirical applications.
Once sparsity is violated, the estimator may be severely biased, and the corresponding confidence interval for the low dimensional parameter is inconsistent. Thus, considering inference under a non-sparsity assumption is of practical importance. 

Non-sparse structures of control variables have been studied in the literature. Recently, \citet{cha2023inference} provide a general framework for inference in high-dimensional regression models without assuming sparsity, incorporating multiple estimation steps to account for complex dependence structures.
A portion of it weakens the sparsity condition (see, e.g., \citet{he2000parameters}, \citet{buhlmann2003boosting}, \citet{belloni2014inference}, \citet{hansen2014instrumental}, \citet{kozbur2017testing}, \citet{solvsten2020robust}, \citet{kozbur2020analysis}, \citet{mikusheva2020inference}, \citet{ing2020model}, and \citet{li2021LimitControl}). 
A compelling strategy to address the high-dimensional variable problem in the absence of sparsity is to employ a two-step ridge regression. \citet{hansen2014instrumental} use this framework and suggest a two-stage ridge jackknife instrumental variables estimator (RJIVE) under high-dimensional instruments. However, also considering high-dimensional control variables in this framework can add further challenges, since correlations between the two stages cannot be estimated/eliminated by conventional procedures.

In this paper, we contribute to the literature on IV models by proposing estimation and inference of treatment effects in the presence of both high-dimensional instruments and high-dimensional control variables with a non-sparse structure.  We first show that, in the presence of high-dimensional control variables, unfortunately, the existing RJIVE estimator suffers from inconsistency, which is induced by the use of the jackknife strategy together with the ridge regression when controlling for the high-dimensional exogenous controls in the first stage. Hence, we introduce a novel two-step estimator that employs a ridge regularization to the instrumental variables coupled with a data-splitting strategy, as well as a ridge projection matrix to partially out exogenous regressors.\footnote{\cite{angrist1995split} use a sample splitting strategy to solve the two-stage least squares bias. \cite{belloni2012sparse} employ sample splitting in the context of sparse models.} 
In contrast to existing approaches, which often involve iterative model selection, boosting techniques, or multi-step procedures, our proposed method uses a two-step estimator, where the first stage uses ridge regression coupled with data-splitting, and the second stage uses a ridge-style projection matrix with simple least squares regression.
This framework offers a simpler and computationally efficient alternative, and  specifically addresses challenges in models with both high-dimensional IV and exogenous controls, introducing new features related to instrument-control correlations that our method is designed to handle.

The practical implementation of the estimator is simple. 
In particular, in the first step, one uses the first partition of the data to compute a ridge regression of the endogenous variable on the set of instruments. Then, given these estimates, one uses the second partition of the data to calculate the fitted values.  In the second step, to compute the treatment effects estimator, one uses a projection matrix with a ridge penalty term to partial out the high-dimensional exogenous covariates together with a simple ordinary least squares estimator regressing fitted values from the first step on the outcome variable. Mild sufficient conditions are provided for this two-step estimator to have the desired asymptotic properties, namely, consistency and asymptotic normality. We highlight that this new framework is valid with or without sparsity assumptions. By avoiding variable selection procedures, the proposed methods are more robust than sparsity-based methods for causal inference with both high-dimensional IV and controls.


We develop statistical inference procedures for the proposed methods. It has been known that when the covariate dimension is large relative to the sample size, many classical methods become invalid for inference. \cite{liu2020estimation} develop inference methods based on ridge regression that can be applied to both low- and high-dimensional models. We extend their results to our two-stage treatment effects estimator. In particular, we employ a ridge-type variance estimation technique, which offers improved efficiency compared to traditional methods, and is easy to compute. We formally establish the consistency of the variance estimator. Finally, since the parameter of interest is finite-dimensional and the weak limit of the estimator is standard, one is able to employ standard inference procedures.  

We conduct Monte Carlo simulations to evaluate the finite sample performance of the proposed methods.\footnote{The practical implementation of the proposed method is available in the \texttt{R} package ``HDRRTreat''.} We investigate models with and without the sparsity condition. Numerical simulation results document evidence that the new estimator outperforms existing sparsity-based as well as non-sparsity-based approaches across a variety of settings. The proposed estimator is approximately unbiased, and the empirical coverage of the confidence interval is close to the nominal one. Overall, simulation results are in line with the theory, and provide favorable numerical evidence for the suggested methods.

Finally, we provide an empirical application to illustrate the methods proposed in this paper. We revisit the classic example of \citet{angrist1991does} and estimate the causal effect of schooling on earnings by addressing potential endogeneity through the use of high-dimension instrumental variables and high-dimension covariates. Results document evidence of imprecise estimates, with large confidence intervals.

The remainder of the paper is organized as follows. Section \ref{sec:model} presents the model and the parameter of interest. In Section \ref{sec:estimator}, we discuss the two-step estimation procedure. The limiting statistical properties of the estimator are presented in Section \ref{sec:asymp theory}. Section \ref{sec:inference} presents inference procedures, and Section \ref{sec:simulation} collects numerical Monte Carlo results. An empirical illustration of the methods is provided in Section \ref{sec:application}. Finally, Section \ref{sec:conlusion} concludes. All proofs are collected in the Appendix.

\section{Model}\label{sec:model}

The main objective of this paper is to produce causal inference in a model that allows for endogeneity of the variable of interest, as well as high-dimensional instruments and high-dimensional exogenous covariates. The study of causal effects is an important tool for practitioners. 
We consider the following model
\begin{align}
\label{reg_main_general}
y_{i}&
= \alpha d_i + f(X_{i})+\epsilon_{i} 
\\
\label{reg_iv_general}
d_{i}&=\Upsilon_i(Z_{i}) +  v_{i},
\end{align}
where $y_i$ is the response variable, $d_i $ is a scalar endogenous treatment variable,
and $\alpha$ is the main treatment effects parameter of interest. The vectors $X_i$ and $Z_{1i}$ contain the exogenous controls and the instruments variables, respectively. We consider $X_i$ and $Z_{i}:=[Z_{1i},X_{i}]$ to have dimensions $p_x$ and $p_z$, respectively, with $p_{z} = p_{x} + p_{z_{1}}$.
The function $\Upsilon_i$ is the unknown optimal instrument, and $f(\cdot)$ is the unknown function of exogenous controls.
The endogeneity in the above system comes from the correlation of errors $\epsilon_i$ and $v_i$. 

It is common in applications to specify and estimate simple linear models. We follow the literature and assume a linear specification for the controls $f(\cdot)$ and the optimal instruments $\Upsilon(\cdot)$. Although both of these functions might be unknown, and potentially complicated, the linear specification can be satisfied with a series expansion of the observable variables.\footnote{For example, one could have $X_{i} =\{p_{kK}(\zeta_{i})\}_{k=1}^{K}$ and $Z_{i} =\{p_{mM}(\upsilon_{i})\}_{m=1}^{M}$ for some set of basis functions $\{p_{jJ}(\cdot)\}$ such as orthogonal polynomials or splines formed from, respectively, a set covariates $\zeta_{i}$, and a set of `fundamental' instruments $\upsilon_{i}$.} Hence, we consider the following model
\begin{align}
y_{i} & = \alpha d_i +\gamma_{x}^{\top}X_{i}+\epsilon_{i} \label{reg_main},\\
d_{i} & =\gamma_{z}^{\top}Z_{i}+ v_{i}    \label{reg_iv}, 
\end{align}
with $Z_i=[Z_{1i},X_{i}]$. We assume that $\E[v_{i}|Z_{i}]=0$ and that $\E[\epsilon_{i}|Z_{i}]=0$ for all $i$. Further, we assume that $\textnormal{Var}(Z_{i})\neq 0$ such that $Z$ is a valid instrument for $X$. 

In this work, we allow for both the number of exogenous covariates and available instruments to be larger than the sample size. In particular, we assume that $p_x/n \rightarrow \tau_x$ and $p_z/n = (p_{z_{1}}+p_x)/n  = (p_{z_{1}}/n+p_x/n)=\tau_{z_{1}}+\tau_x \rightarrow \tau_z$, where $\tau_{x}>0$ and $\tau_{z}>0$. 
The parameter of interest, $\alpha$, is presented here as a scalar for simplicity. 
The methods and results discussed can be readily generalized to accommodate a vector-valued $\alpha$, preserving the same theoretical properties and interpretations.
Next, we introduce an estimator for the causal parameter of interest $\alpha$.



\section{The estimator}\label{sec:estimator}

This section proposes a two-step estimator for the treatment effects parameter of interest. Estimating the parameters of the model in equations \eqref{reg_main}--\eqref{reg_iv} in a high-dimensional non-sparse setting presents several challenges. First, it is computationally demanding to handle large numbers of instruments and exogenous control variables concurrently. Second, while sparsity generally facilitates the derivation of statistical properties for a penalized estimator, such as Lasso (see, e.g., \cite{chernozhukov2015post}), it comes at the cost of complicated variance term derivations. Finally, and most crucially, simply using the ridge jackknife instrumental variables estimator (RJIVE) --- that is often used in the high-dimensional instrumental variable context (see, e.g., \citet{hansen2014instrumental}) --- together with standard partialling out the exogenous regressors induces bias. 

Before we describe the details of our proposed method, to fix ideas and build intuition, we discuss reasons why a simple extension of the existing two-stage RJIVE method would be inadequate for estimating a model with endogeneity and high-dimensional regressors and instruments.

Let us consider a simple potential extension of the RJIVE estimation procedure for the model in equations \eqref{reg_main}--\eqref{reg_iv} above, which is inspired by a ridge regularization extension of the JIVE idea to the $n>p$ case. In the first-stage regression, the optimal instrument $\widehat{d}$ is estimated with a ridge regularized jackknife method as follows,
\begin{equation}\label{eq:opt instrument}
\widehat{d}_i =  \widehat{\gamma}_z(i)Z_{i}=\left(d_{-i}^{\top} Z_{-i}\right)\left(Z_{-i}^{\top} Z_{-i} + \lambda I \right)^{-1}Z_{i},
\end{equation}
where $\lambda$ is the regularization parameter, and the subsamples $d_{-i}$ and $Z_{-i}=[Z_{-1i}, X_{-i}]$ consist of all but the $i$-th data point, and $I$ is a $p_{z} \times p_{z}$ identity matrix.
This jackknife strategy introduced by \citet{angrist1999jackknife} is able to eliminate the many instrument biases for an independent sample. In particular, it has been shown that, under the exogeneity of the instrument, and a small number of covariates, the JIVE strategy does not suffer from the many instrument biases, because of the following orthogonality
\begin{align}\label{eq:jive}
\E\left[ \epsilon_i  \widehat{d}_{i}  \right]
& = \E\left[ Z_i^\top \left(
Z_{-i}^\top Z_{-i} + \lambda I
\right)^{-1} \left(Z_{-i}^\top \right) \E\left[d_{-i} \epsilon_i \mid Z_{-i}\right]
\right] \nonumber\\
&= \E\left[ Z_i^\top \left( Z_{-i}^{\top} Z_{-i} + \lambda I \right)^{-1} Z_{-i}^{\top} 
\left(
\E\left[Z_{-i}\gamma_z \epsilon_i \mid Z_{-i}\right]
+ \E\left[v_{-i} \epsilon_i \mid Z_{-i}\right]
\right)
\right]\nonumber\\
&=0.    
\end{align}
The last equality in \eqref{eq:jive} is zero because both $\E\left[v_{-i} \epsilon_i \mid Z_{-i}\right]=0$ and $\E\left[Z_{-i}\gamma_z \epsilon_i \mid Z_{-i}\right]=0$. The former conditional expectation is zero due to $\epsilon_{i}$ being independent of $v_{j}$ if $j\neq i$, and the latter is zero due to the validity of the instrument. 

Unfortunately, this jackknife strategy cannot accommodate scenarios with an additional large number of covariates, since equation \eqref{eq:jive} will not be valid. In such high-dimensional regressor setting, a simple partialling out of exogenous variables is not feasible, and the resulting bias can no longer be disregarded. To see this, consider the residuals from equation \eqref{reg_main} after partialling out covariates using ridge regression:
\begin{equation*}
\widehat{\epsilon} = \frac{1}{1-n^{-1}tr(A_{n})}n^{-1}(I-A_{n})\epsilon, 
\end{equation*}
where $A_{n}=X(X^{\top}X+\eta I)^{-1} X^{\top}$. The matrix $A_n$ denotes a feasible ridge projection matrix, regularized with the penalized term $\eta I$ (see, e.g., \citet{hansen2014instrumental} for further discussion on the properties of this matrix). 

Now, recall the optimal instrument in equation \eqref{eq:opt instrument}, by taking the expectation of term $\widehat{\epsilon}_i \widehat{d}_i$, we obtain
\begin{align}\label{eq:jive2}
\E\left[ \widehat{\epsilon}_i \widehat{d}_i \right]
& = \E\left[
Z_i^{\top}
\left(
Z_{-i}^{\top} Z_{-i} + \lambda I
\right)^{-1}
\left(Z_{-i}^{\top}\right) \E\left[d_{-i} \widehat{\epsilon}_i \mid Z_{-i}\right]
\right]\nonumber\\
& =
\E\left[
Z_i^{\top}
\left(
Z_{-i}^{\top} Z_{-i} + \lambda I
\right)^{-1}
\left(Z_{-i}^{\top}\right) 
\left(
\E\left[Z_{-i}\gamma_z \widehat{\epsilon}_i \mid Z_{-i}\right]
+
\E\left[v_{-i} \widehat{\epsilon}_i \mid Z_{-i}\right]
\right)
\right]\nonumber\\
& \neq 0.
\end{align}
It is important to note that the term $\E\left[v_{-i} \widehat{\epsilon}_i \mid Z_{-i}\right]$, in the above expression, is no longer equal to zero due to the fact that $\widehat{\epsilon}_i$ contains information from the entire sample because the matrix $A_n$ has information about every $i$, including the observation $-i$. Hence, $\widehat{\epsilon}_i$ is related to $v_{-i}$. 

Consequently, the lack of orthogonality in \eqref{eq:jive2} induces inconsistency of this JIVE estimator.
Intuitively, the inconsistency arises because the JIVE estimator relies on the assumption of independence between the error term and the constructed instrument, an assumption that is violated when the error term incorporates information from the entire sample.
Thus, the JIVE strategy is not designed to accommodate the case with both high-dimensional instrumental variables and high-dimensional control variables. 

To address the shortcomings of simultaneously accounting for a large number of instruments and control variables, we propose a novel two-step ridge regression approach for estimation under a non-sparsity assumption. To handle the high-dimension instrument and controls concurrently, we make use of two tools that are usually available in the literature. For estimation, we incorporate a data-splitting method in conjunction with a regularized (semi-)projection matrix for partialling out the exogenous regressors. 

The two-step ridge regression estimation procedure is as follows. In the first step, we start by splitting the data set into two parts randomly. 
Denote the partitions of the data $\{1,\cdots,n\}$ into two subsets $\mS_1 = \{1,\cdots,n_1\}$, $\mS_2 = \{1,\cdots,n_2\}$ and $n=n_1+n_2$.
Then, regarding the first-stage regression in equation \eqref{reg_iv}, we use the first partition of the data, and consider the following ridge regression estimation for the parameter $\gamma_{z}$:
\begin{equation*}
\widehat{\gamma}_{z}=\arg\min_{\gamma_{z}} \sum_{i \in \mS_1} \left( d_{i}-Z_{i}^{\top}\gamma_{z} \right)^2 + \eta_{z}\| \gamma_{z}\|_{2}^{2},     
\end{equation*}
where $\| \cdot \|_{2}$ is the Euclidean norm, and $\eta_{z}$ is a tuning parameter that we will discuss below. Thus, the estimator can be written as follows:
\begin{align}\label{eq: FS estimator} 
\widehat{\gamma}_{z}=\left(Z^{\left(1\right)\top}Z^{\left(1\right)}+\eta_{z}I\right)^{-1}\left(Z^{\left(1\right)}d^{\left(1\right)}\right),
\end{align}
where $Z^{\left(1\right)}$ and $d^{\left(1\right)}$ denote the instruments and endogenous variables, respectively, with sub-samples indexed by $\mS_1$. 
Now, using the second partition of the sample, together with the above estimates, we construct the fitted values as,
\begin{align}\label{eq: FS dhat}  \widehat{d}^{(2)}=Z^{(2)\top}\widehat{\gamma}_{z},
\end{align}
where $Z^{\left(2\right)}$ denotes the instruments for the sub-sample indexed by $\mS_s$. 

In the second step, we estimate the parameter $\alpha$ in equation \eqref{reg_main} by performing a simple ordinary least squares procedure after partialling out regressors as follows:
\begin{align}\label{eq: estimator}    \widehat{\alpha}=\left(\widehat{d}^{(2)\top}\left(I-A_{n_{2}}\right)d^{(2)}\right)^{-1}\left(\widehat{d}^{(2)\top}\left(I-A_{n_{2}}\right)y^{(2)}\right),
\end{align}
where 
\begin{equation*}
A_{n_{2}}=
n_{2}^{-1}
X^{(2)}\left(X^{(2)\top}X^{(2)}+\eta_{x}I\right)^{-1}X^{(2)\top}.
\end{equation*}
It is important to note that the notion of partialling out above uses the matrix $A_{n_{2}}$, which is a regularized (semi-)projection matrix with $\eta_{x}$ being a tuning parameter that we discuss further below.

We remark that, unlike \cite{hansen2014instrumental}, our proposal incorporates a penalized term --- through the $A_{n_{2}}$ projection matrix --- in the second stage of estimation, rather than directly using the estimated values from the first stage. This approach enhances the robustness and stability of our estimator by regularizing the second stage, addressing high-dimensional issues in both stages.
The two-step estimator in equation \eqref{eq: estimator} has two important features. First, the sample-splitting procedure allows for consistent estimation of the parameter of interest and is crucial for inference purposes.
Technically, the procedure in equations \eqref{eq: FS estimator} and \eqref{eq: FS dhat} guarantees the independence of two leading terms in the linear representation of the estimator in the first step of estimation --- one term is the projection of the endogenous regressor on the set of instruments, and the other is the projection of the outcome variable on the set of exogenous controls. Hence, the sample splitting ensures that the limiting distribution of the estimator is well-behaved. Intuitively, this strategy allows for correct inference because the randomness inherent in the estimated treatment variable is independent of the dataset used for statistical inference. The data split strategy can also resolve the many instrument biases in the presence of controls due to the independent nature of the estimation and inference dataset.

The second feature is the partialling out of the high-dimensional control variables by the ridge projection matrix, $A_{n_{2}}$. As discussed previously, the matrix $A_{n_{2}}$ is a regularized (semi-) projection matrix, and the practical implementation of the estimator requires the selection of the penalty parameters $\eta_{x}$ and $\eta_{z}$. 
We follow the usual approach in the ridge regression literature (see, e.g., \citet{friedman2010regularization}, and \citet{liu2020estimation}) and set $\eta_{z}=c_{z} \max_{1 \leq j \leq p} \big| z_{(j)}^{\top} d \big| / (n_{1} p_{z})$, and $\eta_{x}=c_{x} \max_{1 \leq j \leq p} \big| x_{(j)}^{\top} Y \big| / (n_{2} p_{x})$, where $c_{z}$ and $c_{x}$ are preset constants. Here,  $Y = (y_{1}, \ldots, y_{n})^{\top}$, $x_{(j)} = (x_{1j}, \ldots, x_{nj})^{\top}$ is the $j$-th column of $X \ (j = 1, \ldots, p_{x})$, and $z_{(j)} = (z_{1j}$, \ldots, $z_{nj})^{\top}$ is the $j$-th column of $Z \ (j = 1, \ldots, p_{z})$.

The tuning parameters $\eta_{z}$ and $\eta_{x}$ are used in the first and second stages, in equations \eqref{eq: FS estimator} and \eqref{eq: estimator}, respectively. Parameters $\eta_{z}$ and $\eta_{x}$ capture the scale of the data, ensuring that the regularization terms are appropriately scaled relative to the dimensions of the control variables and instrumental variables, respectively. Specifically, they are designed to balance the regularization based on the maximum influence of individual predictors on the response variable $Y$ and the dependent variable $d$. In practice, we use $\eta = \min\{\eta_{x}, \eta_{z}\}$ for regularization to achieve a balanced approach.

In our proposed estimator, the tuning parameters control the extent of shrinkage applied to the coefficients rather than directly determining the sparsity of the model. In contrast to the $L_1$ norm requirements in \cite{chernozhukov2015post}, where the tuning parameter is used to achieve varying levels of sparsity or to eliminate predictors in model selection, our method focuses on regularizing the magnitude of the coefficients, allowing them to be small but non-zero. It is worth noting that in this context, by employing ridge-regularization together with sample splitting, we can effectively control the complexity of the model while maintaining the inclusion of all predictors, thereby striking a balance between model interpretability and predictive performance.

A related contribution is \citet{chernozhukov2018debiased}, who develop a debiased machine learning approach combining orthogonalization with sample splitting to deliver valid inference in high-dimensional settings. Our method differs in that it emphasizes shrinkage-based regularization rather than debiasing, aiming at stable estimation while retaining all predictors, and is thus, although related, more directly comparable to ridge-type procedures.

The next section shows that the two-step procedure discussed here has desirable statistical properties. In particular, we establish the consistency and asymptotic normality of the estimator. We will also discuss practical inference.

\section{Theoretical results}\label{sec:asymp theory}

This section presents the statistical properties of the two-step estimator discussed in the previous section. We first discuss the assumptions, and then establish the consistency and asymptotic normality of the estimator.

Notation is as follows: We have a dataset $\{X_{i},Z_{i},y_{i},d_{i}\}_{i=1}^{n}$ of size $n$, where $d_{i}$ is the treatment, $X_{i}$ are control variables, $Z_{i}$ are high-dimensional instrument variables, and $y_{i}$ is the outcome variable. We randomly split the dataset into two parts, $\mS_{1}$
and $\mS_{2}$, with size $n=n_1+n_2$. We denote the dataset $\mS_{1}$ for the first partition $\{X_{i}^{\left(1\right)},Z_{i}^{\left(1\right)},y_{i}^{\left(1\right)},d_{i}^{\left(1\right)}\}_{i=1}^{n_1}$, and $\mS_{2}$ for the second partition $\{X_{i}^{(2)},Z_{i}^{(2)},y_{i}^{(2)},d_{i}^{(2)}\}_{i=1}^{n_2}$. Finally, our goal is to estimate and conduct inference for the treatment effect parameter $\alpha$.

\subsection{Assumptions}

We start by stating and discussing the assumptions to establish the statistical properties of the estimator presented above. 
Let $\{X_{i},Z_{i},y_{i},d_{i}\}_{i=1}^{n}$ be  independent and identically distributed (\textit{i.i.d.}) observations. We also consider the following conditions.

\begin{assumption}
    \label{assumption_Eigen_Upsilon}
     The optimal instrument can be written as
    $\Upsilon_i = \mP_X\Upsilon_i + (I-\mP_X)\Upsilon_i = \vartheta_{X,i} + \vartheta_{X^c,i}$, where $\mP_X = X (X^\top X)^{-1} X^\top$ is the projection matrix onto the  space spanned by $X$.  
    Also, there is $C>0$ such that 
    $\left\|\sum_{i=1}^{n} \vartheta_{X^c,i}\vartheta_{X^c,i}^{\top} / n\right\| \leq C$ where $\| \cdot \|$ denotes the Euclidean norm, and the smallest eigenvalue $\lambda_{\min }\left(\sum_{i=1}^{n} \vartheta_{x^c,i} \vartheta_{x^c,i}^{\top} / n\right) \geq 1 / C$. In addition, the following moment conditions hold:
    $$
    \E[\, v_i \mid Z_i \,] = 0,
    \qquad
    \E[\, \epsilon_i \mid Z_i \,] = 0,
    $$
    and the instrument satisfies the relevance condition
    $
    \mathrm{Var}(Z_i) \neq 0.$
    Equivalently, $\E[\, \vartheta_{X^c,i} \, u_i \,] = 0$,
    where $u_i$ denotes the structural error term. 
\end{assumption}

Assumption \ref{assumption_Eigen_Upsilon} describes the asymptotic representation property of optimal instruments. This condition decomposes the optimal instruments into two components: one residing in the column space of confounding variables, and another orthogonal to the confounding variables. This ensures stability and control over these terms as the sample size increases. Assumption \ref{assumption_Eigen_Upsilon} is extensively employed and documented in the literature on many instruments (see, e.g., \citet{hausman2012instrumental} and \citet{hansen2014instrumental}).



\begin{assumption} \label{assumption_reg}
Let $\Sigma_X = \E(X^{\top}X)$ and $\Sigma_Z = \E(Z^{\top}Z)$ be the  covariance matrix of controls and instruments, respectively. There exist constants $C > 0$ and $c > 0$ such that, as $n \to \infty$,
\begin{itemize}
    \item[(i)] $\lambda_{\max}\!\left(
    \Sigma_X
    \right) < C$ and $\lambda_{\min}\!\left(
    \Sigma_X
    \right) > c$
    , and $\|\gamma_x\| < C$. 
    \item[(ii)] $\lambda_{\max}\!\left(
    \Sigma_Z
    \right) < C$ and $\lambda_{\min}\!\left(
    \Sigma_Z
    \right) > c$ w.p.a.1, and $\|\gamma_z\| < C$. 
\end{itemize}
\end{assumption}

Assumption \ref{assumption_reg} ensures that the signal retained after regularization constitutes a non-negligible fraction of the original signal, in line with the literature on instrumental variables \citep{chernozhukov2015post,hansen2014instrumental}. The boundedness of both the largest and smallest eigenvalues guarantees that the design matrices $X$ and $Z$ are well-conditioned, thereby ensuring numerical stability and invertibility of relevant sample moment matrices. The restrictions on $\|\gamma_x\|$ and $\|\gamma_z\|$ are bounds of the magnitude of the coefficients helping to prevent overfitting and to maintain stability of the regression estimates. 
The eigenvalue restrictions are imposed in a probabilistic sense, holding with probability approaching one as $n \to \infty$, and are satisfied under sub-Gaussian designs provided that $p_x/n \to \tau_x < 1$ and $p_z/n \to \tau_z < 1$ \citep{vershynin2010introduction}. 



\begin{assumption} 
  If we write $Z=XB+W$,
    \label{assumption_sigma_E}
    \begin{enumerate}
        \item[(i)]  $\|B\| \leq C_B$, \ for some constant \ $C_B > 0$;
        \item[(ii)]  $\lambda_{\max}(\Sigma_{W}) < C_W$, $\lambda_{\min}(\Sigma_{W}) > c_W$, for some constants $C_W>0$ and $c_W > 0$.
       \end{enumerate}
\end{assumption}



In Assumption \ref{assumption_sigma_E}, $B$ denotes the correlation structure between the instruments $Z$ and the covariates $X$. Bounding $\|B\|$ serves two purposes: it ensures that the projection of $Z$ onto the column space of $X$ remains well-behaved (avoiding arbitrarily large projections), avoiding unbounded projections, and excessively large coefficient estimates that may lead to numerical instability and obscure the model’s economic interpretation.
The conditions on the covariance matrix $\Sigma_W$ guarantee that the component of $Z$ orthogonal to $X$ contains sufficient independent variation. Specifically, bounding the largest and smallest eigenvalues guarantees that the residualized instruments neither exhibit excessive variance nor approach degeneracy. This well-conditioned residual component makes the instruments informative beyond what is explained by $X$, preserving identification strength and promoting the numerical stability of subsequent estimators.





\begin{assumption}\label{assumption_large_p}
Assume that, as $(n,p_x,p_{z}) \to \infty$, $p_x/n \rightarrow \tau_x$ and $p_z/n = (p_{z_{1}}+p_x)/n  = (p_{z_{1}}/n+p_x/n)=\tau_{z_{1}}+\tau_x \rightarrow \tau_z$, where $\tau_{x}>0$ and $\tau_{z}>0$. 
\end{assumption}

Notice that Assumption \ref{assumption_large_p} allows for both the number of instruments and covariates to grow 
with the sample size. In particular, these ratios could even diverge to infinity. This assumption is an extension of that in \citet{hansen2014instrumental}. 

\begin{assumption} \label{assumption_homo}
$\E_{X}\left[\epsilon\epsilon^{\top}\right]=\sigma_{\epsilon}^{2} < \infty$
and $\E_{X}\left[vv^{\top}\right]=\sigma_{v}^{2}< \infty$, where $\E_{X}[\cdot]$ denotes the expectation conditional on $X$.
\end{assumption}


Assumption \ref{assumption_homo} imposes a homoskedasticity condition. This constraint on the variance structure simplifies the model's error term properties, facilitating more straightforward and easily implementable inference procedures. We leave extensions to the heteroskedastic case to future research.

\subsection{Asymptotic results}

In this section, we establish the asymptotic properties of the general two-step estimator described in equation \eqref{eq: estimator} above. In particular, we establish the consistency and the asymptotic normality. We will discuss details on inference procedures in the next section.

\begin{theorem}\label{thm:consistency}
Let Assumptions \ref{assumption_Eigen_Upsilon}--\ref{assumption_large_p} hold, and $\eta_{x}$ satisfies $n^{1/2}\eta_{x}\|\gamma_{x}\|=o(1)$. Moreover, let $\frac{\operatorname{tr}\{QQ^{\top}QQ^{\top}\}-\operatorname{tr}\{QQ^{\top}\circ QQ^{\top}\}}{\operatorname{tr}(QQ^{\top})-\operatorname{tr}(Q\circ Q)}=o_p(1)$, where \ensuremath{Q=Z(Z^{\top}Z/n+\eta_{z}I)^{-1}Z^{\top}(I-A_{n})}. Then, as $n\rightarrow\infty$,
\begin{equation*}
\widehat{\alpha} \overset{p}{\to}\alpha.
\end{equation*}
\end{theorem}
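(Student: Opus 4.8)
The plan is to linearise $\widehat\alpha$ around $\alpha$ and show that the resulting ratio vanishes, using the data split to neutralise the endogeneity and the non‑negligible first‑stage noise, and the scaling of $\eta_{x}$ to kill the ridge partialling‑out bias. Substituting \eqref{reg_main} into \eqref{eq: estimator},
\begin{equation*}
\widehat\alpha-\alpha=\frac{\widehat d^{(2)\top}(I-A_{n_{2}})\big(X^{(2)}\gamma_{x}+\epsilon^{(2)}\big)}{\widehat d^{(2)\top}(I-A_{n_{2}})d^{(2)}}=:\frac{R_{n}+E_{n}}{D_{n}},
\end{equation*}
with $R_{n}=\widehat d^{(2)\top}(I-A_{n_{2}})X^{(2)}\gamma_{x}$, $E_{n}=\widehat d^{(2)\top}(I-A_{n_{2}})\epsilon^{(2)}$ and $D_{n}=\widehat d^{(2)\top}(I-A_{n_{2}})d^{(2)}$. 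It suffices to show that, with probability approaching one, $D_{n}$ lies between constant multiples of $n$, while $R_{n}=o_{p}(n^{1/2})$ and $E_{n}=O_{p}(n^{1/2})$.

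The core step is the treatment of $\widehat d^{(2)}$. From \eqref{eq: FS estimator}--\eqref{eq: FS dhat} and \eqref{reg_iv}, $\widehat\gamma_{z}-\gamma_{z}=b_{z}+\delta_{z}$ with ridge bias $b_{z}=-\eta_{z}(Z^{(1)\top}Z^{(1)}+\eta_{z}I)^{-1}\gamma_{z}$ and noise $\delta_{z}=(Z^{(1)\top}Z^{(1)}+\eta_{z}I)^{-1}Z^{(1)\top}v^{(1)}$, so that $\widehat d^{(2)}=\Upsilon^{(2)}+Z^{(2)}b_{z}+Z^{(2)}\delta_{z}$ with $\Upsilon^{(2)}=Z^{(2)}\gamma_{z}$. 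Under Assumption \ref{assumption_reg}(ii) and the choice of $\eta_{z}$ the term $Z^{(2)}b_{z}$ is of smaller order than $\Upsilon^{(2)}$, and $\|\widehat d^{(2)}\|=O_{p}(n^{1/2})$. In the proportional‑dimension regime of Assumption \ref{assumption_large_p} the noise term $Z^{(2)}\delta_{z}$ is \emph{not} negligible in norm --- it is itself of order $n^{1/2}$ --- so $\widehat d^{(2)}$ cannot be replaced by $\Upsilon^{(2)}$; the key fact is that $\delta_{z}$ is a function of $v^{(1)}$ only, hence, by the split and the iid assumption, independent of $(v^{(2)},\epsilon^{(2)})$ and conditionally mean zero given the regressors --- exactly the orthogonality that breaks down for the jackknife construction in \eqref{eq:jive2}. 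Conditioning on $\mG$, the $\sigma$‑field generated by all regressors, one has $\E[\epsilon^{(2)}\mid\mG]=\E[v^{(2)}\mid\mG]=\E[\delta_{z}\mid\mG]=0$ by Assumption \ref{assumption_Eigen_Upsilon}, and $\widehat d^{(2)}$ has conditional mean $\Upsilon^{(2)}$ up to the negligible ridge‑bias piece. Expanding $D_{n}$ and $E_{n}$, every summand is either the deterministic $\Upsilon^{(2)\top}(I-A_{n_{2}})\Upsilon^{(2)}$ (plus a negligible $b_{z}$ contribution) or is conditionally mean zero by virtue of being linear in one of $v^{(1)},v^{(2)},\epsilon^{(2)}$; homoskedasticity (Assumption \ref{assumption_homo}), the eigenvalue bounds (Assumptions \ref{assumption_reg} and \ref{assumption_sigma_E}) and $0\preceq A_{n_{2}}\preceq I$ bound each conditional variance by $O_{p}(n)$, so each such term is $O_{p}(n^{1/2})$. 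The trace‑ratio condition in the hypothesis --- that no single coordinate or eigen‑direction of $Q=Z(Z^{\top}Z/n+\eta_{z}I)^{-1}Z^{\top}(I-A_{n})$ dominates, a de~Jong‑type requirement --- is precisely what forces the quadratic forms in the errors carried along by $\widehat d^{(2)}$ to concentrate about their means, so that these $O_{p}(n^{1/2})$ bounds are genuine and not merely statements about expectations.

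For the denominator I would note that $A_{n_{2}}$ is a ridge annihilator of $X^{(2)}$ with $A_{n_{2}}\preceq\mP_{X^{(2)}}$, so its leading term obeys
\begin{equation*}
\Upsilon^{(2)\top}(I-A_{n_{2}})\Upsilon^{(2)}\ \ge\ \Upsilon^{(2)\top}(I-\mP_{X^{(2)}})\Upsilon^{(2)}\ =\ \sum_{i\in\mS_{2}}\vartheta_{X^{c},i}^{2},
\end{equation*}
which by the lower eigenvalue bound of Assumption \ref{assumption_Eigen_Upsilon} and a law of large numbers is at least $cn$ w.p.a.1, the matching upper bound $\|\Upsilon^{(2)}\|^{2}\le Cn$ coming from Assumption \ref{assumption_reg}(ii); combined with the fluctuation bounds above, $c'n\le D_{n}\le C'n$ w.p.a.1. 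For the bias, $(I-A_{n_{2}})X^{(2)}$ is the pure ridge residual of $X^{(2)}$ regressed on itself, so $\|(I-A_{n_{2}})X^{(2)}\gamma_{x}\|=O_{p}(\eta_{x}n^{1/2}\|\gamma_{x}\|)$ and hence $|R_{n}|\le\|\widehat d^{(2)}\|\,\|(I-A_{n_{2}})X^{(2)}\gamma_{x}\|=O_{p}\big(n\,\eta_{x}\|\gamma_{x}\|\big)=o_{p}(n^{1/2})$ under the hypothesis $n^{1/2}\eta_{x}\|\gamma_{x}\|=o(1)$. Finally, $E_{n}$ has, given $\mG$, conditional mean zero and conditional variance $\sigma_{\epsilon}^{2}\,\E[\widehat d^{(2)\top}(I-A_{n_{2}})^{2}\widehat d^{(2)}\mid\mG]\le\sigma_{\epsilon}^{2}\,\E[\|\widehat d^{(2)}\|^{2}\mid\mG]=O_{p}(n)$, so $E_{n}=O_{p}(n^{1/2})$; hence $\widehat\alpha-\alpha=\big(o_{p}(n^{1/2})+O_{p}(n^{1/2})\big)/D_{n}=O_{p}(n^{-1/2})=o_{p}(1)$.

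The step I expect to be the main obstacle is the second paragraph. Because $p_{z}\asymp n$, the first‑stage estimation error $Z^{(2)}\delta_{z}$ does not wash out, so controlling both $D_{n}$ and the numerator $R_{n}+E_{n}$ requires two ingredients in tandem: the data split, which turns every otherwise‑dangerous cross term between the first‑stage noise and the second‑sample errors into a conditional mean‑zero object, and the trace/de~Jong condition, which upgrades ``mean zero'' to ``small'' by forcing the induced error quadratic forms to concentrate. A secondary difficulty is that, with $\tau_{x}$ allowed to be close to one, $A_{n_{2}}$ is far from the exact projection $\mP_{X^{(2)}}$, so the denominator lower bound cannot rest on an approximation $A_{n_{2}}\approx\mP_{X^{(2)}}$ and must instead be read off from the one‑sided comparison $A_{n_{2}}\preceq\mP_{X^{(2)}}$ together with Assumption \ref{assumption_Eigen_Upsilon}.
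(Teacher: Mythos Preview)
Your decomposition $\widehat\alpha-\alpha=(R_{n}+E_{n})/D_{n}$ and your treatment of the three pieces are essentially the same strategy the paper follows: the paper also splits the numerator into $T_{1}=n_{2}^{-1}R_{n}$ and $T_{2}=n_{2}^{-1}E_{n}$, kills $T_{1}$ via the ridge identity $(I-A_{n_{2}})X^{(2)}\gamma_{x}=\eta_{x}X^{(2)}(n_{2}^{-1}X^{(2)\top}X^{(2)}+\eta_{x}I)^{-1}\gamma_{x}$, and handles $T_{2}$ by showing mean zero (via the split) plus a variance that is $o(1)$. The key conceptual point you identify---that $Z^{(2)}\delta_{z}$ is \emph{not} norm-negligible in the proportional regime and that the split is what renders it harmless---is exactly the mechanism the paper relies on.

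Two differences are worth noting. First, for the denominator you get the lower bound cleanly from the one-sided comparison $A_{n_{2}}\preceq\mP_{X^{(2)}}$ together with Assumption~\ref{assumption_Eigen_Upsilon}; the paper instead expands $\widehat d^{(2)}$ through the optimal-instrument decomposition of Assumption~\ref{assumption_Eigen_Upsilon} and the representation $Z=XB+W$ of Assumption~\ref{assumption_sigma_E}, analyzes six cross-terms one by one, and actually identifies the limiting constant $\Sigma_{vW}\gamma_{z}$ rather than merely bounding $D_{n}$ between multiples of $n$. Your route is shorter but gives less; the paper's route is more laborious but pins down the limit. Second, you invoke the trace-ratio/de~Jong condition as what makes the error quadratic forms concentrate; in the paper's consistency proof (Lemmas~\ref{lemma: deno}--\ref{lemma: T2}) this condition is in fact never used---mean-zero plus second-moment bounds via Chebyshev suffice for every term, and the trace condition is deployed only later in the martingale CLT for Theorem~\ref{thm:asymp normality}. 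Your appeal to it here is harmless but unnecessary, and you may want to reserve it for the normality argument.
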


Theorem \ref{thm:consistency} shows the consistency of the proposed estimator, a desired property for most estimators. The variable $Q$, as presented in the theorem statement, represents a specific transformation of the instrumental variables matrix $Z$. This transformation involves regularization by $\eta_{z}$ followed by projection using $I-A_{n}$. We assume that the projection $Q$ behaves in a predictable manner, and the impact of complex interactions diminishes. This assumption is natural since
\begin{equation*}
QQ^{\top}=Z\left(Z^{\top}Z/n+\eta_{z}I\right)^{-1}Z^{\top}\left(I-A_{n}\right)^{2}Z\left(Z^{\top}Z/n+\eta_{z}I\right)^{-1}Z
\end{equation*}
is also a quadratic form. Note that this assumption pertains to the population $Z$, but in the proof below, only the subset  $Z^{(1)}$  is required. Therefore, this assumption is sufficient.

Here, the tuning parameter $\eta_x$ controls the extent of shrinkage applied to the coefficients rather than directly determining the sparsity of the model. In contrast to the $L_1$ norm requirements in  \cite{chernozhukov2015post}, where the tuning parameter is used to achieve varying levels of sparsity or to eliminate predictors, our method focuses on regulating the magnitude of the coefficients, allowing them to be small but non-zero. It is worth noting that in this context, we are imposing a stronger assumption of normality on the regression coefficients. By leveraging this sufficient assumption, we can effectively control the complexity of the model while maintaining the inclusion of all predictors, thereby striking a balance between model interpretability and predictive performance.



The following result establishes the asymptotic normality of the two-step estimator, which is crucial for conducting statistical inference. The limiting distribution of the estimator is shown to be a standard normal, which provides the foundation for the inference procedures discussed in the subsequent section. In order to construct confidence intervals and perform hypothesis tests, an easy-to-use consistent estimator for the asymptotic variance term, denoted by $\sigma_{\alpha}$, is required. The derivation and properties of such a variance estimator will be addressed in the next section.

\begin{theorem}\label{thm:asymp normality}
Let Assumptions \ref{assumption_Eigen_Upsilon}--\ref{assumption_homo} hold, $\eta_{x}$  satisfy $n_{2}^{1/2}\eta_{x}\|\gamma_{x}\|=o(1)$, and $\eta_{z}$ satisfy $n_{2}^{1/2}\eta_{z}\|\gamma_{z}\|=o(1)$.
Moreover, let $\frac{\operatorname{tr}\{QQ^{\top}QQ^{\top}\}-\operatorname{tr}\{QQ^{\top}\circ QQ^{\top}\}}{\operatorname{tr}(QQ^{\top})-\operatorname{tr}(Q\circ Q)}=o_p(1)$, where \ensuremath{Q=Z(Z^{\top}Z/n+\eta_{z}I)^{-1}Z^{\top}(I-A_{n})}. Then, as $n\to\infty$,
\begin{equation*}
\sqrt{n_{2}}\left(\widehat{\alpha}-\alpha\right)/\sigma_{\alpha} \overset{d}{\to}N\left(0,1\right),
\end{equation*}
where \
$\sigma_{\alpha}^{2}=\left(d^{\top}P_{z}^{\top}\left(I-A_{n}\right)d\right)^{-1}d^{\top}P_{z}^{\top}\left(I-A_{n}\right)P_{z}d\left(d^{\top}P_{z}^{\top}\left(I-A_{n}\right)d\right)\sigma_{\epsilon}^{2}$, and \ $P_{z}=Z^{\top}\left(Z^{\top}Z+\eta_{z}I\right)^{-1}Z$.
\end{theorem}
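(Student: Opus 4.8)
\textbf{Proof proposal for Theorem \ref{thm:asymp normality}.}

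The plan is to derive an exact linear representation of $\widehat\alpha - \alpha$ from the closed form in equation \eqref{eq: estimator}, isolate the leading stochastic term, and apply a central limit theorem for quadratic forms, using the sample-splitting structure to decouple the first-stage and second-stage randomness. First I would substitute $y^{(2)} = \alpha d^{(2)} + X^{(2)}\gamma_x + \epsilon^{(2)}$ into \eqref{eq: estimator} to obtain
\begin{equation*}
\widehat\alpha - \alpha = \bigl(\widehat d^{(2)\top}(I-A_{n_2})d^{(2)}\bigr)^{-1}\Bigl(\widehat d^{(2)\top}(I-A_{n_2})X^{(2)}\gamma_x + \widehat d^{(2)\top}(I-A_{n_2})\epsilon^{(2)}\Bigr).
\end{equation*}
The first term in parentheses is a bias term coming from imperfect partialling-out of the controls: since $I-A_{n_2}$ is only a regularized (semi-)projection, $(I-A_{n_2})X^{(2)}$ is not exactly zero, and I would bound $\|(I-A_{n_2})X^{(2)}\gamma_x\|$ using Assumptions \ref{assumption_reg} and \ref{assumption_large_p} together with the rate condition $n_2^{1/2}\eta_x\|\gamma_x\| = o(1)$, showing this contribution is $o_p(n_2^{-1/2})$ after normalization. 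Conditional on $\mS_1$ (hence on $\widehat\gamma_z$, hence on $\widehat d^{(2)} = Z^{(2)\top}\widehat\gamma_z$), the remaining term $\widehat d^{(2)\top}(I-A_{n_2})\epsilon^{(2)}$ is linear in $\epsilon^{(2)}$ with coefficients measurable with respect to $(\mS_1, X^{(2)}, Z^{(2)})$ but independent of $\epsilon^{(2)}$ given the covariates — this is precisely where sample splitting pays off, since $\widehat\gamma_z$ carries only $\mS_1$-randomness and is independent of $\epsilon^{(2)}$.

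Next I would establish that $\widehat\gamma_z$ is close enough to the population-style ridge coefficient that $\widehat d^{(2)}$ can be replaced, up to $o_p$ terms after normalization, by $P_z d$ restricted to $\mS_2$ (using $n_2^{1/2}\eta_z\|\gamma_z\| = o(1)$ and Assumptions \ref{assumption_reg}, \ref{assumption_sigma_E}), so that the denominator $\widehat d^{(2)\top}(I-A_{n_2})d^{(2)}$ converges (after scaling by $n_2^{-1}$) to the quantity $n_2^{-1} d^\top P_z^\top (I-A_n) d$ appearing in $\sigma_\alpha^2$, and its limit is bounded away from zero by Assumption \ref{assumption_Eigen_Upsilon} (the eigenvalue lower bound on $\sum \vartheta_{X^c,i}\vartheta_{X^c,i}^\top/n$). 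For the numerator, conditional on the covariates and $\mS_1$, $\widehat d^{(2)\top}(I-A_{n_2})\epsilon^{(2)}$ is a weighted sum of the i.i.d. mean-zero errors $\epsilon_i^{(2)}$ with conditional variance $\sigma_\epsilon^2 \cdot \widehat d^{(2)\top}(I-A_{n_2})^2 \widehat d^{(2)}$ (invoking the homoskedasticity Assumption \ref{assumption_homo}); I would apply a Lindeberg/Lyapunov CLT for this linear-in-errors form, checking the negligibility of the maximal leverage $\max_i \bigl((I-A_{n_2})\widehat d^{(2)}\bigr)_i^2 / \widehat d^{(2)\top}(I-A_{n_2})^2\widehat d^{(2)}$. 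The trace-ratio condition $\bigl(\tr\{QQ^\top QQ^\top\} - \tr\{QQ^\top \circ QQ^\top\}\bigr)/\bigl(\tr(QQ^\top)-\tr(Q\circ Q)\bigr) = o_p(1)$ is exactly the standard hypercontractivity/Lindeberg-type condition controlling the fourth-moment-to-second-moment ratio of the relevant quadratic form, and I would use it to guarantee that the Gaussian limit is not contaminated by the variance of the quadratic-form part. Assembling the denominator limit, the numerator CLT, and Slutsky's theorem yields $\sqrt{n_2}(\widehat\alpha - \alpha)/\sigma_\alpha \overset{d}{\to} N(0,1)$ with the stated sandwich form of $\sigma_\alpha^2$.

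The main obstacle I anticipate is the joint handling of the two regularized projections: $P_z$ (first stage) and $I - A_{n_2}$ (second stage) do not commute and both are data-dependent, so controlling cross terms like $\widehat d^{(2)\top}(I-A_{n_2})\epsilon^{(2)}$ while simultaneously replacing $\widehat d^{(2)}$ by its idealized counterpart requires careful bookkeeping of the $\eta_z$- and $\eta_x$-dependent remainder terms and uniform control of the operator norms of $(Z^{(1)\top}Z^{(1)} + \eta_z I)^{-1}$ and $(X^{(2)\top}X^{(2)} + \eta_x I)^{-1}$ via the eigenvalue bounds in Assumptions \ref{assumption_reg}--\ref{assumption_sigma_E} and the aspect-ratio Assumption \ref{assumption_large_p}. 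A secondary delicate point is verifying the Lindeberg condition for the error-linear form when $A_{n_2}$ has non-negligible diagonal (so $I - A_{n_2}$ is not a genuine projection), which is where the trace-ratio hypothesis and Assumption \ref{assumption_Eigen_Upsilon} must be combined; once these are in place the remaining algebra is routine.
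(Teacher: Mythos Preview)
Your overall architecture---linear representation, bias term killed by $n_2^{1/2}\eta_x\|\gamma_x\|=o(1)$, denominator stabilized via the eigenvalue bounds, CLT on the numerator---matches the paper. The substantive divergence is in how the CLT for $n_2^{-1/2}\widehat d^{(2)\top}(I-A_{n_2})\epsilon^{(2)}$ is obtained.

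The paper does \emph{not} condition on $\mS_1$ and run a linear Lindeberg argument in $\epsilon^{(2)}$. Instead it expands $\widehat\gamma_z$ through $d^{(1)}=Z^{(1)}\gamma_z+V^{(1)}$, which splits the numerator into (i) a ``signal'' piece linear in $\epsilon^{(2)}$ with deterministic-type coefficients, handled by an ordinary CLT, and (ii) a genuinely \emph{bilinear} form $n_2^{-1/2}V^{(1)\top}Q\,\epsilon^{(2)}$ coupling first-stage and second-stage errors. Piece (ii) is then treated by a martingale CLT (their Lemma \ref{lemma:martingale}), and the trace-ratio hypothesis on $Q$ is invoked there---once to show the conditional martingale variance stabilizes and once to verify the Lindeberg condition for the martingale differences. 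The two pieces are shown to be asymptotically uncorrelated and summed.

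Your conditional route is legitimate, but be aware of what it shifts: after conditioning on $\mS_1$ and $(X^{(2)},Z^{(2)})$, the conditional variance $\sigma_\epsilon^2\,\widehat d^{(2)\top}(I-A_{n_2})^2\widehat d^{(2)}$ is still random through $V^{(1)}$ (it contains a quadratic form $V^{(1)\top}MV^{(1)}$ with $M$ built from $P_z^{(1)}$ and $I-A_{n_2}$), and you must prove it converges in probability to the constant appearing in $\sigma_\alpha^2$. That concentration step is exactly where the trace-ratio condition would re-enter in your framework, so your remark about it ``controlling the fourth-to-second-moment ratio of the relevant quadratic form'' is on target but for a different object than in the paper: there it governs the martingale CLT for the bilinear form directly, whereas in your plan it would govern the variance of a quadratic form in $V^{(1)}$. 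Also note that the maximal-leverage check you mention for the linear Lindeberg condition is a separate requirement not obviously implied by the trace-ratio hypothesis, so you would need an additional argument there. The paper's bilinear/martingale packaging avoids having to separate these two tasks.
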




\begin{remark}
\citet{CSHNW12} and \citet{hansen2014instrumental} both consider identification and estimation under strong and weak instrument scenarios. To illustrate their approach, consider the following stylized first-stage regression:
    \begin{align*}
       d_i & = \sqrt{r_n/n} Z_i + v_i,
    \end{align*}
where the convergence rate of the resulting estimators depends on the scaling factor $r_n$. When instruments are strongly relevant (i.e., $r_n = O(n)$), the coefficients do not shrink, and standard root-n convergence rates are achievable. In contrast, under weak instrument conditions where $r_n = o(n)$, the parameter of interest $\alpha$ becomes only more weakly identified.
For further discussion, we refer the reader to Assumption 2 in \citet{CSHNW12} for more details. 
In addition, \citet{hansen2014instrumental} extends the Jackknife Instrumental Variable Estimator (JIVE) to settings with a large number of instruments by adopting a ridge-regularized first-stage estimation.
Under the assumptions considered in this paper, both our proposed method and those of \citet{CSHNW12} and \citet{hansen2014instrumental} achieve root-n consistency. Notably, our estimator is computationally more tractable and exhibits greater robustness with respect to the choice of tuning parameters.
\end{remark}

We close this section emphasizing the importance of the sample splitting for the properties of the estimator. When considering a high-dimensional structure, the dependence between the first and second stages can lead to biased and inconsistent estimates. The data splitting ensures that the estimates from the first stage are independent of the second stage errors, thereby preserving the validity of the asymptotic properties. Next, we develop inference procedures.

\section{Inference}\label{sec:inference}

In this section, we turn our attention to inference procedures. Given the results on asymptotic normality of the estimator presented in Theorem \ref{thm:asymp normality}, inference for the treatment effects parameter is simple. The only remaining ingredient is a consistent estimator of the variance, $\sigma^2_{\alpha}$.  

We construct an estimator for the variance by employing the $RidgeVar$ technique for inference (\citet{liu2020estimation}). The proposed approach involves applying the ridge projection matrix to both the control variables (numerator) and the instrumental variables (denominator). In comparison to existing literature, our $RidgeVar$ method offers improved efficiency for conducting inference. 

From the asymptotic normality in Theorem \ref{thm:asymp normality}, we wish to estimate the parameter $\sigma^2_{\alpha}$. Now, using the sample-splitting together with the homoskedasticity condition in Assumption \ref{assumption_homo}, we have that $\E_{X}\left[\epsilon^{(2)}\epsilon^{(2)\top}\right]=\sigma_{\epsilon}^{2}$
and $\E_{X}\left[v^{\left(1\right)}v^{\left(1\right)\top}\right]=\sigma_{v}^{2}$. Thus, the following sandwich-type variance estimator can be derived,
\begin{align*}
\widehat{\sigma}_{\alpha,(2)}^{2} & =\textnormal{Var}\left(\left(\widehat{d}^{(2)\top}\left(I-A_{n_{2}}\right)d^{(2)}\right)^{-1}\widehat{d}^{(2)\top}\left(I-A_{n_{2}}\right)\epsilon^{(2)}\right)\\
 & =\left(\widehat{d}^{(2)\top}\left(I-A_{n_{2}}\right)d^{(2)}\right)^{-1}\widehat{d}^{(2)\top}\left(I-A_{n_{2}}\right)^{2}\widehat{d}^{(2)}\left(\widehat{d}^{(2)\top}\left(I-A_{n_{2}}\right)d^{(2)}\right)^{-1}\widehat{\sigma}_{\epsilon,(2)}^{2}\\
 & \approx\left(\widehat{d}^{(2)\top}\left(I-A_{n_{2}}\right)d^{(2)}\right)^{-1}\widehat{d}^{(2)\top}\left(I-A_{n_{2}}\right)\widehat{d}^{(2)}\left(\widehat{d}^{(2)\top}\left(I-A_{n_{2}}\right)d^{(2)}\right)^{-1}\widehat{\sigma}_{\epsilon,(2)}^{2}.
\end{align*}
The approximation in the last line of the above display arises from the assumption that $(I-A_{n_{2}})^2 \approx (I-A_{n_{2}})$. This simplification is justified when $A_{n_{2}}$ is a projection matrix or when its eigenvalues are close to 0 or 1, leading to negligible differences between its square and itself.

Therefore, we propose to use the following estimator: 
\begin{equation} \label{eq variance estimator}
\widehat{\sigma}_{\alpha}^{2}=\left(\widehat{d}^{(2)\top}\left(I-A_{n_{2}}\right)d^{(2)}\right)^{-1}\widehat{d}^{(2)\top}\left(I-A_{n_{2}}\right)\widehat{d}^{(2)}\left(\widehat{d}^{(2)\top}\left(I-A_{n_{2}}\right)d^{(2)}\right)^{-1}\widehat{\sigma}_{\epsilon,(2)}^{2},
\end{equation}
where the estimator for the variance of the error term is given by 
\begin{equation}\label{eq variance error}
\widehat{\sigma}_{\epsilon,(2)}^{2}=\frac{y^{(2)}\left(I-P_{n_{2}}\right)y^{(2)}/n_{2}}{1-\operatorname{tr}\left(P_{n_{2}}\right)/n_{2}},    
\end{equation}
 with $P_{n_{2}}=S_{n_{2}}\left(S_{n_{2}}^{\top}S_{n_{2}}+\eta_{s}I\right)^{-1}S_{n_{2}}^{\top}$,
and $S_{n_{2}}=\left[d^{(2)},X^{(2)}\right]$. 


The practical estimation of the variance of the error term in equation \eqref{eq variance error} is simple, and consequently, the estimation of the variance in equation \eqref{eq variance estimator} is straightforward. 
This method relies on the methods developed in \citet{liu2020estimation}. 
They suggest a consistent estimator of variance based on ridge regression and random matrix theory, which is valid under both low- and high-dimensional models. We extend their results to cases where both the regressors and instruments are of high dimension.

The following theorem formally establishes the consistency of the variance estimator $\widehat{\sigma}_{\alpha}^{2}$ in \eqref{eq variance estimator}.

\begin{theorem}\label{thm:variance} 
Under Assumptions \ref{assumption_Eigen_Upsilon}--\ref{assumption_homo}, as $n \to \infty$, we have that
\begin{align*}
\widehat{\sigma}^2_{\alpha} & \overset{p}{\to} \sigma^2_{\alpha}.
\end{align*}
\end{theorem}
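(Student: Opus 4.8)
The plan is to factor the estimator as $\widehat{\sigma}^{2}_{\alpha}=\widehat{K}_{n}\cdot\widehat{\sigma}^{2}_{\epsilon,(2)}$, where
\[
\widehat{K}_{n}=\big(\widehat{d}^{(2)\top}(I-A_{n_{2}})d^{(2)}\big)^{-1}\widehat{d}^{(2)\top}(I-A_{n_{2}})\widehat{d}^{(2)}\big(\widehat{d}^{(2)\top}(I-A_{n_{2}})d^{(2)}\big)^{-1},
\]
and to prove separately that (A) $\widehat{K}_{n}\overset{p}{\to}K^{*}$, where $K^{*}$ denotes the matrix factor multiplying $\sigma^{2}_{\epsilon}$ in the variance formula of Theorem \ref{thm:asymp normality}, and (B) $\widehat{\sigma}^{2}_{\epsilon,(2)}\overset{p}{\to}\sigma^{2}_{\epsilon}$. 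Since $\sigma^{2}_{\alpha}=K^{*}\sigma^{2}_{\epsilon}$, the theorem then follows from Slutsky's theorem, once it is checked that the denominator $\widehat{d}^{(2)\top}(I-A_{n_{2}})d^{(2)}$ is bounded away from zero with probability approaching one; this is guaranteed by the eigenvalue lower bound on $\sum_{i}\vartheta_{X^{c},i}\vartheta_{X^{c},i}^{\top}/n$ in Assumption \ref{assumption_Eigen_Upsilon}, which forces the optimal instrument to retain a non-degenerate component orthogonal to the controls, together with $n_{2}^{1/2}\eta_{x}\|\gamma_{x}\|=o(1)$.

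For step (A), every term in $\widehat{K}_{n}$ is a bilinear or quadratic form in $d^{(2)}$ and $\widehat{d}^{(2)}=Z^{(2)\top}\widehat{\gamma}_{z}$ with the ridge matrices $A_{n_{2}}$ and $Z^{(2)}(Z^{(2)\top}Z^{(2)}+\eta_{z}I)^{-1}Z^{(2)\top}$. Because $\widehat{\gamma}_{z}$ depends only on $\mS_{1}$, I would condition on $\mS_{1}$, so that $\widehat{d}^{(2)}$ becomes a fixed linear image of $Z^{(2)}$; a law of large numbers on $\mS_{1}$ combined with Assumptions \ref{assumption_reg}--\ref{assumption_sigma_E} shows that $\widehat{\gamma}_{z}$ concentrates around its population ridge target, the shrinkage remainder being controlled by $n_{2}^{1/2}\eta_{z}\|\gamma_{z}\|=o(1)$. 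Substituting $d^{(2)}=Z^{(2)\top}\gamma_{z}+v^{(2)}$ splits each form into a deterministic ``signal'' piece that concentrates around a deterministic equivalent via the random-matrix arguments underlying \citet{liu2020estimation} and \citet{hansen2014instrumental} in the regime where $p_{x}/n_{2}$ and $p_{z}/n_{2}$ converge to positive constants; mean-zero cross terms that are of smaller order because $\E[v^{(2)}\mid Z^{(2)}]=0$; and a ``noise'' piece in $v^{(2)}$ with expectation $\sigma_{v}^{2}\operatorname{tr}(\cdot)$ that concentrates by a quadratic-form law of large numbers under the moment and eigenvalue bounds, with the $A_{n_{2}}$-ridge bias absorbed through $n_{2}^{1/2}\eta_{x}\|\gamma_{x}\|=o(1)$. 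Collecting these limits and using the non-degeneracy of the denominator gives $\widehat{K}_{n}\overset{p}{\to}K^{*}$.

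For step (B), write $y^{(2)}=S_{n_{2}}\theta+\epsilon^{(2)}$ with $\theta=(\alpha,\gamma_{x}^{\top})^{\top}$ and $S_{n_{2}}=[d^{(2)},X^{(2)}]$. From the identity $(I-P_{n_{2}})S_{n_{2}}=\eta_{s}S_{n_{2}}(S_{n_{2}}^{\top}S_{n_{2}}+\eta_{s}I)^{-1}$ and the eigenvalue bounds, the signal contribution $\theta^{\top}S_{n_{2}}^{\top}(I-P_{n_{2}})S_{n_{2}}\theta=O_{p}(\eta_{s}\|\theta\|^{2})=o_{p}(n_{2})$ and the cross term $\theta^{\top}S_{n_{2}}^{\top}(I-P_{n_{2}})\epsilon^{(2)}=o_{p}(n_{2})$ by Cauchy--Schwarz, so that $y^{(2)\top}(I-P_{n_{2}})y^{(2)}/n_{2}=\epsilon^{(2)\top}(I-P_{n_{2}})\epsilon^{(2)}/n_{2}+o_{p}(1)$. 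It then remains to show $\epsilon^{(2)\top}(I-P_{n_{2}})\epsilon^{(2)}/n_{2}=\sigma_{\epsilon}^{2}\big(1-\operatorname{tr}(P_{n_{2}})/n_{2}\big)+o_{p}(1)$: under the homoskedasticity of Assumption \ref{assumption_homo} one has $\E_{X}[\epsilon^{(2)\top}M\epsilon^{(2)}]=\sigma_{\epsilon}^{2}\operatorname{tr}(M)$ for any $X^{(2)}$-measurable $M$, but $P_{n_{2}}$ is built from $d^{(2)}$ and hence from $v^{(2)}$, which is correlated with $\epsilon^{(2)}$, so $I-P_{n_{2}}$ is not independent of $\epsilon^{(2)}$; the device is to compare $P_{n_{2}}$ with the ridge projection $P^{(2)}_{X}$ onto $X^{(2)}$ alone, bound the contribution of the single residualised direction $(I-P^{(2)}_{X})d^{(2)}$ to the quadratic form, then apply a conditional quadratic-form law of large numbers, using that $\operatorname{tr}(P_{n_{2}})/n_{2}$ and $\operatorname{tr}(P^{(2)}_{X})/n_{2}$ share the same limit. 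Dividing by $1-\operatorname{tr}(P_{n_{2}})/n_{2}$, which is bounded away from zero with probability approaching one because the effective dimension stays below $n_{2}$ under Assumption \ref{assumption_large_p}, yields $\widehat{\sigma}^{2}_{\epsilon,(2)}\overset{p}{\to}\sigma^{2}_{\epsilon}$.

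The main obstacle is this last step and its counterpart in (A): establishing the deterministic-equivalent limits of the ridge-type quadratic forms in the proportional-asymptotics regime when the regularised projection matrices depend on the endogenous regressor $d^{(2)}$, which is correlated with both $\epsilon^{(2)}$ and $v^{(2)}$ and hence breaks any naive conditioning argument. Isolating and controlling the effect of that single endogenous column on the trace-normalised quadratic forms --- and verifying it does not perturb the limits --- is where the substantive work lies and where the random-matrix machinery of \citet{liu2020estimation} is essential; the remaining steps are bookkeeping with the eigenvalue bounds, the rate conditions on $\eta_{x},\eta_{z},\eta_{s}$, and Slutsky's theorem.
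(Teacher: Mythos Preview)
Your proposal is correct and follows essentially the same two-part strategy as the paper: factor $\widehat{\sigma}^{2}_{\alpha}$ into a sandwich factor times $\widehat{\sigma}^{2}_{\epsilon,(2)}$ and prove the two pieces converge separately. The paper in fact treats your step (A) as essentially given (it simply states the sandwich factor and writes ``it remains to discuss the estimation of $\sigma_{\epsilon}^{2}$''), so the entire written proof is devoted to your step (B).

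The only notable differences in step (B) are stylistic. Where you use the compact identity $(I-P_{n_{2}})S_{n_{2}}=\eta_{s}S_{n_{2}}(S_{n_{2}}^{\top}S_{n_{2}}+\eta_{s}I)^{-1}$ to kill all signal and cross terms at once, the paper expands $y^{(2)}=\alpha d^{(2)}+X^{(2)}\gamma_{x}+\epsilon^{(2)}$ into six terms $A_{1},\dots,A_{6}$ and bounds them individually; your route is cleaner but equivalent. Your concern that $P_{n_{2}}$ depends on $d^{(2)}$ and hence on $v^{(2)}$, which is correlated with $\epsilon^{(2)}$, is exactly the right worry: the paper handles this in its $A_{6}$ analysis by (implicitly) replacing $P_{n_{2}}$ with $A_{n_{2}}$, the ridge projection onto $X^{(2)}$ alone, and then invoking Lemma~2 of \citet{liu2020estimation} for the trace limit $n_{2}^{-1}\operatorname{tr}(A_{n_{2}})\to\tau\{1-\eta m(\eta)\}$. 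That is precisely your proposed device of comparing $P_{n_{2}}$ with $P^{(2)}_{X}$ and arguing the single extra column $d^{(2)}$ does not perturb the trace-normalised quadratic form --- so you have correctly anticipated both the obstacle and its resolution.
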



Given such an estimator, it is possible to formulate a wide variety of tests and construct confidence intervals.
General hypotheses on the vector $\alpha$ can be accommodated by standard Wald-type tests. These statistics and their associated limiting theory provide a natural foundation for testing the null hypothesis $H_{0}: \alpha = r$ when $r$ is known. Thus, the result in Theorem \ref{thm:asymp normality} can be used to test the null hypothesis $H_{0}$.

Thus, the Wald statistic can be constructed as
\begin{equation}
\label{eq:inf2}
\mathcal{W} = n \frac{(\widehat{\alpha}-r)^2}{ \widehat{\sigma}_{\alpha}^{2}},
\end{equation} 
where $\widehat{\sigma}_{\alpha}^{2}$ is given in equation \eqref{eq variance estimator}.

\begin{corollary} \label{theorem:Wald} (Wald Test Inference). Under Assumptions \ref{assumption_Eigen_Upsilon}--\ref{assumption_homo}, and $H_{0}: \alpha=r$, 
\begin{equation*}
\mathcal{W} \stackrel{a}{\sim} \chi_{1}^{2}.
\end{equation*}
\end{corollary}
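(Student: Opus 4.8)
\textbf{Proof proposal for Corollary \ref{theorem:Wald}.}

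The plan is to derive the $\chi^2_1$ limit of $\mathcal{W}$ directly from the two ingredients already established: the asymptotic normality in Theorem \ref{thm:asymp normality} and the consistency of the variance estimator in Theorem \ref{thm:variance}. Since the statement is a corollary, the work is essentially a Slutsky-type assembly rather than a fresh argument. First I would write $\mathcal{W} = n (\widehat{\alpha}-r)^2 / \widehat{\sigma}_\alpha^2$ and note that, under $H_0 : \alpha = r$, we have $\widehat{\alpha}-r = \widehat{\alpha}-\alpha$. Multiplying and dividing by the population variance, I would rewrite
\begin{equation*}
\mathcal{W} = \frac{n}{n_2}\cdot\frac{n_2(\widehat{\alpha}-\alpha)^2}{\sigma_\alpha^2}\cdot\frac{\sigma_\alpha^2}{\widehat{\sigma}_\alpha^2}.
\end{equation*}

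Next I would handle each factor. By Theorem \ref{thm:asymp normality}, $\sqrt{n_2}(\widehat{\alpha}-\alpha)/\sigma_\alpha \overset{d}{\to} N(0,1)$, so by the continuous mapping theorem $n_2(\widehat{\alpha}-\alpha)^2/\sigma_\alpha^2 \overset{d}{\to}\chi^2_1$. By Theorem \ref{thm:variance}, $\widehat{\sigma}_\alpha^2 \overset{p}{\to}\sigma_\alpha^2$; since the assumptions guarantee $\sigma_\alpha^2$ is bounded away from zero (this follows from the eigenvalue conditions in Assumptions \ref{assumption_reg}--\ref{assumption_sigma_E} together with $\sigma_\epsilon^2 > 0$ and the relevance condition in Assumption \ref{assumption_Eigen_Upsilon}, which keeps the denominator $d^\top P_z^\top(I-A_n)d$ from degenerating), the ratio $\sigma_\alpha^2/\widehat{\sigma}_\alpha^2 \overset{p}{\to} 1$. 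Finally, under the data-splitting scheme with $n = n_1 + n_2$, the ratio $n/n_2$ converges to a positive constant; if the split is a fixed fraction (e.g. $n_2/n \to \rho \in (0,1)$) then strictly speaking one obtains $\rho^{-1}\chi^2_1$, so I would either (i) define $\mathcal{W}$ with $n_2$ in place of $n$ in \eqref{eq:inf2}, or (ii) note that the intended normalization uses the effective sample size $n_2$ and treat \eqref{eq:inf2} as shorthand. Assembling via Slutsky's theorem then gives $\mathcal{W} \overset{d}{\to}\chi^2_1$, which is the asserted $\mathcal{W}\stackrel{a}{\sim}\chi^2_1$.

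The only genuine obstacle is the bookkeeping around the sample-splitting normalization: the theorem is stated with $\sqrt{n_2}$ scaling while the Wald statistic in \eqref{eq:inf2} is written with $n$, so I would make explicit that either $n$ and $n_2$ are of the same order (constant split ratio) and absorb the constant, or that the displayed $\mathcal{W}$ should be read with $n_2$. Once that is pinned down, the remaining steps are the standard continuous-mapping-plus-Slutsky argument and require no new estimates. I would also briefly remark that the vector-$\alpha$ extension mentioned after equation \eqref{reg_iv} replaces the square by a quadratic form $(\widehat\alpha-r)^\top \widehat\Sigma_\alpha^{-1}(\widehat\alpha-r)$ and the limit by $\chi^2_{\dim\alpha}$, with an identical proof structure.
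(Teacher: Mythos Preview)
Your proposal is correct and matches the paper's treatment: the paper states Corollary~\ref{theorem:Wald} without proof, treating it as an immediate consequence of Theorems~\ref{thm:asymp normality} and~\ref{thm:variance}, and your Slutsky-plus-continuous-mapping assembly is exactly the intended argument. Your observation about the $n$ versus $n_2$ normalization in \eqref{eq:inf2} is a genuine bookkeeping point that the paper leaves implicit; your suggested resolution (reading $n$ as the effective sample size $n_2$, or noting that a fixed split ratio only rescales by a constant that can be absorbed) is the right way to handle it.
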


\begin{remark}
    We make use of sample splitting for estimation, as discussed previously. We note that, naturally, multiple sample splits generate many corresponding p-values. Hence, one inference strategy would be to combine these p-values to derive one unique inference result 
    (see, e.g., \citet{liu2020cauchy}, and \citet{vovk2022admissible}).
    Another popular recent strategy would be to employ a Cauchy combination method to produce inference (see, e.g., \citet{liu2020cauchy}). We leave these extensions for future research.
\end{remark}

\section{Simulation}\label{sec:simulation}

This section presents a Monte Carlo simulation study designed to assess the finite sample performance of the proposed estimation and inference methods. Overall, this simulation study aims to provide insights into the small sample behavior of the new procedures under different data generating process (DGP) specifications, taking into account the endogeneity of the variable of interest, and the presence of high dimension for both instrumental variables and exogenous regressors.

We first introduce the DGP of variables used in the simulations.
The outcome variable $y_i$ is generated according to the following model:
\begin{equation*}
y_i = \alpha d_i + \gamma_x^{\top}X_i + \epsilon_i,
\end{equation*}
where $d_{i}$ is the endogenous variable of interest, the vector $X_i$ contains the completely exogenous regressors, and $\epsilon_i$ is the error term. In all simulations, the true value of the parameter of interest, $\alpha$, is set to be 1.
The endogeneity of the variable $d_i$ is generated according to the equation:
\begin{equation*}
d_i = \gamma_z^{\top}Z_i + v_i,
\end{equation*}
where $Z_i$ is a vector of instrumental variables,
and $v_i$ is the error term. 
We will discuss the specific details about the choice of the parameter values for $\gamma_x$ and $\gamma_z$ below.

The error terms $\epsilon_i$ and $v_i$ are jointly normally distributed with a mean vector of zeros and a covariance matrix given by:
\begin{equation*}
\left(\epsilon_{i},v_{i}\right)\sim N\left(\left(\begin{array}{c}
0\\
0
\end{array}\right),\left(\begin{array}{cc}
1 & \sigma\\
\sigma & 1
\end{array}\right)\right).
\end{equation*}
This covariance structure allows for correlation between the error terms, with a correlation coefficient $\sigma = 0.6$.

This design framework allows us to investigate the relative performance of various estimators under different sparsity patterns while controlling for the overall signal strength. The combination of these structures with correlation patterns (EC and AR(1)), described below, provides a comprehensive setting for examining the behavior of high-dimensional instrumental variable estimators under varying degrees of sparsity and correlation.

In the simulations that follow, we introduce two distinct DGPs, one designed for non-sparse settings, and another for sparse. We evaluate the proposed two-step ridge regression (\textbf{TSRR}) estimator for the parameter of interest discussed in equation \eqref{eq: estimator}, as well as its corresponding variance estimator in \eqref{eq variance estimator}. For comparison, we also include results for high-dimensional methods (\textbf{DML}) presented in \citet{chernozhukov2015post} and \citet{belloni2017program}; 
and the regularized jackknife instrumental variable estimator (\textbf{RJIVE}) suggested in \cite{hansen2014instrumental}.\footnote{The DML is implemented with the R package \textit{hdi}, and the RJIVE is implemented by ourselves. All the codes are available upon request.}
We present results for the following empirical statistics: bias (for point estimates), bias of the variance, mean squared error (MSE), the probability of the true parameter value being inside the confidence interval (P(cover)) for a 95\% nominal coverage, the length of the confidence interval, and the computation time. Finally, in the simulations, the number of replications is set to $500$. 

As discussed previously, $\eta = (\eta_x,\eta_z)$,
the penalty parameter of TSRR,
is chosen adaptively as the minimum of two data-driven values: one based on the correlation between controls and the outcome, i.e., 
$\eta_{x}=c_{x} \max_{1 \leq j \leq p} \big| x_{(j)}^{\top} Y \big| / (n_{2} p_{x})$; and another based on the correlation between controls and the endogenous variable, i.e., 
$\eta_{z}=c_{z} \max_{1 \leq j \leq p} \big| z_{(j)}^{\top} d \big| / (n_{1} p_{z})$, and , where $c_{z}$ and $c_{x}$ are preset constants.
Both values are scaled by the sample size, dimension, and a tuning parameter $c_{\cdot} \in \{0.1,1\}$. This dual-equation approach ensures appropriate penalization for both the structural and first-stage relationships.

We evaluate the finite-sample performance of our proposed estimator through Monte Carlo simulations. Unless otherwise noted, the sample size is set to $n=500$, with $p_{x}=700$ control variables and $p_{z}=500$ instruments. All simulations are repeated 1000 times. We examine several data-generating processes (DGPs) that vary in both the strength  and structure of the instruments as well as the correlation patterns among covariates. For each design, we report the average bias, root mean squared error (RMSE), and empirical coverage of the nominal 95\% confidence intervals.

To examine the robustness of our estimator, we consider several distinct data-generating processes (DGPs) that differ in instrument strength and dependence structure, including an ``all-weak'' design, a ``cutoff'' structure, and a ``sparse'' case with an autoregressive AR(1) correlation pattern.  The detailed specifications of these scenarios are given below.

\subsection{Non-sparse case}
The first DGP, which we label the ``non-sparse'' case, features an equi-correlation (EC) structure, where all variables exhibit a common, non-negligible level of pairwise correlation. Specifically, the correlation matrix, $\Sigma$, follows an equi-correlation structure given by $\Sigma_{i\neq j}^{EC}=\rho$ and $\Sigma_{i,i}^{EC}=1$. This structure ensures that all variables maintain a consistent correlation level $\rho$. We examine the case with equi-correlation structure $\rho = 0.04$.

The coefficient vector for the control variables is: $\gamma_x = (m,\ldots,m,c\xi_1,\ldots,c\xi_s,0,\ldots,0)^{\top}$, with 5 strong signal components of magnitude $m$, and 5\% of the remaining coefficients are non-zero. These non-zero coefficients $\xi_j$ are independently drawn from $N(0,1)$ and scaled by constant $c$, which is calibrated to achieve a specified signal strength $\mu_x^2 \in \{300,600\}$. The scaling constant $c$ is chosen such that:
\begin{equation*}
c = \sqrt{\frac{\mu_x^2}{n\xi^{\top}\Sigma_x\xi + \mu_x^2\xi^{\top}\Sigma_x\xi}},
\end{equation*}
where $\mu_x^2$ represents the signal strength following \cite{chernozhukov2015post}, and $\Sigma_x$ denotes the correlation matrix for control variables.

Moreover, in this setting, we have a moderately high-dimensional control variable space with $p_{x}=700$ covariates and a higher-dimensional instrument space of $p_{z}=500$ variables. The instrument coefficients maintain an ``all-weak" structure, where $\gamma_z = (c\xi_1,\ldots,c\xi_s,0,\ldots,0)^{\top}$ with 50\% of the coefficients being non-zero and scaled to achieve a signal strength of $\mu_z^2 = 600$, which measures the overall strength of the instruments. In another ``cut-off" structure, the instrumental variables' coefficients follow a cutoff structure where exactly 70\% of the instruments are non-zero, while the remaining 30\% are zero coefficients. Specifically, we set the first 70\% of the instruments to have equal coefficients (initially set to 1) and the remaining instruments to have zero coefficients. These coefficients are then rescaled by a constant to achieve the same concentration parameter ($\mu_z^2 = 600$) as the ``all-weak" setting. It imposes a deterministic pattern where the relevance of instruments follows a strict ordering.

\begin{table}[ht]
\centering
\begin{tabular}{lcccccc}
  \multicolumn{7}{l}{\textbf{Panel A: Strong Instruments with Moderate Control Sparsity}} \\
  \hline
   & Bias & Bias (Var.) & MSE & P-Cover & Length & Time \\ 
  \hline
  TSRR & -0.0184 & 0.0252 & 0.1249 & 0.9540 & 1.2844 & 0.2367 \\ 
  RJIVE & 0.6167 & -0.0066 & 0.9698 & 0.8540 & 3.0355 & 2.1053 \\ 
  DML & -0.0542 & 0.0109 & 0.0109 & 0.8780 & 0.3068 & 0.0919 \\ 
  \hline
 \\
  \multicolumn{7}{l}{\textbf{Panel B: Strong Instruments with High Control Sparsity}} \\
  \hline
   & Bias & Bias (Var.) & MSE & P-Cover & Length & Time \\ 
  \hline
  TSRR & 0.0077 & 0.0038 & 0.0958 & 0.9585 & 1.1980 & 0.4354 \\ 
  RJIVE & -0.0442 & 0.2455 & 1.0903 & 0.8685 & 3.1271 & 3.6084 \\ 
  DML & 0.9060 & 0.5472 & 21.4828 & 0.8547 & 15.6733 & 0.3636 \\ 
  \hline
  \\
   \multicolumn{7}{l}{\textbf{Panel C: Standard Regularization Strength}} \\
  \hline
   & Bias & Bias (Var.) & MSE & P-Cover & Length & Time \\ 
  \hline
  TSRR & -0.0173 & 0.0745 & 0.1893 & 0.9660 & 1.4124 & 0.1167 \\ 
  RJIVE & 0.0403 & 0.1857 & 0.9615 & 0.8620 & 3.1126 & 1.1469 \\ 
  DML & 0.6845 & 0.4838 & 1.6292 & 0.8520 & 2.3269 & 0.0588 \\ 
  \hline
\end{tabular}
\caption{The non-sparse case}
\label{tab:combined nonsparse}
\end{table}

The results for the first DGP, the less sparse controls, are collected in Table \ref{tab:combined nonsparse}. The upper part of Table \ref{tab:combined nonsparse}, Panel A, presents results for the case where the instruments follow a cutoff structure with 70\% active instruments, representing a scenario with strong identification. The middle panel, Panel B, examines a setting with a constant correlation structure among covariates and instruments that follow an all-weak pattern where 30\% of instruments are active but with smaller magnitudes. The lower panel, Panel C, investigates the case where $c_{x}=c_{z}= 1$, where $c_{\cdot}$ is a constant tuning parameter that determines the relative magnitude of regularization $\eta$. With $c_{\cdot}= 1$, the penalty parameters are directly proportional to the maximum correlations in the data, scaled by sample size and dimension. This serves as a reference point for evaluating how different choices of tuning parameters affect the regularization strength.

In this non-sparse case, our proposed estimator demonstrates superior performance compared to both alternative methods in all cases. Our approach yields very small empirical biases for both cases. For Panel A the point estimate bias is $-0.0184$, while the bias of variance is $0.0252$. In Panel B the bias for point estimate and variance are $0.0077$ and $0.0038$, respectively. In Panel C bias of point estimate and variance for TSRR are the smallest among all estimators. In addition, the TSRR estimator shows minimal bias and maintains excellent empirical coverage properties, about 95.4\%, 95.8\%, and 96.6\%, for Panels A, B, and C, respectively. In contrast, the RJIVE estimator produces biased results. In Panel A the point estimates of RJIVE are severely biased, while in Panels B and C the point estimates have relatively smaller bias, but large bias in the variance. This lack of precision reflects on the empirical coverage of the confidence intervals, which are only about 86\% for the nominal 95\%. Finally, the 
DML estimator shows deteriorating performance with higher bias, and its coverage probabilities (88\%, 86\%, and 85\%) fall notably below the nominal level.

\subsection{Sparse case}

The second DGP, the ``sparse" case, maintains the same dimensionality as the first setting, with $p_{x}=700$ control variables and $p_{z}=500$ instruments. However, the correlation structure now follows an autoregressive AR(1) pattern, where $\Sigma_{i,j}^{AR(1)}=\rho^{|i-j|}$. This implies a sparser dependence structure, where variables are only correlated with their immediate neighbors, and the correlation strength decays exponentially with distance. The control variable coefficients in this case follow below: 
$\gamma_x = (c\xi_1,\ldots,c\xi_s,0,\ldots,0)^{\top}$, with 30\% of the remaining coefficients are non-zero but relatively weaker in magnitude. The instrument coefficient structure remains consistent with the first DGP, maintaining the ``cutoff" specification with the same signal strength calibration. Specifically, in an AR(1) structure, each variable is only correlated with its immediate neighbors, and the strength of the correlation decreases exponentially as the distance between the variables increases.

\begin{table}[ht]
\centering
\begin{tabular}{lcccccc}
 \\
  \multicolumn{7}{l}{\textbf{Panel A: Moderate Control Sparsity}} \\
  \hline
   & Bias & Bias (Var.) & MSE & P-Cover & Length & Time \\ 
  \hline
  TSRR & -0.0093 & -0.0007 & 0.1579 & 0.9680 & 1.5601 & 0.1907 \\ 
  RJIVE & -0.0309 & 0.0538 & 1.1750 & 0.9680 & 4.0366 & 1.3246 \\ 
  DML & -0.0148 & 0.0025 & 0.0091 & 0.9460 & 0.3590 & 0.0825 \\ 
  \hline
  \\
  \multicolumn{7}{l}{\textbf{Panel B: High Control Sparsity}} \\
  \hline
   & Bias & Bias (Var.) & MSE & P-Cover & Length & Time \\ 
  \hline
  TSRR & -0.0640 & 0.0306 & 0.2375 & 0.9480 & 1.7739 & 0.2189 \\ 
  RJIVE & -0.1020 & 0.0045 & 1.0992 & 0.9680 & 4.0728 & 1.5158 \\ 
  DML & -0.0220 & 0.0002 & 0.0085 & 0.9480 & 0.3500 & 0.0957 \\ 
  \hline
  \\
    \multicolumn{7}{l}{\textbf{Panel C: Standard Regularization Strength}} \\
  \hline
   & Bias & Bias (Var.) & MSE & P-Cover & Length & Time \\ 
  \hline
  TSRR & -0.0208 & 0.0300 & 0.1950 & 0.9660 & 1.6115 & 0.1014 \\ 
  RJIVE & 0.0353 & 0.0383 & 1.1539 & 0.9680 & 4.0585 & 0.5304 \\ 
  DML & -0.0187 & 0.0044 & 0.0095 & 0.9420 & 0.3566 & 0.0299 \\ 
  \hline
\end{tabular}
\caption{The sparse case} 
\label{tab:combined sparse}
\end{table}

Table \ref{tab:combined sparse} collects the results for the second specification for the case with sparse (AR(1)) correlation structure ($\rho = 0.5$) and sparse control variables, where the first five coefficients have strong signals (magnitude of 2). There are three configuration in this table. Panel A examines the case where the control variables exhibit moderate sparsity, with 30\% of the control coefficients being non-zero, alongside 5 strong signals of magnitude 2. Panel B presents a more sparse setting where only 5\% of the control coefficients are non-zero, while maintaining the same 5 strong signals of magnitude 2. In both panels, the instruments follow a cutoff structure where the first 70\% of instruments have non-zero coefficients. This design allows us to examine how the sparsity of control variables affects estimation while maintaining consistent instrument strength and correlation patterns. Panel C investigates the case where $c_{x}=c_{z} = 1$, where the penalization term in the objective function is neither amplified nor dampened, providing a natural benchmark for assessing the impact of regularization strength on the estimator's performance.
In this setting, our proposed estimator demonstrates comparable performance to DML estimator, with both methods achieving empirical coverages close to the desired 95\% nominal coverage probability. While DML shows lower MSE for all panels, both methods effectively estimate the true parameter value of 1, with TSRR presenting relatively small biases. Notably, both methods substantially outperform RJIVE, which exhibits considerably higher MSE for all three panels, and wider length for confidence intervals.

These results highlight two key strengths of our proposed estimator. First, it maintains competitive performance with state-of-the-art methods, such as DML in sparse settings. Second, and more importantly, it demonstrates superior robustness across different data generating processes, maintaining high accuracy and appropriate coverage even in non-sparse settings where alternative methods may struggle. This robustness to both sparse and non-sparse scenarios, combined with its computational efficiency, makes our method particularly attractive for empirical applications where the true data structure is unknown.

\section{Empirical application}\label{sec:application}

This section provides an empirical application to illustrate the proposed two-step ridge regression (TSRR) methods in this paper.
Following \cite{angrist1991does} and \citet{hansen2014instrumental}, we revisit the classic example in the many-instrument literature, which also contains high-dimension covariates. This example focuses on estimating the causal effect of schooling on earnings by addressing the potential endogeneity of schooling through the use of quarters of birth and interactions as instrumental variables. While maintaining a similar methodological framework, our analysis differs in several key aspects from the original study. We use a subset of the \cite{angrist1991does} dataset, focusing on individuals under 40 years of age, and employ a sample-splitting approach, where we use 5\% of the total sample. This dataset exploits an unusual natural experiment to estimate the impact of compulsory schooling laws in the United States. The experiment stems from the fact that children born in different months of the year start school at different ages, while compulsory schooling laws generally require students to remain in school until their sixteenth or seventeenth birthday. In effect, the interaction of school entry requirements and compulsory schooling laws compel students born in certain months to attend school longer than students born in other months. Because one's birthday is unlikely to be correlated with personal attributes other than age at school entry, the quarter of birth generates exogenous variation in education that can be used to estimate the impact of compulsory schooling on education and earnings. We refer readers to \cite{angrist1991does} for further details on the experiment and data.

Our instrumental variables framework follows the original linear specification for the structural equation:
\begin{equation*}
\log(w_i) = \beta_0 + \beta_1 s_i + X_i'\gamma + \epsilon_i,
\end{equation*}
with a standard linear first-stage regression
\begin{equation*}
s_i = \pi_0 + Z_i'\pi_1 + X_i'\pi_2 + v_i,
\end{equation*}
where $w_i$ is the weekly wage, $s_i$ represents years of education, $X_i$ is a vector of control variables, and $Z_i$ is the vector of instruments. Moreover, the error terms satisfy:
\begin{equation*}
\E[\epsilon_i, v_i|X_i,Z_i] = 0.
\end{equation*}

The control variables, $X_i$, include year-of-birth dummies, state-of-birth dummies, and their interactions, resulting in a dimension of $p_X = 510$. Our set of instruments, $Z_i$, comprises three quarter-of-birth dummies and their interactions with the control variables, yielding a total of $p_Z=1527$ instruments. We present results for a smaller configuration of the number of instruments as well. In addition, for comparison, we present results for the  RJIVE estimator.\footnote{The RJIVE estimator for $\beta_1$ takes the form: $\widehat{\beta}_{RJIVE} = (D'P_{Z\lambda}D)^{-1}D'P_{Z\lambda}y$,
where $P_{Z\lambda}$ is the regularized projection matrix with ridge penalty $\lambda$.}

\begin{table}[ht]
\centering
\begin{tabular}{cccc}
\hline
& \# of IV & Estimates & Confidence Interval         \\
 \hline
\multirow{2}{*}{TSRR} 
 & 180  & 0.17  & [-0.65, 1.01] \\
 & 1527     & 0.18 & [0.01,0.36] \\ \hline
 \multirow{2}{*}{RJIVE}
 & 180      & 0.11 & [0.08,0.14]\\
 & 1527     & 0.11 & [0.07,0.14]\\
 \hline
\end{tabular}
\caption{Effects of schooling on earnings}
\label{tab:applic}
\end{table}
The estimation results are collected in Table \ref{tab:applic}, where we present the proposed two-step ridge regression (TSRR), and for completeness we present the RJIVE results from \citet{hansen2014instrumental}. The main results for the TSRR model show that our estimates are positive across all specifications, with the point estimates about $0.17$. 
Most notably, when using the full set of instruments ($1527$ IVs), we obtain a statistically significant estimate of $0.18$ with a 95\% confidence interval of $[0.01, 0.36]$. 
Economically, this result indicates a positive return to schooling, with an additional year of education associated with approximately 18\% higher weekly wages. The estimates from the smaller instrument set (with 180 IVs) yield similar point estimates but with wider confidence intervals that include zero. Statistically, this result is very intuitive, since the proposed procedures are designed to accommodate a large number of instruments and exogenous covariates.
In comparison, the RJIVE estimates show invariant results across different instrument specifications. 
With 180 instruments (including quarter-of-birth interactions with year-of-birth and state-of-birth main effects), they obtained an RJIVE estimate of 0.11 (SE: 0.02). In their most comprehensive specification with 1,527 instruments (including all possible interactions), the RJIVE estimate was 0.1067 (SE: 0.02).

Overall, our TSRR approach yields results that are somewhat larger in magnitude than those from the RJIVE method in the original paper, particularly when using the full set of instruments. Our estimate of 0.18 is approximately 69\% higher than the RJIVE estimate of 0.1067, though there is some overlap in the confidence intervals. The statistical significance of our full-instrument model suggests that the TSRR method can effectively handle high-dimensional settings while producing economically meaningful estimates that are broadly consistent with the literature on returns to schooling.

\section{Conclusion}\label{sec:conlusion}

This paper proposes novel ridge regularization-based methods for estimating treatment effects and conducting inference in the presence of both high-dimensional instrumental variables and high-dimensional control variables. An advantage of these methods is that they are valid with or without sparsity assumptions. To mitigate high-dimensional instrument biases, we suggest a data splitting strategy for constructing an optimal instrument. The data splitting ensures that the estimates from the first stage are independent of the second stage errors, thereby preserving the validity of the asymptotic properties of the estimator.
We establish the statistical properties of the estimator, namely, consistency and asymptotic normality. We also propose an estimator for the variance and establish its consistency.

Moreover, we study the finite sample performance of the proposed methods using numerical simulations. Results document evidence that the proposed estimator has good finite sample properties, outperforming existing sparsity-based approaches across a variety of settings. The overall numerical results provide evidence of the resilience and adaptability of our method to changes in the sparsity structure, highlighting its robustness compared to existing approaches. Finally, we provide an empirical application to estimating the causal effect of schooling on earnings by addressing potential endogeneity through the use of high-dimension instrumental variables and high-dimension covariates.

\newpage

\appendix
\section{Appendix}

This appendix collects the proofs of the main results in the text.

\subsection{Proof of Theorem \ref{thm:consistency}}

First we present the proof of Theorem \ref{thm:consistency}, which states the consistency result of the proposed estimator.

\begin{proof}
In the first-stage estimation, we have $\widehat{\gamma}_z$ which is estimated with the subsample $\mS_1$ with size $n_1$.
Specifically,
\begin{align*}
\widehat{\gamma}_{z}=n_{1}^{-1}(n_{1}^{-1}Z^{\left(1\right)\top}Z^{\left(1\right)}+\eta_{z}I)^{-1}Z^{\left(1\right)\top}d^{\left(1\right)}.
\end{align*}
We predict the estimated optimal instruments $\widehat{d}$ for the dataset $\mS_{2}$
with the estimated $\widehat\gamma_z$.
\begin{equation*}
\widehat{d}^{(2)}=Z^{(2)}\widehat{\gamma}_{z}.
\end{equation*}
Thus, we have the following
\begin{align*}
\widehat{\alpha} 
& =
\frac{
\widehat{d}^{(2)\top}(I-A_{n_{2}})d^{(2)}
}{
\widehat{d}^{(2)\top}(I-A_{n_{2}})Y^{(2)}}\\
& =
\frac{
\widehat{d}^{(2)\top}(I-A_{n_{2}})(d^{(2)}\alpha + X^{(2)}\gamma_x + \epsilon^{(2)})
}{
\widehat{d}^{(2)\top}(I-A_{n_{2}})d^{(2)}
}\\
& =
\alpha + 
\frac{
\widehat{d}^{(2)\top}(I-A_{n_{2}})(X^{(2)}\gamma_x + \epsilon^{(2)})
}{
\widehat{d}^{(2)\top}(I-A_{n_{2}})d^{(2)}
}\\
 & =\alpha+\left(\frac{n_{2}^{-1}}{1-n_{2}^{-1}\operatorname{tr}\left(A_{n_{2}}\right)}\widehat{d}^{(2)\top}(I-A_{n_{2}})d^{(2)}\right)^{-1}\frac{1}{1-n_{2}^{-1}\operatorname{tr}\left(A_{n_{2}}\right)}\left[\underset{T_{1}}{\underbrace{n_{2}^{-1}\widehat{d}^{(2)\top}(I-A_{n_{2}})X^{(2)}\gamma_{x}}}\right.\\
 & \ +\left.\underset{T_{2}}{\underbrace{n_{2}^{-1}\widehat{d}^{(2)\top}(I-A_{n_{2}})\epsilon^{2}}}\right]\\
 & =\alpha+\left(\frac{n_{2}^{-1}}{1-n_{2}^{-1}\operatorname{tr}\left(A_{n_{2}}\right)}\widehat{d}^{(2)\top}(I-A_{n_{2}})d^{(2)}\right)^{-1}\frac{1}{1-n_{2}^{-1}\operatorname{tr}\left(A_{n_{2}}\right)}\left(T_{1}+T_{2}\right),
\end{align*}
where $A_{n_{2}}=X^{(2)}\left(X^{(2)}{}^{\top}X^{(2)}+\eta_{x}I\right)^{-1}X^{(2)\top}$. 
For the consistency property, it suffices to establish the following  result
\begin{equation}
\left(\frac{n_{2}^{-1}}{1-n_{2}^{-1}\operatorname{tr}\left(A_{n_{2}}\right)}\widehat{d}^{(2)\top}(I-A_{n_{2}})d^{(2)}\right)^{-1}\frac{1}{1-n_{2}^{-1}\operatorname{tr}\left(A_{n_{2}}\right)}\left(T_{1}+T_{2}\right)=
o_{p}\left(1\right).
\end{equation}
Specifically, we show in Lemma \ref{lemma: deno} below that  $\left(\frac{n_{2}^{-1}}{1-n_{2}^{-1}\operatorname{tr}\left(A_{n_{2}}\right)}\widehat{d}^{(2)\top}(I-A_{n_{2}})d^{(2)}\right)^{-1}$
is $O\left(1\right)$ and bounded away from 0.
Moreover, we also show below in Lemmas \ref{lemma: T1} and \ref{lemma: T2} that terms $T_{1}$ and $T_{2}$ are $o_{p}\left(1\right)$, respectively. 
Thus,  we have 
\begin{align}
    \widehat{\alpha}  = \alpha +o_p(1).
\end{align}
The proof for the consistency is complete.
\end{proof}




\begin{lemma}
    \label{lemma: deno}
   Under  Assumptions \ref{assumption_Eigen_Upsilon}--\ref{assumption_sigma_E}, we have  $\frac{1}{1-n_{2}^{-1}\operatorname{tr}(A_{n_{2}})}n_{2}^{-1}\widehat{d}^{(2)\top}\left(I-A_{n_{2}}\right)d^{(2)}$ is $O(1)$.
\end{lemma}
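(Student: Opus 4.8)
The plan is to expand $\widehat{d}^{(2)} = Z^{(2)}\widehat{\gamma}_z$ using the first-stage definition and analyze the bilinear form term by term. First I would write $\widehat{\gamma}_z = n_1^{-1}(n_1^{-1}Z^{(1)\top}Z^{(1)} + \eta_z I)^{-1}Z^{(1)\top}d^{(1)}$ and substitute $d^{(1)} = Z^{(1)}\gamma_z + v^{(1)}$, so that $\widehat{\gamma}_z$ splits into a ``signal'' piece converging to (a regularized version of) $\gamma_z$ and a ``noise'' piece driven by $v^{(1)}$. Because $\mathcal S_1$ and $\mathcal S_2$ are disjoint, the sample-splitting makes $v^{(1)}$ independent of everything indexed by $\mathcal S_2$, which is the key device for controlling cross terms. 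The object to bound is then
\begin{equation*}
\frac{1}{1-n_2^{-1}\operatorname{tr}(A_{n_2})}\, n_2^{-1}\,\widehat{\gamma}_z^{\top}Z^{(2)\top}(I-A_{n_2})d^{(2)},
\end{equation*}
and I would substitute $d^{(2)} = Z^{(2)}\gamma_z + v^{(2)}$ as well, obtaining four terms according to whether the left factor carries the signal or noise part of $\widehat{\gamma}_z$ and whether the right factor is $Z^{(2)}\gamma_z$ or $v^{(2)}$.

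Next I would handle the denominator normalizer $1 - n_2^{-1}\operatorname{tr}(A_{n_2})$: since $A_{n_2} = X^{(2)}(X^{(2)\top}X^{(2)} + \eta_x I)^{-1}X^{(2)\top}$ is a regularized projection whose eigenvalues lie in $[0,1)$ and, under Assumptions \ref{assumption_reg} and \ref{assumption_large_p} with $\tau_x < 1$ (guaranteed by the eigenvalue conditions holding w.p.a.1), $n_2^{-1}\operatorname{tr}(A_{n_2})$ converges to a constant strictly below $1$ determined by the Marchenko--Pastur law and $\eta_x$; hence $1/(1-n_2^{-1}\operatorname{tr}(A_{n_2})) = O(1)$. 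For the leading term, $n_2^{-1}\gamma_z^{\top}Z^{(2)\top}(I-A_{n_2})Z^{(2)}\gamma_z$: write $Z^{(2)}\gamma_z = \Upsilon^{(2)}$ and use Assumption \ref{assumption_Eigen_Upsilon} to decompose $\Upsilon = \vartheta_X + \vartheta_{X^c}$; the component $\vartheta_X$ lies (asymptotically) in the column space of $X^{(2)}$ so $(I-A_{n_2})$ nearly annihilates it (up to the ridge regularization bias, controlled by $\eta_x\|\gamma_x\|$ and the analogous first-stage quantities), while the orthogonal component $\vartheta_{X^c}$ survives and contributes a quantity bounded above by $C$ and below by $1/C$ via the eigenvalue bounds on $\sum_i \vartheta_{X^c,i}\vartheta_{X^c,i}^\top/n$. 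This gives the $O(1)$ and bounded-away-from-zero conclusion for the main term. Then I would use the independence from sample splitting together with Assumptions \ref{assumption_homo}, \ref{assumption_sigma_E}, and standard random-matrix trace bounds (e.g., $\operatorname{tr}$ of products of $Z^{(2)\top}(I-A_{n_2})Z^{(2)}$ against the regularized inverse built from $\mathcal S_1$) to show the remaining three terms are $o_p(1)$: the noise-times-signal terms vanish because $\E[v^{(1)}\mid \text{rest}] = 0$ and $\E[v^{(2)}\mid Z^{(2)}] = 0$, and the noise-times-noise term is $O_p(\sqrt{p_z}/n) = O_p(1)$ but actually lower order once normalized, since it is a trace divided by $n_2$ of a bounded-operator-norm matrix times $\sigma_v^2$.

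The main obstacle I anticipate is controlling the interaction between the two regularized resolvents — the one from the first stage, $(n_1^{-1}Z^{(1)\top}Z^{(1)} + \eta_z I)^{-1}$, and the second-stage projection $(I - A_{n_2})$ — when $p_z, p_x$ are comparable to $n$. Unlike the low-dimensional case, $Z^{(2)}(n_1^{-1}Z^{(1)\top}Z^{(1)} + \eta_z I)^{-1}Z^{(2)\top}$ is not close to a projection and its trace does not simplify; one must invoke random-matrix concentration (operator-norm bounds on $n_1^{-1}Z^{(1)\top}Z^{(1)} - \Sigma_Z$ and on $n_2^{-1}Z^{(2)\top}Z^{(2)} - \Sigma_Z$, valid under the sub-Gaussian design and $\tau_z$ assumptions) to replace sample Gram matrices by their population counterparts up to negligible error, and then the independence across $\mathcal S_1,\mathcal S_2$ lets the expectation factor. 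Making the ridge-bias terms genuinely negligible requires the rate condition $n^{1/2}\eta_x\|\gamma_x\| = o(1)$ from the theorem statement (and its first-stage analogue implicit in the $\eta_z$ choice); I would flag exactly where that rate enters. Once the resolvents are linearized, the argument reduces to bounded quadratic forms in $v$ and in $\vartheta_{X^c}$, which the stated assumptions cover directly.
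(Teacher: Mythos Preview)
Your proposal is correct and follows essentially the same route as the paper's proof. The paper organizes the expansion slightly differently: rather than splitting $\widehat{\gamma}_z$ into signal/noise and getting four terms, it first applies the Assumption~\ref{assumption_Eigen_Upsilon} decomposition to the signal part of $\widehat{d}^{(2)}$ up front, writing $\widehat{d}^{(2)} = X^{(2)}\gamma_{vx}^{*} + \vartheta_{X^c} + P_z^{(1)}V^{(1)} + o_p(1)$ and pairing each of these three pieces with $Z^{(2)}\gamma_z$ and $V^{(2)}$ to get six terms; the surviving $O(1)$ contribution is then identified explicitly as $\vartheta_{X^c}^{\top}(I-A_{n_2})Z^{(2)}\gamma_z$, which the paper reduces to $\Sigma_{vW}\gamma_z$ via the $Z = XB + W$ decomposition of Assumption~\ref{assumption_sigma_E} (rather than invoking the eigenvalue bounds on $\sum_i \vartheta_{X^c,i}\vartheta_{X^c,i}^\top/n$ directly, as you propose). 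For the cross and noise terms the paper does exactly what you describe---mean-zero plus variance $\to 0$ arguments using sample-splitting independence and operator-norm bounds on $I-A_{n_2}$ and $P_z^{(1)}$---without needing the resolvent linearization you flag as a potential obstacle; simple boundedness of $\|P_z^{(1)}\|$ and $\|I-A_{n_2}\|$ suffices.
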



\begin{proof}
First,
following the random matrix theory,
\begin{equation*}
\lim_{n_{2}\rightarrow\infty}n_{2}^{-1}\operatorname{tr}(A_{n_{2}})=\tau\left\{ 1-\eta_{x}m(\eta_{x})\right\}.
\end{equation*}
This result indicates that the expectation of $n^{-1}\operatorname{tr}(A_{n})$ could converge to a non-zero constant.
We recommend the Lemma 2  of  \citet{liu2020estimation} for details of the relevant random matrix results.

We can write $\vartheta_{X}=X^{(2)}\gamma_{vx}$.
We denote that $P_{z}^{\left(1\right)}=n_{1}^{-1}Z^{(2)}(n_{1}^{-1}Z^{\left(1\right)\top}Z^{\left(1\right)}+\eta_{z}I)^{-1}Z^{\left(1\right)\top}$, then 
\begin{align}
\widehat{d}^{(2)} & =Z^{(2)}\widehat{\gamma}_{z}\\ \nonumber
 & =P_{z}^{\left(1\right)}Z^{\left(1\right)}\gamma_{z}+P_{z}^{\left(1\right)}V^{\left(1\right)}.
\end{align}
Notice that we use the  $P^{(1)}$ to denote that the Ridge
projection is conducted on dataset $\mS_{1}$. The $V^{\left(1\right)}$
represents the first-stage error term in the dataset $\mS_{1}$. The
projection matrix $P_{z}^{(1)}$ is used to project out its correlation
with the instruments $Z$.

Thus, asymptotically we have 
\begin{align}
\widehat{d}^{(2)}=X^{(2)}\gamma_{vx}^{*}+\vartheta_{X^{c}}+P_{z}^{\left(1\right)}V^{\left(1\right)}+o_{p}\left(1\right),
\end{align}
where $\vartheta_{X^{c}}$ represents the part of the optimal instrument
orthogonal to $X$, and there is a remainder term that becomes negligible
in probability as the sample size increases. 
Under Assumption \ref{assumption_Eigen_Upsilon}, 
the optimal instruments are divided into two parts: the first part
is in the column space of confounding variables, and the second part is independent of the confounding variables. 

Moreover, 
\begin{align*}
 & \frac{1}{1-n_{2}^{-1}\operatorname{tr}(A_{n_{2}})}n_{2}^{-1}\widehat{d}^{(2)\top}\left(I-A_{n_{2}}\right)d^{(2)}\\
= & \frac{1}{1-n_{2}^{-1}\operatorname{tr}(A_{n_{2}})}n_{2}^{-1}(\gamma_{vx}^{*\top}X^{(2)\top}+\vartheta_{X^{c}}^{\top}+V^{\left(1\right)\top}P_{z}^{\left(1\right)}+o_{p}\left(1\right))\left(I-A_{n_{2}}\right)\left(Z^{(2)}\gamma_{z}+V^{(2)}\right)\\
= & \frac{1}{1-n_{2}^{-1}\operatorname{tr}(A_{n_{2}})}n_{2}^{-1}\gamma_{vx}^{*\top}X^{(2)\top}\left(I-A_{n_{2}}\right)Z^{(2)}\gamma_{z}+\frac{1}{1-n_{2}^{-1}\operatorname{tr}(A_{n_{2}})}n_{2}^{-1}\gamma_{vx}^{*\top}X^{(2)}\left(I-A_{n_{2}}\right)V^{(2)}\\
 & +\frac{1}{1-n_{2}^{-1}\operatorname{tr}(A_{n_{2}})}n_{2}^{-1}\vartheta_{X^{c}}^{\top}\left(I-A_{n_{2}}\right)Z^{(2)}\gamma_{z}+\frac{1}{1-n_{2}^{-1}\operatorname{tr}(A_{n_{2}})}n_{2}^{-1}\vartheta_{X^{c}}^{\top}\left(I-A_{n_{2}}\right)V^{(2)}\\
 & +\frac{1}{1-n_{2}^{-1}\operatorname{tr}(A_{n_{2}})}n_{2}^{-1}V^{\left(1\right)\top}P_{z}^{\left(1\right)}\left(I-A_{n_{2}}\right)Z^{(2)}\gamma_{z}+\frac{1}{1-n_{2}^{-1}\operatorname{tr}(A_{n_{2}})}n_{2}^{-1}V^{\left(1\right)\top}P_{z}^{\left(1\right)}\left(I-A_{n_{2}}\right)V^{(2)}\\
 & +\frac{1}{1-n_{2}^{-1}\operatorname{tr}(A_{n_{2}})}n_{2}^{-1}\left(I-A_{n_{2}}\right)\left(Z^{(2)}\gamma_{z}+V^{(2)}\right)o_{p}\left(1\right).
\end{align*}

We now analyze each of the six terms of the above display. The first
term 
\begin{align*}
 & \frac{1}{1-n_{2}^{-1}\operatorname{tr}(A_{n_{2}})}n_{2}^{-1}\gamma_{vx}^{*\top}X^{(2)\top}\left(I-A_{n_{2}}\right)Z^{(2)}\gamma_{z}\\
 & =\frac{1}{1-n_{2}^{-1}\operatorname{tr}(A_{n_{2}})}n_{2}^{-1}\gamma_{vx}^{*\top}X^{(2)\top}\left(I-A_{n_{2}}\right)\left(X^{(2)}B+W^{(2)}\right)\gamma_{z}\\
 & =\frac{1}{1-n_{2}^{-1}\operatorname{tr}(A_{n_{2}})}n_{2}^{-1}X^{(2)\top}\left(I-A_{n_{2}}\right)X^{(2)}B\gamma_{z}+n_{2}^{-1}\gamma_{vx}^{*\top}X^{(2)\top}\left(I-A_{n_{2}}\right)W^{(2)}\gamma_{z}\\
 & =o_{p}(1),
\end{align*}
 as $n_{2}^{-1}X^{(2)\top}\left(I-A_{n_{2}}\right)X^{(2)}=o_{p}(1)$,
$n_{2}^{-1}\gamma_{vx}^{*\top}X^{(2)\top}\left(I-A_{n_{2}}\right)W^{(2)}=o_{p}(1)$
element-wise, and $\frac{1}{1-n_{2}^{-1}\operatorname{tr}(A_{n_{2}})}=O_{p}\left(1\right)$.

For the second term, we have that $\E\left[n_{2}^{-1}\gamma_{vx}^{*\top}X^{(2)\top}\left(I-A_{n_{2}}\right)V^{(2)}\right]=0$
due to the exogeneity of $X$. Also,
\begin{align*}
\textnormal{Var}\left(n_{2}^{-1}\gamma_{vx}^{*\top}X^{(2)\top}\left(I-A_{n_{2}}\right)V^{(2)}\right) & =\E\left[n_{2}^{-2}\gamma_{vx}^{*\top}X^{(2)\top}\left(I-A_{n_{2}}\right)V^{(2)}V^{(2)\top}\left(I-A_{n_{2}}\right)^{\top}X^{(2)}\gamma_{vx}^{*}\right]\\
 & =\sigma_{v}^{2}\E\left[n_{2}^{-2}\gamma_{vx}^{*\top}X^{(2)\top}\left(I-A_{n_{2}}\right)\left(I-A_{n_{2}}\right)^{\top}X^{(2)}\gamma_{vx}^{*}\right]\\
 & \leq\frac{1}{n_{2}^{2}}\sigma_{v}^{2}C_{X}C_{A}^{2}n_{2}||\gamma_{vx}^{*}||\\
 & =o\left(1\right).
\end{align*}
With assumptions that $\lambda_{\max}\left(\frac{X^{(2)\top}X^{(2)}}{n_{2}}\right)<C$
and $\lambda_{\min}\left(\frac{X^{(2)\top}X^{(2)}}{n_{2}}\right)>c$
for some constants $C_{X}$ and $\epsilon>0$, we can ensure that
$\|I-A_{n_{2}}\|$ is bounded by some constant $C_{A}$. Additionally,
we need the assumption that $\|\gamma_{vx}^{*}\|<C_{\gamma}$ for
some constant $C_{\gamma}$, as well as the homogeneity of $V^{(2)}$.

Regarding the third term, we have that 
\begin{align*}
 & \frac{1}{1-n_{2}^{-1}\operatorname{tr}(A_{n_{2}})}n_{2}^{-1}\vartheta_{X^{c}}^{\top}\left(I-A_{n_{2}}\right)Z^{(2)}\gamma_{z}\\
= & \frac{1}{1-n_{2}^{-1}\operatorname{tr}(A_{n_{2}})}n_{2}^{-1}\vartheta_{X^{c}}^{\top}\left(I-A_{n_{2}}\right)\left(X^{(2)}B+W^{(2)}\right)\gamma_{z}\\
= & \frac{1}{1-n_{2}^{-1}\operatorname{tr}(A_{n_{2}})}n_{2}^{-1}\vartheta_{X^{c}}^{\top}\left(I-A_{n_{2}}\right)X^{(2)}B\gamma_{z}+\frac{1}{1-n_{2}^{-1}\operatorname{tr}(A_{n_{2}})}n_{2}^{-1}\vartheta_{X^{c}}^{\top}\left(I-A_{n_{2}}\right)W^{(2)}\gamma_{z}\\
= & o_{p}(1)+\Sigma_{vW}\gamma_{z},
\end{align*}
where $\Sigma_{vW}=\frac{1}{1-n_{2}^{-1}\operatorname{tr}(A_{n_{2}})}n_{2}^{-1}\vartheta_{X^{c}}^{\top}\left(I-A_{n_{2}}\right)W^{(2)}$
captures the contribution of $W^{(2)}$ after projecting out the component
aligned with $X^{(2)}$. By ensuring $\vartheta_{X^{c}}$ has sufficient
information and variability, we make sure that the projection $\vartheta_{X^{c}}^{\top}(I-A_{n_{2}})W^{(2)}$
is non-trivial and retains relevant variability. By Assumption \ref{assumption_sigma_E}, $\Sigma_{W}$ should be full rank and sufficiently different from the subspace spanned by $X^{(2)}$, thus $W^{(2)}$ is not completely aligned with $X^{(2)}$, the third term is non-degenerate.

Similar to the second term above, for the fourth term, we have that 
\begin{equation*}
\E\left[\frac{1}{1-n_{2}^{-1}\operatorname{tr}(A_{n_{2}})}n_{2}^{-1}\vartheta_{X^{c}}^{\top}\left(I-A_{n_{2}}\right)V^{(2)}\right]=0,
\end{equation*}
and 
\begin{align*}
\textnormal{Var}\left(n_{2}^{-1}\vartheta_{X^{c}}^{\top}\left(I-A_{n_{2}}\right)V^{(2)}\right) & =\E\left[n_{2}^{-2}\vartheta_{X^{c}}^{\top}\left(I-A_{n_{2}}\right)V^{(2)}V^{(2)\top}\left(I-A_{n_{2}}\right)^{\top}\vartheta_{X^{c}}\right]\\
 & =\sigma_{v}^{2}\E\left[n_{2}^{-2}\vartheta_{X^{c}}^{\top}\left(I-A_{n_{2}}\right)\left(I-A_{n_{2}}\right)^{\top}\vartheta_{X^{c}}\right]\\
 & \leq\sigma_{v}^{2}C_{\vartheta}^{2}C_{A}^{2}n_{2}^{-1}\\
 & =o\left(1\right).
\end{align*}
With the assumptions that $\lambda_{\max}\left(\frac{X^{(2)\top}X^{(2)}}{n_{2}}\right)<C_{X}$
and $\lambda_{\min}\left(\frac{X^{(2)\top}X^{(2)}}{n_{2}}\right)>\epsilon$
for some constants $C_{X}$ and $\epsilon>0$, we can ensure that
$\|I-A_{n_{2}}\|$ is bounded by some constant $C_{A}$. Additionally,
we need the assumption that $\|\vartheta_{X^{c}}\|<C_{\vartheta}$
for some constant $C_{\vartheta}$, as well as the homogeneity of
$V^{(2)}$.

Regarding the fifth term we have that $\frac{1}{1-n_{2}^{-1}\operatorname{tr}(A_{n_{2}})}n_{2}^{-1}V^{\left(1\right)\top}P_{z}^{\left(1\right)}\left(I-A_{n_{2}}\right)Z^{(2)}\gamma_{z}=o_{p}(1)$,
because the sample splitting ensures that $V^{(1)}$ and $Z^{(2)}$ are
independent given the sample splitting. Moreover, $\E\left[\frac{1}{1-n_{2}^{-1}\operatorname{tr}(A_{n_{2}})}n_{2}^{-1}V^{\left(1\right)\top}P_{z}^{\left(1\right)}\left(I-A_{n_{2}}\right)Z^{(2)}\gamma_{z}\right]=0$
and 
\begin{equation*}
\textnormal{Var}\left(n_{2}^{-1}V^{(1)\top}P_{z}^{(1)}(I - A_{n_{2}})Z^{(2)}\gamma_{z}\right) \leq \left( n_{2}^{-1} ||V^{(1)\top}|| ||I - A_{n_{2}}|| ||Z^{(2)}|| ||\gamma_{z}|| \right)^2, 
\end{equation*}
given that  $||\gamma_{z}|| $ is bounded in Assumption \ref{assumption_reg} and 
$||I - A_{n_{2}}||$ is bounded by 1.

Finally, for the sixth term, $\frac{1}{1-n_{2}^{-1}\operatorname{tr}(A_{n_{2}})}n_{2}^{-1}V^{\left(1\right)\top}P_{z}^{\left(1\right)}\left(I-A_{n_{2}}\right)V^{(2)}=o_{p}(1)$. As previously, we have
$\E\left[n_{2}^{-1}V^{\left(1\right)\top}P_{z}^{\left(1\right)}\left(I-A_{n_{2}}\right)V^{(2)}\right]=0$
by the independence of the $V^{\left(1\right)}$and $V^{(2)}$, and
\begin{align*}
\textnormal{Var}\left(n_{2}^{-1}V^{(1)\top}P_{z}^{(1)}(I-A_{n_{2}})V^{(2)}\right) & =\E\left[n_{2}^{-2}\left(V^{(1)\top}P_{z}^{(1)}(I-A_{n_{2}})V^{(2)}\right)^{2}\right]\\
 & =\E\left[n_{2}^{-2}\operatorname{tr}\left(V^{(1)\top}P_{z}^{(1)}(I-A_{n_{2}})V^{(2)}V^{(2)\top}(I-A_{n_{2}})^{\top}P_{z}^{(1)\top}V^{(1)}\right)\right]\\
 & =n_{2}^{-2}\sigma_{v}^{2}\E\left[V^{(1)\top}P_{z}^{(1)}(I-A_{n_{2}})(I-A_{n_{2}})^{\top}P_{z}^{(1)\top}V^{(1)}\right]\\
 & \leq n_{2}^{-2}\sigma_{v}^{2}\|P_{z}^{(1)}\|^{2}\|(I-A_{n_{2}})\|^{2}\E\left[V^{(1)\top}V^{(1)}\right]\\
 & \leq n_{2}^{-2}\sigma_{v}^{2}\cdot1\cdot C_{A}^{2}\cdot n_{1}\sigma_{v}^{2}=o\left(1\right).
\end{align*}

Combining all the above results, we obtain the result
\begin{equation*}
\frac{1}{1-n_{2}^{-1}\operatorname{tr}(A_{n_{2}})}n_{2}^{-1}\widehat{d}^{(2)\top}\left(I-A_{n_{2}}\right)d^{(2)}=\Sigma_{vW}\gamma_{z}+o_{p}\left(1\right)>0.
\end{equation*}
\end{proof}

Next, we provide another auxiliary result for establishing Theorem \ref{thm:consistency}.
\begin{lemma}
    \label{lemma: T1}
    If $\eta_{x}$ satisfies $n_{2}^{1/2}\eta_{x}\|\gamma_{x}\|=o(1)$,
then
\begin{equation*}
V_{1}=\widehat{d}^{(2)\top}\left(I-A_{n_{2}}\right)X^{(2)}\gamma_{x}=o_{p}\left(1\right).
\end{equation*}
\end{lemma}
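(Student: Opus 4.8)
\textbf{Proof proposal for Lemma \ref{lemma: T1}.}

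The plan is to expand $\widehat{d}^{(2)}$ using the first-stage decomposition already derived in the proof of Lemma \ref{lemma: deno}, namely $\widehat{d}^{(2)} = X^{(2)}\gamma_{vx}^{*} + \vartheta_{X^{c}} + P_{z}^{(1)}V^{(1)} + o_p(1)$, and then to control each resulting term after it is hit by the regularized annihilator $(I-A_{n_{2}})$ and contracted against $X^{(2)}\gamma_x$. The central observation is that $(I-A_{n_{2}})X^{(2)}$ is small: since $A_{n_{2}} = X^{(2)}(X^{(2)\top}X^{(2)}+\eta_x I)^{-1}X^{(2)\top}$ is a \emph{ridge} projection rather than an exact one, we have $(I-A_{n_{2}})X^{(2)} = \eta_x X^{(2)}(X^{(2)\top}X^{(2)}+\eta_x I)^{-1}$, whose operator norm is $O(\eta_x)$ under the eigenvalue bounds in Assumption \ref{assumption_reg}(i). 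This identity is what makes the scaling condition $n_2^{1/2}\eta_x\|\gamma_x\| = o(1)$ the natural hypothesis.

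The key steps, in order, are: (i) write $V_1 = \gamma_{vx}^{*\top}X^{(2)\top}(I-A_{n_{2}})X^{(2)}\gamma_x + \vartheta_{X^{c}}^{\top}(I-A_{n_{2}})X^{(2)}\gamma_x + V^{(1)\top}P_{z}^{(1)\top}(I-A_{n_{2}})X^{(2)}\gamma_x + o_p(1)\cdot(I-A_{n_{2}})X^{(2)}\gamma_x$; (ii) for the first term, use the ridge identity above together with $\|X^{(2)}\gamma_{vx}^{*}\| = O(n_2^{1/2})$, $\lambda_{\max}((X^{(2)\top}X^{(2)}+\eta_x I)^{-1}) \le 1/(c n_2)$, and $\|X^{(2)}\|=O(n_2^{1/2})$, giving a bound of order $n_2^{1/2}\cdot \eta_x \cdot \|\gamma_x\| = o(1)$ by hypothesis; (iii) for the second term, bound $\|\vartheta_{X^{c}}\| = O(n_2^{1/2})$ (from Assumption \ref{assumption_Eigen_Upsilon}, since $\sum \vartheta_{X^c,i}\vartheta_{X^c,i}^\top/n_2$ has bounded norm) and again use $\|(I-A_{n_{2}})X^{(2)}\| = O(\eta_x n_2^{1/2})$ to get $o(1)$; (iv) for the third term involving $V^{(1)}$, exploit the sample-splitting independence of $V^{(1)}$ from $(X^{(2)},Z^{(2)})$: the conditional mean is zero, and the conditional variance is bounded by $n_1\sigma_v^2 \|P_z^{(1)}\|^2 \|(I-A_{n_{2}})X^{(2)}\gamma_x\|^2 = O(n_1 \cdot 1 \cdot \eta_x^2 n_2 \|\gamma_x\|^2)$, which vanishes since $n_1 \asymp n_2$ and $n_2^{1/2}\eta_x\|\gamma_x\|=o(1)$ implies $n_2\eta_x^2\|\gamma_x\|^2 = o(1)$; (v) the last $o_p(1)$ remainder term is multiplied by something of norm $o(1)$, hence negligible. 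Assembling (ii)--(v) gives $V_1 = o_p(1)$.

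The main obstacle I anticipate is step (ii)/(iii): one must be careful about what is being normalized. The statement writes $V_1 = \widehat{d}^{(2)\top}(I-A_{n_{2}})X^{(2)}\gamma_x$ \emph{without} a $n_2^{-1}$ factor (unlike in Lemma \ref{lemma: deno}), so the bound really does need the full strength of $\|(I-A_{n_{2}})X^{(2)}\| = O(\eta_x n_2^{1/2})$ combined with $\|\widehat{d}^{(2)}\| = O(n_2^{1/2})$; a cruder bound using only $\|I-A_{n_{2}}\| \le 1$ would give $O(n_2)$ and fail. Thus the proof genuinely hinges on extracting the factor $\eta_x$ from the ridge annihilator via the identity $(I-A_{n_{2}})X^{(2)} = \eta_x X^{(2)}(X^{(2)\top}X^{(2)}+\eta_x I)^{-1}$, and on verifying that $\|\widehat d^{(2)}\|=O(n_2^{1/2})$ from the decomposition — both of which are routine once stated but easy to get wrong if one is sloppy about normalizations.
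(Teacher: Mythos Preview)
Your proposal is correct, and you have correctly identified the crucial ridge identity $(I-A_{n_{2}})X^{(2)}=\eta_{x}X^{(2)}\bigl(n_{2}^{-1}X^{(2)\top}X^{(2)}+\eta_{x}I\bigr)^{-1}$ as the mechanism that extracts the factor $\eta_{x}$ and makes the hypothesis $n_{2}^{1/2}\eta_{x}\|\gamma_{x}\|=o(1)$ bite. However, your route is more elaborate than the paper's. You first decompose $\widehat{d}^{(2)}$ into the three pieces $X^{(2)}\gamma_{vx}^{*}$, $\vartheta_{X^{c}}$, $P_{z}^{(1)}V^{(1)}$ borrowed from Lemma~\ref{lemma: deno}, and then bound each of the resulting inner products separately (with a mean--variance argument for the $V^{(1)}$ piece). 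The paper does not decompose $\widehat{d}^{(2)}$ at all: it applies the ridge identity directly to the full expression, obtaining
\[
V_{1}=\widehat{d}^{(2)\top}X^{(2)}\bigl(n_{2}^{-1}X^{(2)\top}X^{(2)}+\eta_{x}I\bigr)^{-1}\eta_{x}\gamma_{x},
\]
and then bounds this in one stroke using $\|n_{2}^{-1}\widehat{d}^{(2)\top}X^{(2)}\|=O_{p}(1)$, $\|(n_{2}^{-1}X^{(2)\top}X^{(2)}+\eta_{x}I)^{-1}\|=O(1)$, and $\eta_{x}\|\gamma_{x}\|=o(n_{2}^{-1/2})$. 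Your approach buys nothing extra here---the three-piece decomposition and the sample-splitting independence argument for the $V^{(1)}$ term are unnecessary once you realize that the ridge identity acts on $X^{(2)}\gamma_{x}$ regardless of what multiplies it from the left. The paper's argument is shorter and avoids invoking Lemma~\ref{lemma: deno}'s decomposition, at the modest cost of needing $\|\widehat{d}^{(2)}\|=O_{p}(n_{2}^{1/2})$ as a single input rather than deriving it piecewise.
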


\begin{proof}
The bias term $V_{1}=\widehat{d}^{(2)\top}\left(I-A_{n_{2}}\right)X^{(2)}\gamma_{x}$.
Notice that
\begin{align*}
n_{2}^{-1}\widehat{d}^{(2)\top}\left(I-A_{n_{2}}\right)X^{(2)}\gamma_{x} & =n_{2}^{-1}\widehat{d}^{(2)\top}\left(I-n_{2}^{-1}X^{(2)}\left(n_{2}^{-1}X^{(2)}{}^{\top}X^{(2)}+\eta_{x}I\right)^{-1}X^{(2)\top}\right)X^{(2)}\gamma_{x}\\
 & =n_{2}^{-1}\widehat{d}^{(2)\top}X^{(2)}\gamma_{x}-n_{2}^{-1}\widehat{d}^{(2)\top}n_{2}^{-1}X^{(2)}\left(n_{2}^{-1}X^{(2)}{}^{\top}X^{(2)}+\eta_{x}I\right)^{-1}X^{(2)\top}X^{(2)}\gamma_{x}\\
 & =n_{2}^{-1}\widehat{d}^{(2)\top}X^{(2)}\gamma_{x}-n_{2}^{-1}\widehat{d}^{(2)\top}n_{2}^{-1}X^{(2)}\left(n_{2}^{-1}X^{(2)}{}^{\top}X^{(2)}+\eta_{x}I\right)^{-1}\\
 & \left(n_{2}^{-1}X^{(2)\top}X^{(2)}+\eta_{x}I-\eta_{x}I\right)\gamma_{x}\\
 & =n_{2}^{-1}\widehat{d}^{(2)\top}X^{(2)}\gamma_{x}-n_{2}^{-1}\widehat{d}^{(2)\top}X^{(2)}\left(n_{2}^{-1}X^{(2)}{}^{\top}X^{(2)}+\eta_{x}I\right)^{-1}\\
 & \left(n_{2}^{-1}X^{(2)\top}X^{(2)}+\eta_{x}I\right)\gamma_{x}\\
 & +n_{2}^{-1}\widehat{d}^{(2)\top}X^{(2)}\left(n_{2}^{-1}X^{(2)}{}^{\top}X^{(2)}+\eta_{x}I\right)^{-1}\eta_{x}I\gamma_{x}\\
 & =n_{2}^{-1}\widehat{d}^{(2)\top}X^{(2)}\left(n_{2}^{-1}X^{(2)}{}^{\top}X^{(2)}+\eta_{x}I\right)^{-1}\eta_{x}I\gamma_{x}.
\end{align*}
Then, the variance can be written as
\begin{align*}
V_{1} & =\widehat{d}^{(2)\top}\left(I-A_{n_{2}}\right)X^{(2)}\gamma_{x}\\
 & =n_{2}^{-1}\widehat{d}^{(2)\top}X^{(2)}\left(n_{2}^{-1}X^{(2)}{}^{\top}X^{(2)}+\eta_{x}I\right)^{-1}\eta_{x}\gamma_{x}\\
 & =o_{p}(1).
\end{align*}
The last equality holds with the assumption on the tuning parameter
$n_{2}^{1/2}\eta_{x}\|\gamma_{x}\|=o(1)$. 
\end{proof}

Finally, we state and prove another auxiliary result for establishing Theorem \ref{thm:consistency}.

\begin{lemma}
    \label{lemma: T2}
    We denote that $\E_{X}\left[\epsilon^{(2)}\epsilon^{(2)\top}\right]=\sigma_{\epsilon}^{2}$
and $\E_{X}\left[v^{\left(1\right)}v^{\left(1\right)\top}\right]=\sigma_{v}^{2}$. Then, we have that
    \begin{align*}
V_{2} & =\widehat{d}^{(2)\top}\left(1-A_{n_{2}}\right)\epsilon^{(2)} = o_p(1).
\end{align*} 
\end{lemma}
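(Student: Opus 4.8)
The plan is to read $V_2 = \widehat{d}^{(2)\top}(I - A_{n_2})\epsilon^{(2)}$ as a linear form in the second-sample error $\epsilon^{(2)}$ whose coefficient vector $\widehat{d}^{(2)\top}(I - A_{n_2})$ depends only on the first subsample $\mS_1$ (through $\widehat{\gamma}_z$) and on the exogenous design $(X^{(2)}, Z^{(2)})$ of $\mS_2$. Conditioning on $\mS_1$ and on the whole design, Assumption \ref{assumption_Eigen_Upsilon} gives $\E[\epsilon^{(2)} \mid Z] = 0$, so $V_2$ has conditional mean zero, and the whole problem collapses to showing that its conditional second moment tends to zero; then $V_2 = o_p(1)$ follows by Chebyshev's inequality and averaging over the conditioning variables.

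To make the second moment tractable I would substitute the asymptotic representation of $\widehat{d}^{(2)}$ derived inside Lemma \ref{lemma: deno}, namely $\widehat{d}^{(2)} = X^{(2)}\gamma_{vx}^{*} + \vartheta_{X^c} + P_z^{(1)}V^{(1)} + o_p(1)$, which decomposes $V_2$ into three pieces: the $X$-aligned piece $\gamma_{vx}^{*\top}X^{(2)\top}(I - A_{n_2})\epsilon^{(2)}$, the orthogonal-instrument piece $\vartheta_{X^c}^{\top}(I - A_{n_2})\epsilon^{(2)}$, and the first-stage-error piece $V^{(1)\top}P_z^{(1)\top}(I - A_{n_2})\epsilon^{(2)}$. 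For the first piece I would reuse the algebra of Lemma \ref{lemma: T1}: since $A_{n_2}$ partials out the column space of $X^{(2)}$, the product $X^{(2)\top}(I - A_{n_2})$ reduces to an $\eta_x$-order term, so under $n_2^{1/2}\eta_x\|\gamma_x\| = o(1)$ this piece is negligible.

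For the two remaining pieces I would follow the same mean-zero-plus-vanishing-variance computations used term by term in Lemma \ref{lemma: deno}. The orthogonal-instrument piece has conditional mean zero by exogeneity of $\epsilon^{(2)}$, and its conditional variance equals $\sigma_{\epsilon}^{2}\,\vartheta_{X^c}^{\top}(I - A_{n_2})(I - A_{n_2})^{\top}\vartheta_{X^c}$, which I would bound using $\|I - A_{n_2}\| \le C_A$ (from the eigenvalue bounds on $\Sigma_X$ in Assumption \ref{assumption_reg}) together with the control on $\vartheta_{X^c}$ in Assumption \ref{assumption_Eigen_Upsilon}, and show it is $o(1)$. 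The first-stage-error piece also has conditional mean zero, but for a different reason: $V^{(1)}$ and $\epsilon^{(2)}$ live on the disjoint subsamples $\mS_1$ and $\mS_2$ and are therefore independent, so the cross expectation factorizes into two zero-mean factors; its variance reduces to $\sigma_v^2\sigma_\epsilon^2$ times $\operatorname{tr}\{(I - A_{n_2})^{\top}P_z^{(1)}P_z^{(1)\top}(I - A_{n_2})\}$, which I would control through $\|P_z^{(1)}\| \le 1$ and $\|I - A_{n_2}\| \le C_A$ under Assumptions \ref{assumption_reg}--\ref{assumption_sigma_E} and again show is $o(1)$. Combining the three bounds yields $\textnormal{Var}(V_2) = o(1)$ and hence $V_2 = o_p(1)$.

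The step I expect to be the main obstacle is the first-stage-error piece $V^{(1)\top}P_z^{(1)\top}(I - A_{n_2})\epsilon^{(2)}$, which is exactly the term that destroys the jackknife extension in \eqref{eq:jive2}, where $\widehat{\epsilon}_i$ inherits information from the entire sample and the analogue of $\E[V^{(1)}\epsilon^{(2)\top}]$ fails to vanish. Here the disjointness of $\mS_1$ and $\mS_2$ restores that orthogonality cleanly, so the delicate work is not the mean but the variance: one must verify that composing the ridge projection $P_z^{(1)}$ with the partialling-out matrix $I - A_{n_2}$ does not inflate the quadratic form, which is where the well-conditioning of $\Sigma_W$ (Assumption \ref{assumption_sigma_E}) and the eigenvalue bounds on $\Sigma_Z$ (Assumption \ref{assumption_reg}) are indispensable.
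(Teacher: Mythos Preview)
Your approach and the paper's both rest on the same mean-zero-plus-vanishing-variance strategy, but the decompositions of $\widehat{d}^{(2)}$ differ. You pull in the asymptotic representation $\widehat{d}^{(2)} = X^{(2)}\gamma_{vx}^{*} + \vartheta_{X^c} + P_z^{(1)}V^{(1)} + o_p(1)$ from Lemma~\ref{lemma: deno} and treat the three named pieces; the paper instead uses the \emph{exact} identity $\widehat{d}^{(2)} = P_z^{(1)}Z^{(1)}\gamma_z + P_z^{(1)}V^{(1)}$ and then expands the variance of $n_2^{-1}V_2$ into three terms (Term1 from $V^{(1)}$, Term2 the cross term, Term3 from $Z^{(1)}\gamma_z$), each of which is further broken down via $Z = XB + W$ from Assumption~\ref{assumption_sigma_E}. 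So where you lean on Assumption~\ref{assumption_Eigen_Upsilon} (the $\vartheta_X/\vartheta_{X^c}$ split of the optimal instrument), the paper leans on Assumption~\ref{assumption_sigma_E} (the $XB/W$ split of the design) and never invokes the optimal-instrument representation in this lemma at all. The paper's route is longer---six sub-terms in the variance---but fully algebraic and self-contained: it avoids the unquantified $o_p(1)$ remainder you inherit from Lemma~\ref{lemma: deno}, which you would otherwise have to argue survives multiplication by the random vector $(I-A_{n_2})\epsilon^{(2)}$. Your route is shorter and recycles earlier work, which is attractive, but that $o_p(1)$ carry-over is a loose end you should close explicitly.

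One concrete point: the paper's proof (and the way $T_2$ enters Theorem~\ref{thm:consistency}) makes clear that the quantity actually controlled is $n_2^{-1}V_2$, not $V_2$. Without the $n_2^{-1}$ normalization your variance bounds---for instance $\sigma_\epsilon^2\,\vartheta_{X^c}^\top(I-A_{n_2})^2\vartheta_{X^c} \le C_A^2\sigma_\epsilon^2\|\vartheta_{X^c}\|^2$---are $O(n_2)$ by Assumption~\ref{assumption_Eigen_Upsilon}, not $o(1)$. The same applies to your trace bound on the first-stage-error piece: operator-norm control on $P_z^{(1)}$ and $I-A_{n_2}$ only gives an $O(n_2)$ trace, which becomes $o(1)$ after the $n_2^{-2}$ scaling in the variance. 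This is a notational rather than conceptual gap, shared by the lemma statement itself, but you should make the normalization explicit when you write it up.
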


\begin{proof}
Let us write $V_{2}$ as following
\begin{align*}
V_{2} & =\widehat{d}^{(2)\top}\left(I-A_{n_{2}}\right)\epsilon^{(2)}\\
 & =\left(P_{z}^{\left(1\right)}Z^{\left(1\right)}\gamma_{z}+P_{z}^{\left(1\right)}V^{\left(1\right)}\right)\left(I-A_{n_{2}}\right)\epsilon^{(2)}.
\end{align*}

Without sample splitting, the term: $V_{2}=\widehat{d}^{(2)\top}(I-A_{n_{2}})\epsilon^{(2)}$
is not $o_{p}(1)$ because of the dependencies between $\widehat{d}^{(2)}$
and $\epsilon^{(2)}$. This lack of independence complicates the expectation
and variance calculations, leading to potentially large and non-diminishing
terms.

By sample splitting and the exogeneity of $Z$ and $X$, we have
\begin{equation*}
\E\left[n_{2}^{-1}\gamma_{z}Z^{\left(1\right)\top}P_{z}^{\left(1\right)\top}\left(I-A_{n_{2}}\right)\epsilon^{(2)}\right]=0.
\end{equation*}

Moreover, $\E\left[n_{2}^{-1}V^{\left(1\right)\top}P_{z}^{\left(1\right)}(I-A_{n})\epsilon^{(2)}\right]=\E\left[\operatorname{tr}(n_{2}^{-1}\left(I-A_{n_{2}}\right)\epsilon^{(2)}V^{\left(1\right)\top})\right]=0$. 

Now we consider
\begin{align}
 & \textnormal{Var}\left(n_{2}^{-1}\widehat{d}^{(2)\top}\left(1-A_{n_{2}}\right)\epsilon^{(2)}\right)\label{consistency_Var_1}\\
 & =n_{2}^{-2}\E\left[\left\{ \widehat{d}^{(2)\top}\left(1-A_{n_{2}}\right)\epsilon^{(2)}\epsilon^{(2)\top}\left(I-A_{n_{2}}\right)\widehat{d}^{(2)}\right\} \right]\nonumber \\
 & =n_{2}^{-2}\sigma_{\epsilon}^{2}\E\left[(P_{z}^{\left(1\right)}V\left(1\right)+P_{z}^{\left(1\right)}Z^{\left(1\right)}\gamma_{z})^{\top}\left(I-A_{n_{2}}\right)^{2}(P_{z}^{\left(1\right)}V^{\left(1\right)}+P_{z}^{\left(1\right)}Z^{\left(1\right)}\gamma_{z})\right]\nonumber \\
 & =\underset{\textnormal{Term1}}{\underbrace{n_{2}^{-2}\sigma_{\epsilon}^{2}\E\left[V^{\left(1\right)}{}^{\top}P_{z}^{\left(1\right)\top}\left(I-A_{n_{2}}\right)^{2}P_{z}^{\left(1\right)}V^{\left(1\right)}\right]}}\nonumber \\
 & +\underset{\textnormal{Term2}}{\underbrace{2n_{2}^{-2}\sigma_{\epsilon}^{2}\E\left[V^{\left(1\right)}{}^{\top}P_{z}^{\left(1\right)\top}\left(I-A_{n_{2}}\right)^{2}P_{z}^{\left(1\right)}Z^{\left(1\right)}\gamma_{z}\right]}}\nonumber \\
 & +\underset{\textnormal{Term3}}{\underbrace{n_{2}^{-2}\sigma_{\epsilon}^{2}\E\left[\gamma_{z}^{\top}Z^{\left(1\right)}{}^{\top}P_{z}^{\left(1\right)\top}\left(I-A_{n_{2}}\right)^{2}P_{z}^{\left(1\right)}Z^{\left(1\right)}\gamma_{z}\right]}}.\nonumber 
\end{align}

Recall that we have $\left(I-A_{n_{2}}\right)^{2}=I-A_{n_{2}}-n_{2}^{-1}\eta_{x}X^{(2)}\left(X^{(2)}{}^{\top}X^{(2)}+\eta_{x}I\right)^{-2}X^{(2)\top}$.
Denote $\ensuremath{H^{\left(1\right)}=\left(n_{1}^{-1}Z^{\left(1\right)\top}Z^{\left(1\right)}+\eta_{z}I\right)^{-1}Z^{\left(1\right)\top}}$,
thus $\ensuremath{P_{z}^{\left(1\right)}=n_{1}^{-1}Z^{\left(1\right)}H^{\left(1\right)}}$
. Consider the three terms respectively,

For Term1, 
\begin{align*} \label{eq_term1}
 & n_{2}^{-2}\sigma_{\epsilon}^{2}\E\left[V^{\left(1\right)}{}^{\top}P_{z}^{\left(1\right)\top}\left(I-A_{n_{2}}\right)^{2}P_{z}^{\left(1\right)}V^{\left(1\right)}\right]\nonumber\\
= & n_{2}^{-2}\sigma_{\epsilon}^{2}\E\left[\operatorname{tr}\left\{ V^{\left(1\right)}{}^{\top}P_{z}^{\left(1\right)\top}\left(I-A_{n_{2}}\right)^{2}P_{z}^{\left(1\right)}V^{\left(1\right)}\right\} \right]\nonumber\\
= & n_{2}^{-2}\sigma_{\epsilon}^{2}\E\left[\operatorname{tr}\left\{ P_{z}^{\left(1\right)\top}\left(I-A_{n_{2}}\right)^{2}P_{z}V^{\left(1\right)}V^{\left(1\right)\top}\right\} \right]\nonumber\\
= & n_{2}^{-2}\sigma_{\epsilon}^{2}\sigma_{v}^{2}\E\left[\operatorname{tr}\left\{ P_{z}^{\left(1\right)\top}\left(I-A_{n_{2}}\right)^{2}P_{z}^{\left(1\right)}\right\} \right]\nonumber\\
= & n_{2}^{-2}\sigma_{\epsilon}^{2}\sigma_{v}^{2}\E\left[\operatorname{tr}\left\{ n_{1}^{-2}H^{\left(1\right)\top}Z^{\left(1\right)^{\top}}\left(I-A_{n_{2}}\right)^{2}Z^{\left(1\right)}H^{\left(1\right)}\right\} \right]\nonumber\\
= & n_{2}^{-2}\sigma_{\epsilon}^{2}\sigma_{v}^{2}\E\left[\operatorname{tr}\left\{ n_{1}^{-2}H^{\left(1\right)\top}\left(X^{\left(1\right)}B+W^{\left(1\right)}\right)^{\top}\left(I-A_{n_{2}}\right)^{2}\left(X^{\left(1\right)}B+W^{\left(1\right)}\right)H^{\left(1\right)}\right\} \right]\nonumber\\
= & \underset{Term11} {\underbrace{n_{2}^{-4}\sigma_{\epsilon}^{2}\sigma_{v}^{2}\E\left[\operatorname{tr}\left\{ H^{\left(1\right)^{\top}}W^{\left(1\right)^{\top}}\left(I-A_{n_{2}}\right)^{2}W^{\left(1\right)}H^{\left(1\right)}\right\} \right]}}\nonumber\\
 & +\underset{Term12} {\underbrace{2n_{2}^{-4}\sigma_{\epsilon}^{2}\sigma_{v}^{2}\E\left[\operatorname{tr}\left\{ H^{\left(1\right)\top}W^{\left(1\right)^{\top}}\left(I-A_{n_{2}}\right)^{2}X^{\left(1\right)}BH^{\left(1\right)}\right\} \right]}}\nonumber\\
 & +\underset{Term13} {\underbrace{n_{2}^{-4}\sigma_{\epsilon}^{2}\sigma_{v}^{2}\E\left[\operatorname{tr}\left\{ H^{\left(1\right)\top}B^{\top}X^{\left(1\right)\top}\left(I-A_{n_{2}}\right)^{2}X^{\left(1\right)}BH^{\left(1\right)}\right\} \right]}}.
\end{align*}

Now we study the three terms in the above display. For Term11, we have 
\begin{align}
 & n_{2}^{-4}\sigma_{\epsilon}^{2}\sigma_{v}^{2}\E_{X}\left[\operatorname{tr}\left\{ H^{\left(1\right)^{\top}}W^{\left(1\right)^{\top}}\left(I-A_{n_{2}}\right)^{2}W^{\left(1\right)}H^{\left(1\right)}\right\} \right]\nonumber \\
= & n_{2}^{-4}\sigma_{\epsilon}^{2}\sigma_{v}^{2}\E_{X}\left[\operatorname{tr}\left\{ W^{\left(1\right)^{\top}}\left(I-A_{n_{2}}\right)^{2}W^{\left(1\right)}H^{\left(1\right)}H^{\left(1\right)^{\top}}\right\} \right]\nonumber \\
= & n_{2}^{-4}\sigma_{\epsilon}^{2}\sigma_{v}^{2}\operatorname{tr}\left\{ \E\left[W^{\left(1\right)^{\top}}\left(I-A_{n_{2}}\right)^{2}W^{\left(1\right)}H^{\left(1\right)}H^{\left(1\right)^{\top}}\right]\right\} \nonumber \\
= & n_{2}^{-4}\sigma_{\epsilon}^{2}\sigma_{v}^{2}\E\left[\operatorname{tr}\left\{ (1-n_{2}^{-1}\operatorname{tr}(A_{n_{2}}))\Sigma_{W}H^{(1)}H^{(1)\top}\right\} \right].\\
= & n_{2}^{-4}\sigma_{\epsilon}^{2}\sigma_{v}^{2}\operatorname{tr}\left((I-A_{n_{2}})^{2}\Sigma_{W}H^{(1)}H^{(1)\top}\right)\nonumber\\
= & o\left(1\right),
\end{align}
 where $\Sigma_{W}$ is the covariance matrix of the $W^{\left(1\right)}$.
The conditional expectation depends on the independent nature of the
data split strategy. Notice that \begin{align}
n_{2}^{-1}\E_{X}\left[W^{\left(1\right)^{\top}}\left(I-A_{n_{2}}\right)^{2}W^{\left(1\right)}\right]=\left\{ 1-n_{2}^{-1}\operatorname{tr}(A_{n_{2}})\right\} \Sigma_{W},\label{consistency_est_SigmaW}
\end{align}
also,
\begin{align*}
\operatorname{tr}(\Sigma_{W}H^{(1)}H^{(1)\top}) & =\operatorname{tr}\left(\Sigma_{W}\left(n_{1}^{-1}Z^{(1)\top}Z^{(1)}+\eta_{z}I\right)^{-1}n_{1}^{-1}Z^{(1)\top}Z^{(1)}\left(n_{1}^{-1}Z^{(1)\top}Z^{(1)}+\eta_{z}I\right)^{-1}\right)\\
 & =O(n_{1}^{-1})= O(n_{2}^{-1}).
\end{align*}

Similarly, the terms $\operatorname{tr}(A_{n_{2}}\Sigma_{W}H^{(1)}H^{(1)\top})$
and $\operatorname{tr}(A_{n_{2}}^{2}\Sigma_{W}H^{(1)}H^{(1)\top})$ are also
$O(n_{2}^{-1})$ given the assumption that $\lambda_{\max}\left(\frac{Z^{(1)\top}Z^{(1)}}{n_{1}}\right)<C_{Z}$, and hence
the entire expression is $o\left(1\right)$. 

Now consider the Term12, 
\begin{align} & n_{2}^{-4}\sigma_{\epsilon}^{2}\sigma_{v}^{2}\E\left[\operatorname{tr}\left\{ H^{\left(1\right)\top}W^{\left(1\right)^{\top}}\left(I-A_{n_{2}}\right)^{2}X^{\left(1\right)}BH^{\left(1\right)}\right\} \right]\nonumber\\
& =  n_{2}^{-3}\sigma_{\epsilon}^{2}\sigma_{v}^{2}\operatorname{tr}\left\{ \E\left[n_{2}^{-1}\E\left(W^{\left(1\right)^{\top}}\left(I-A_{n_{2}}\right)^{2}X^{\left(1\right)}\mid\mS_{1}\right)BH^{\left(1\right)}H^{\left(1\right)\top}\right]\right\} \nonumber\\
& <  \sigma_{\epsilon}^{2}\sigma_{v}^{2}\operatorname{tr}\left\{ \E\left[n_{2}^{-2}\E\left(W^{\left(1\right)^{\top}}\left(I-A_{n_{2}}\right)^{2}X^{\left(1\right)}\mid\mS_{1}\right)\right]\right\} n_{2}^{-2}\operatorname{tr}\left\{ BH^{\left(1\right)}H^{\left(1\right)\top}\right\} \nonumber\\
& =  o(1),
\end{align}
where
\begin{align} & n_{2}^{-2}\E\left[W^{\left(1\right)^{\top}}\left(I-A_{n_{2}}\right)^{2}X^{\left(1\right)}\right]\nonumber\\ \label{consistency_est_SigmaWE}
& =  n_{2}^{-2}\E\left[W^{\left(1\right)^{\top}}\left(I-A_{n_{2}}\right)X^{\left(1\right)}\right]\nonumber\\
 & + n_{2}^{-1}\eta_{x}\E\left[n_{2}^{-1}W^{\left(1\right)^{\top}}X^{(2)}\left(\frac{X^{(2)}{}^{\top}X^{(2)}}{n_{2}}+\eta_{x}I\right)^{-2}n^{-1}X^{(2)\top}X^{\left(1\right)}\right]\nonumber\\
& =  n_{2}^{-2}\E\left[W^{\left(1\right)^{\top}}\left(I-A_{n_{2}}\right)X^{\left(1\right)}\right]+o(1)\nonumber\\
& =  o(1).
\end{align}
The $\E_{X}\left[W^{\left(1\right)^{\top}}\left(I-A_{n_{2}}\right)X^{\left(1\right)}\right]$
converges to zero element-wise. To be precise, we denote $W_{k}$
as the $k-th$ column of the matrix $W^{\left(1\right)}$. Then,
\begin{align*}
\E_{X}\left[W_{k}^{\left(1\right)\top}\left(I-A_{n_{2}}\right)X^{\left(1\right)}\right]&=\E\left[ \operatorname{tr}\left(W_{k}^{\left(1\right)\top}\left(I-A_{n_{2}}\right)X^{\left(1\right)}\right)\right] \\
&=\E\left[\operatorname{tr}\left(\left(I-A_{n_{2}}\right)X^{\left(1\right)}W_{k}^{\left(1\right)\top}\right)\right]=0,
\end{align*}
 due to the independence between $W_{k}^{\left(1\right)\top}$ and
$X^{\left(1\right)}$.

For  Term13 of equation \eqref{eq_term1}, we have that
\begin{align} & n_{2}^{-4}\sigma_{\epsilon}^{2}\sigma_{v}^{2}\E_{X}\left[\operatorname{tr}\left\{ H^{\left(1\right)\top}B^{\top}X^{\left(1\right)\top}\left(I-A_{n_{2}}\right)^{2}X^{\left(1\right)}BH^{\left(1\right)}\right\} \right]\\
& =  n_{2}^{-4}\sigma_{\epsilon}^{2}\sigma_{v}^{2}\E\left[ \operatorname{tr}\left(B^{\top}X^{\left(1\right)\top}\left(I-A_{n_{2}}\right)^{2}X^{\left(1\right)}BH^{\left(1\right)}H^{\left(1\right)\top}\right)\right] \nonumber\\
& =  n_{2}^{-4}\sigma_{\epsilon}^{2}\sigma_{v}^{2}\operatorname{tr}\left\{ \E\left[B^{\top}X^{\left(1\right)\top}\left(I-A_{n_{2}}\right)^{2}X^{\left(1\right)}BH^{\left(1\right)}H^{\left(1\right)\top}\right]\right\} \nonumber\\
& =  \sigma_{\epsilon}^{2}\sigma_{v}^{2}\operatorname{tr}\left\{ \E\left[n_{2}^{-2}\E\left(B^{\top}X^{\left(1\right)\top}\left(I-A_{n_{2}}\right)^{2}X^{\left(1\right)}B\mid\mS_{1}\right)n_{2}^{-2}H^{\left(1\right)}H^{\left(1\right)\top}\right]\right\} \nonumber\\
& <  \sigma_{\epsilon}^{2}\sigma_{v}^{2}o(1)\times \E\left[\operatorname{tr}\left\{ n_{2}^{-2}H^{\left(1\right)}H^{\left(1\right)\top}\right\} \right] \nonumber\\
& =  o(1), \nonumber
\end{align}
where
\begin{align}
\label{consistency_est_XX}
 & n_{2}^{-2}\E\left[X^{\left(1\right)\top}\left(I-A_{n_{2}}\right)^{2}X^{\left(1\right)}\right]\\
& =  n_{2}^{-2}\E\left[B^{\top}X^{\left(1\right)\top}\left(I-A_{n_{2}}\right)^{2}X^{\left(1\right)}B\right] \nonumber \\
& \quad +\eta_{x}n_{2}^{-1}\E\left[B^{\top}X^{\left(1\right)\top}X^{\left(1\right)}n_{2}^{-1}\left(\frac{X^{\left(1\right)\top}X^{\left(1\right)}}{n_{2}} +\eta I\right)^{-2}X^{\left(1\right)\top}X^{\left(1\right)}n_{2}^{-1}B^{\top}\right]\nonumber \\
& <  n_{2}^{-2}\E\left[B^{\top}X^{\left(1\right)\top}\left(I-A_{n_{2}}\right)^{2}X^{\left(1\right)}B\right]+C\eta_{x}n_{2}^{-1}B^{\top}B\nonumber \\
& =  n_{2}^{-2}\E\left[B^{\top}\eta_{x}\left(\frac{X^{\left(1\right)\top}X^{\left(1\right)}}{n_{2}}+\eta_{x}I\right)^{-1}\frac{X^{\left(1\right)\top}X^{\left(1\right)}}{B}\right]+o(1)\nonumber \\
& =  C\eta_{x}n_{2}^{-1}B^{\top}B+o(1)\nonumber \\
& =  o(1).\nonumber 
\end{align}

The last equality holds since we assume that $\|n_{2}^{-1}\eta_{x}B^{\top}B\|=o(1)$
and $\lambda_{max}(\frac{X^{\left(1\right)\top}X^{\left(1\right)}}{n_{2}})<C$
for some constant $C$. This assumption is commonly used bound assumption in the literature, see \cite{he2000parameters}.

Combining the above results we have the Term1 in equation \eqref{consistency_Var_1} converging to zero.

Now we move to the Term 2 of the equation \eqref{consistency_Var_1}, which is
\begin{equation*}
n_{2}^{-2}\sigma_{\epsilon}^{2}\E_{X}\left[v^{\left(1\right)}{}^{\top}P_{z}^{\left(1\right)\top}\left(I-A_{n_{2}}\right)^{2}P_{z}^{\left(1\right)}Z^{\left(1\right)}\gamma_{z}\right]=0.
\end{equation*}

We then study Term 3 of the equation \eqref{consistency_Var_1}. Write it as following
\begin{align}
 & n_{2}^{-2}\sigma_{\epsilon}^{2}\E_{X}\left[\gamma_{z}^{\top}Z^{\left(1\right)\top}P_{z}^{\left(1\right)\top}\left(I_{n}-A_{n}\right)^{2}P_{z}^{\left(1\right)}Z^{\left(1\right)}\gamma_{z}\right]\nonumber \\
& =  n_{2}^{-2}\sigma_{\epsilon}^{2}\E_{X}\Bigl[\gamma_{z}^{\top}Z^{\left(1\right)\top}Z^{\left(1\right)}\left(Z^{\left(1\right)\top}Z^{\left(1\right)}/n_{2}+\eta_{z}I\right)^{-1} \nonumber\\
& \quad Z^{\left(1\right)\top}\left(I-A_{n_{2}}\right)^{2}Z^{\left(1\right)}\left(Z^{\left(1\right)\top}Z^{\left(1\right)}/n_{2}+\eta_{z}I\right)^{-1}Z^{\left(1\right)\top}Z^{\left(1\right)}\gamma_{z}\Bigr]\nonumber \\
& <  C^{2}\sigma_{\epsilon}^{2}n_{2}^{-2}\E_{X}\left[\gamma_{z}^{\top}Z^{\left(1\right)\top}\left(I-A_{n_{2}}\right)^{2}Z^{\left(1\right)}\gamma_{z}\right].
\end{align}

Recall that $H^{\left(1\right)}=(Z^{\left(1\right)}{}^{\top}Z^{\left(1\right)}/n+\eta I)^{-1}Z^{\left(1\right)}{}^{\top}$.
To be precise, we show the calculation below 
\begin{align}
 & n_{2}^{-2}\E_{X}\left[\gamma_{z}^{\top}Z^{\left(1\right)\top}\left(I-A_{n_{2}}\right)^{2}Z^{\left(1\right)}\gamma_{z}\right]\nonumber \\
 & = n_{2}^{-2}\E_{X}\left[\operatorname{tr}\left\{ n_{2}^{-2}Z^{\top}\left(I-A_{n_{2}}\right)^{2}Z^{\left(1\right)}\gamma_{z}\gamma_{z}^{\top}\right\} \right]\nonumber \\
 & = n_{2}^{-2}\E_{X}\left[\operatorname{tr}\left\{ \left(X^{\left(1\right)}B+W^{\left(1\right)}\right)\left(I-A_{n_{2}}\right)^{2}\left(X^{\left(1\right)}B+W^{\left(1\right)}\right)\gamma_{z}\gamma_{z}^{\top}\right\} \right]\nonumber \\
 & = \underset{\textnormal{Term31}}{\underbrace{n_{2}^{-2}\E_{X}\left[\operatorname{tr}\left\{ W^{\left(1\right)\top}\left(I-A_{n_{2}}\right)^{2}W^{\left(1\right)}\gamma_{z}\gamma_{z}^{\top}\right\} \right]}}\nonumber \\
 & +\underset{\textnormal{Term32}}{\underbrace{2n^{-2}\E_{X}\left[\operatorname{tr}\left\{ W^{\left(1\right)\top}\left(I-A_{n_{2}}\right)^{2}X^{\left(1\right)}B\gamma_{z}\gamma_{z}^{\top}\right\} \right]}}\\
 & +\underset{\textnormal{Term33}}{\underbrace{2n_{2}^{-2}\E_{X}\left[\operatorname{tr}\left\{ B^{\top}X^{\left(1\right)\top}\left(I-A_{n_{2}}\right)^{2}X^{\left(1\right)}B\gamma_{z}\gamma_{z}^{\top}\right\} \right]}}.\nonumber 
\end{align}
With the assumption that $\|\gamma_{z}\|<C_\gamma$, we have $\textnormal{Term31}=o(1)$ with
the result in equation \eqref{consistency_est_SigmaW}. Similarly,
$\textnormal{Term32}=o(1)$ with the result in \eqref{consistency_est_SigmaWE}
and $\textnormal{Term33}=o(1)$ with result in equation \eqref{consistency_est_XX}. Hence we proved the result in \eqref{consistency_Var_1} that $\textnormal{Var}\left(n_{2}^{-1}\widehat{d}(I-A_{n_{2}})\epsilon\right)=o(1)$.
\end{proof}

\subsection{Proof of Theorem \ref{thm:asymp normality}}

Now we prove Theorem \ref{thm:asymp normality}, which states the asymptotic normality of $\sqrt{n_{2}}\left(\widehat{\alpha}-\alpha\right)$.

\begin{proof}
Consider the following decomposition,
\begin{align}
\sqrt{n_{2}}\left(\widehat{\alpha}-\alpha\right) & =\sqrt{n_{2}}\left(\widehat{d}^{(2)\top}\left(I-A_{n_{2}}\right)d^{(2)}\right)^{-1}\widehat{d}^{(2)\top}\left(I-A_{n_{2}}\right)X^{(2)}\gamma_{x} \notag \\
 & +\sqrt{n_{2}}\left(\widehat{d}^{(2)\top}\left(I-A_{n_{2}}\right)d^{(2)}\right)^{-1}\widehat{d}^{(2)\top}\left(I-A_{n_{2}}\right)\epsilon^{(2)}. \label{eq: asymp linearity}
\end{align}

We now study both terms in the above display. We start by first considering the term, $\widehat{d}^{(2)\top}\left(I-A_{n_{2}}\right)X^{(2)}\gamma_{x}$. Specifically,
\begin{equation*}
\frac{1}{\sqrt{n_{2}}}
\widehat{d}^{(2)\top}\left(I-A_{n_{2}}\right)X^{(2)}\gamma_{x}=n_{2}^{1/2}n_{2}^{-1}\widehat{d}^{(2)\top}X^{(2)}\left(\frac{1}{n_{2}}X^{(2)\top}X^{(2)}+\eta I\right)^{-1}\eta_{x}\gamma_{x}=o(1).
\end{equation*}
With the condition $n_{2}^{1/2}\eta_{x}\|\gamma_{x}\|=o(1)$,
the bias term could be a high-order term and thus does not impact the asymptotic
distribution.

Next, we consider the term $\frac{1}{n_{2}}\widehat{d}^{(2)\top}(I-A_{n_{2}})d^{(2)}$.
Recall that
\begin{equation*}
\widehat{\gamma}_{z}=\left(\frac{1}{n_{1}}Z^{(1){\top}}Z^{(1)}+\eta_{z}I\right)^{-1}\frac{1}{n_{1}}Z^{(1){\top}}d^{(1)},
\end{equation*}
where $d^{(1)}=Z^{(1)}\gamma_{z}+X^{(1)}\gamma_{x}+\epsilon^{(1)}.$
As $n_{1}\to\infty$, under Assumptions \ref{assumption_Eigen_Upsilon}--\ref{assumption_large_p}, the estimator $\widehat{\gamma}_{z}$ converges
in probability to:
\begin{equation*}
\gamma_{z}^{*}=(\Sigma_{z}+\eta_{z}I)^{-1}\Sigma_{z}\gamma_{z},
\end{equation*}
where $\ensuremath{\Sigma_{z}=\E[Z_{i}Z_{i}^{\top}]}$ is the
covariance matrix of the instrumental variables. Hence, the predicted value in the second stage is:
\begin{equation*}
\widehat{d}^{(2)}=Z^{(2)}\widehat{\gamma}_{z} =  Z^{(2)}\gamma_{z}^{*} + o(1).
\end{equation*}

Next, by expanding the term of interest, we have:
\begin{equation*}
\frac{1}{n_{2}}\widehat{d}^{(2)\top}(I-A_{n_{2}})d^{(2)}=\frac{1}{n_{2}}\gamma_{z}^{*\top}Z^{(2)\top}(I-A_{n_{2}})\left(Z^{(2)}\gamma_{z}+X^{(2)}\gamma_{x}+\epsilon^{(2)}\right).
\end{equation*}
Since $(\ensuremath{I-A_{n_{2}}})$ removes the effect of $X^{(2)}$, that is $(I-A_{n_{2}})X^{(2)}=0$, and $\ensuremath{\epsilon^{(2)}}$ is independent of $Z^{(2)}$, the leading term simplifies to:
\begin{equation*}
\frac{1}{n_{2}}\gamma_{z}^{*\top}Z^{(2)\top}(I-A_{n_{2}})Z^{(2)}\gamma_{z}.
\end{equation*}
As $n_{2}\to\infty$, by the law of large numbers, we obtain:
\begin{equation*}
\frac{1}{n_{2}}Z^{(2)\top}(I-A_{n_{2}})Z^{(2)}\xrightarrow{p}\E\left[Z_{i}^{\top}(I-A)Z_{i}\right],
\end{equation*}
where $A$ is the projection matrix of $X^{(2)}$ in the limit and
we have the independence of $X$ and $Z$,
\begin{equation*}
\E\left[Z_{i}^{\top}Z_{i}\right]=\text{tr}(\Sigma_{z}).
\end{equation*}

Combining this with the first-stage shrinkage estimator $\ensuremath{\gamma_{z}^{*}},$
the probability limit of the denominator is:

\begin{equation*}
\frac{1}{n_{2}}\widehat{d}^{(2)\top}(I-A_{n_{2}})d^{(2)}\xrightarrow{p}
\gamma_{z}^{*\top}\Sigma_{z}\gamma_{z}^{*}=\gamma_{z}^{\top}\Sigma_{z}(\Sigma_{z}+\eta_{z}I)^{-1}\Sigma_{z}(\Sigma_{z}+\eta_{z}I)^{-1}\Sigma_{z}\gamma_{z}.
\end{equation*}

Now we prove the asymptotic normality of the term $\frac{1}{\sqrt{n_{2}}}\widehat{d}^{(2)\top}\left(I-A_{n_{2}}\right)\epsilon^{(2)}$ in equation \eqref{eq: asymp linearity}. We write it as follows: 
\begin{align}\label{eq asymp normality epsilon}
 & n_{2}^{-1/2}\widehat{d}^{(2)\top}\left(I-A_{n_{2}}\right)\epsilon^{(2)}\nonumber \\
& =  \underset{Term1}{\underbrace{n_{2}^{1/2}n_{2}^{-1}\gamma_{z}Z^{\left(1\right)\top}Z^{\left(1\right)}\left(Z^{\left(1\right)\top}Z^{\left(1\right)}/n_{1}+\eta_{z}I\right)^{-1}Z^{(2)\top}\left(I-A_{n_{2}}\right)\epsilon^{(2)}}}\nonumber \\
& +\underset{Term2}{\underbrace{n_{2}^{1/2}n_{2}^{-1}V^{\left(1\right)}Z^{\left(1\right)}\left(Z^{\left(1\right)\top}Z^{\left(1\right)}/n_{1}+\eta_{z}I\right)^{-1}Z^{(2)\top}\left(I-A_{n_{2}}\right)\epsilon^{(2)}}}.
\end{align}
The first term in the above display accounts for the uncertainty introduced by the estimated
parameters $\gamma_{z}$ in the first stage. It involves the product
of $Z^{(2)\top}$ with the projection matrix $\left(\frac{Z^{(1)\top}Z^{(1)}}{n_{1}}+\eta_{z}I\right)^{-1}$,
which represents the uncertainty due to the first-stage estimation.
The second term captures the remaining uncertainty from the second stage
error $\epsilon^{(2)\ensuremath{}}$ after accounting for the projection
on $Z^{(2)\ensuremath{}}$. The second-stage error $\epsilon^{(2)}$
further interacts with the projection matrix $(I-A_{n_{2}})$, introducing
additional uncertainty. 



Notice that 
\begin{align*}
\textnormal{Cov}\left(Term1,Term2\right) & =\sigma_{\epsilon}^{2}\E\left[\gamma_{z} Z^{\left(1\right)\top} Z^{\left(1\right)}\left(Z^{\left(1\right)\top}Z^{\left(1\right)}/n_{1}+\eta_{z}I\right)^{-1}Z^{\left(1\right)\top}\left(I-A_{n_{2}}\right)\right.\\
 & \ \left.V^{\left(1\right)\top}Z^{\left(1\right)\top}\left(Z^{\left(1\right)\top}Z^{\left(1\right)}/n_{1}+\eta_{z}I\right)^{-1} Z^{\left(1\right)}\left(I-A_{n_{2}}\right)\right]\\
 & \overset{p}{\to}\sigma_{\epsilon}^{2}\E\left[V^{\left(1\right)\top}\right]=o_{p}\left(1\right),
\end{align*}
when the assumption that $\eta_{z}$ satisfies $n_{2}^{1/2}\eta_{z}\|\gamma_{z}\|=o(1)$
holds.

The first term in \eqref{eq asymp normality epsilon},
\begin{equation*}
n_{1}^{1/2}n_{1}^{-1}\gamma_{z} Z^{\left(1\right)\top} Z^{\left(1\right)}\left(Z^{\left(1\right)\top} Z^{\left(1\right)}/n_{1}+\eta_{z}I\right)^{-1} Z^{\left(2\right)\top}\left(I-A_{n_{2}}\right)\epsilon^{(2)},
\end{equation*}
converges to a normal distribution with a zero mean straightforwardly following the Central Limit Theorem.

For notation simplicity, we denote 
\begin{equation*}
Q= Z^{\left(1\right)} \left( Z^{\left(1\right)\top} Z^{\left(1\right)}/n_{1}+\eta_{z} I \right)^{-1} Z^{\left(2\right)\top}\left(I-A_{n_{2}}\right),
\end{equation*}
where $Q$ is a $n\times n$ non-symmetric matrix. Thus, the second term
\begin{align}
n_{1}^{1/2}n_{1}^{-1}V^{\left(1\right)}Z^{\left(1\right)}\left( Z^{\left(1\right)\top} Z^{\left(1\right)}/n_{1}+\eta_{z}I\right)^{-1} Z^{\left(2\right)\top}\left(I-A_{n_{2}}\right)\epsilon^{(2)}=n_{1}^{1/2}n_{1}^{-1}V^{\left(1\right)}Q\epsilon^{(2)}.
\end{align}

Now we study the limiting behavior of the previous equation.
First, due to the independent nature of the data split, we have 
\begin{equation*}
n_{1}^{1/2}n_{1}^{-1}\E\left[V^{\left(1\right)}Q\epsilon^{(2) \top}\right]=n_{1}^{1/2}n_{1}^{-1}\E\left[\operatorname{tr}\left\{ V^{\left(1\right)}Q\epsilon^{(2)}\right\} \right]=n_{1}^{1/2}n_{1}^{-1}\E\left[\operatorname{tr}\left\{ Q\epsilon^{(2)}V^{\left(1\right)}\right\} \right]=0.
\end{equation*}
In addition, due to the independent nature of the data split, we have 
\begin{align}
n_{1}^{1/2}n_{1}^{-1}V^{\left(1\right)}Q\epsilon^{(2)\top} & =\underset{\bbF_{1}}{\underbrace{n_{1}^{1/2}n_{1}^{-1}\sum_{i=1}^{n_{1}}Q_{ii}\epsilon_{i}v_{i}}}\\
 & +\underset{\bbF_{2}}{\underbrace{n_{1}^{1/2}n_{1}^{-1}\sum_{i=1,i}^{n_{1}}\sum_{j=1,j\neq i}^{n_{1}}Q_{ij}\epsilon_{i}v_{j}}}.\nonumber 
\end{align}
 Consider the variance of the $\bbF_{1}$ term,
\begin{align}
\sigma_{\bbF_{1}}^{2}=n_{1}^{-1}\sum_{i=1}^{n_{1}}Q_{ii}^{2}\sigma_{\epsilon}^{2}\sigma_{v}^{2}=n^{-1}\operatorname{tr}(Q\circ Q)\sigma_{\epsilon}^{2}\sigma_{v}^{2}.
\end{align}
From the mean zero and bounded variance assumption, we have that
\begin{align}
\bbF_{1}/\sigma_{\bbF_{1}} \overset{d}{\to} N(0,1).
\end{align}
 It is also clear that 
\begin{equation*}
\textnormal{Cov}(\bbF_{1},\bbF_{2})=0.    
\end{equation*}

Now, consider the variance of the $\bbF_{2}$ term,
\begin{align*}
\sigma_{\bbF_{2}}^2=n_{1}^{-1}\sum_{i=1,i}^{n_{1}}\sum_{j=1,j\neq i}^{n_{1}}Q_{ij}^{2}\sigma_{v}^{2}\sigma_{\epsilon}^{2}=n^{-1}\left\{ \operatorname{tr}(QQ^{\top})-\operatorname{tr}(Q\circ Q)\right\} \sigma_{v}^{2}\sigma_{\epsilon}^{2}.
\end{align*}
Then, we can employ Lemma \ref{lemma:martingale} showing that the term $\bbF_{2}$ converges to a mean zero normal distribution with variance
$\sigma_{\bbF_{2}}^2$. 

Therefore, from the results above, we have that
\begin{align}
\frac{\bbF_{1}+\bbF_{2}}{\sqrt{\sigma_{\bbF_{1}}^{2}+\sigma_{\bbF_{1}}^{2}}} \overset{d}{\to} N(0,1).
\end{align}
Notice that $\bbF_{1}$ and $\bbF_{2}$ are independent, and that
\begin{equation*}
\sigma_{\alpha}^{2}=\textnormal{Var}\left(n_{1}^{1/2}n_{1}^{-1}V^{\left(1\right)}Q\epsilon^{(2)\top}\right)=\textnormal{Var}\left(\bbF_{1}+\bbF_{2}\right)=\sigma_{\bbF_{1}}^{2}+\sigma_{\bbF_{2}}^{2}. 
\end{equation*}

Finally, we return to the representation in equation \eqref{eq: asymp linearity}. Substituting results for the first term, and the asymptotic expression obtained above for the term $\frac{1}{n_{2}}\widehat{d}^{(2)\top}(I-A_{n_{2}})d^{(2)}$, we have that:
\begin{align*}
\sqrt{n_{2}}(\widehat{\alpha}-\alpha) & = (\gamma_{z}^{*^{\top}}\Sigma_{z}\gamma_{z}^{*})^{-1}\frac{1}{\sqrt{n_{2}}}\cdot\widehat{d}^{(2)\top}(I-A_{n_{2}})(X^{(2)}\gamma_{x}+\epsilon^{(2)})+o_{p}(1)
\end{align*}

The following term has been shown to be $o_{p}(1)$:
\begin{equation*}
\frac{1}{\sqrt{n_{2}}}\cdot\widehat{d}^{(2)\top}(I-A_{n_{2}})X^{(2)}\gamma_{x}=o_{p}(1).
\end{equation*}

Thus, the main contribution comes from the error term:
\begin{equation*}
\frac{1}{\sqrt{n_{2}}}\cdot\widehat{d}^{(2)\top}(I-A_{n_{2}})\epsilon^{(2)}.
\end{equation*}

Based on the analysis of Term 1 and Term 2 in equation \eqref{eq asymp normality epsilon} above, both converge in distribution
to an independent normal distribution with a variance of order $o_{p}(1)$. Therefore, we have:
\begin{equation*}
\widehat{d}^{(2)\top}(I-A_{n_{2}})\epsilon^{(2)}=O_{p}(\sqrt{n_{2}}).
\end{equation*}

Thus, by collecting the terms, the overall expression satisfies:
\begin{align}
\sqrt{n_{2}}\frac{1}{\sigma_{\alpha}^{2}}\left(\widehat{\alpha}-\alpha\right)\overset{d}{\to}N(0,1).
\end{align}

Finally, the asymptotic normality of the proposed methods holds, where\\
$\sigma_{\alpha}^{2}=\left(\sigma_{\bbF_{1}}^{2}+\sigma_{\bbF_{2}}^{2}\right)/\left(\gamma_{z}^{*^{\top}}\Sigma_{z}\gamma_{z}^{*}\right)^{2}$. 

\end{proof}

We state and prove the following auxiliary result to prove Theorem \ref{thm:asymp normality}.

\begin{lemma}
\label{lemma:martingale}
    Let Assumptions \ref{assumption_Eigen_Upsilon}--\ref{assumption_large_p} hold. Then, $\bbF_{2}=n_{1}^{1/2}n_{1}^{-1}\sum_{i=1,i}^{n_{1}}\sum_{j=1,j\neq i}^{n_{1}}Q_{ij}\epsilon_{i}v_{j}$
is normal.
\end{lemma}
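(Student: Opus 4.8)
The plan is to prove asymptotic normality of the degenerate bilinear form $\bbF_2$ by a martingale-difference central limit theorem, working conditionally on the design $\mathcal{D}=\{X^{(2)},Z^{(1)},Z^{(2)}\}$ that determines $Q$. Because $\bbF_2$ is bilinear in the two error sequences $\{\epsilon_i\}$ (from $\mS_2$) and $\{v_j\}$ (from $\mS_1$), which by the sample split and $\E[\epsilon_i\mid Z_i]=\E[v_i\mid Z_i]=0$ are independent of each other and of $\mathcal{D}$, a plain Lindeberg--Feller argument does not apply; instead I would invoke a central limit theorem for generalized quadratic/bilinear forms (in the spirit of de Jong, 1987), reformulated through the Hall--Heyde martingale CLT.

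First I would construct the martingale array. Ordering the index $k=1,\dots,n_1$ and letting $\mathcal{F}_k=\sigma(\mathcal{D},\epsilon_1,v_1,\dots,\epsilon_k,v_k)$, define
$$
D_k=n_1^{1/2}n_1^{-1}\Big(\epsilon_k\sum_{j<k}Q_{kj}v_j+v_k\sum_{i<k}Q_{ik}\epsilon_i\Big),
$$
so that $\bbF_2=\sum_{k=1}^{n_1}D_k$ and, by independence and mean-zero of $(\epsilon_k,v_k)$ given $\mathcal{F}_{k-1}$, $\E[D_k\mid\mathcal{F}_{k-1}]=0$; hence $\{D_k,\mathcal{F}_k\}$ is a martingale difference sequence. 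The two conditions to verify are then: (i) conditional-variance convergence, $\sum_{k=1}^{n_1}\E[D_k^2\mid\mathcal{F}_{k-1}]/\sigma_{\bbF_2}^2\overset{p}{\to}1$; and (ii) a conditional Lyapunov/Lindeberg condition, for which it suffices that $\sum_{k=1}^{n_1}\E[D_k^4\mid\mathcal{D}]/\sigma_{\bbF_2}^4\overset{p}{\to}0$.

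For the conditional variance I would use $\E[D_k^2\mid\mathcal{F}_{k-1}]=n_1^{-1}\big(\sigma_\epsilon^2(\sum_{j<k}Q_{kj}v_j)^2+\sigma_v^2(\sum_{i<k}Q_{ik}\epsilon_i)^2\big)$ --- the cross term vanishing because $\epsilon_k$ and $v_k$ are independent --- whose sum over $k$ has conditional mean $n_1^{-1}\sigma_\epsilon^2\sigma_v^2\sum_{i\neq j}Q_{ij}^2=n_1^{-1}\sigma_\epsilon^2\sigma_v^2\{\operatorname{tr}(QQ^\top)-\operatorname{tr}(Q\circ Q)\}=\sigma_{\bbF_2}^2$, matching the variance in the proof of Theorem \ref{thm:asymp normality}. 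The remaining work is to show this sum concentrates around $\sigma_{\bbF_2}^2$: expanding the squares and using fourth moments of the errors, the dominant fluctuation is an off-diagonal quadratic form whose conditional variance is of order $n_1^{-2}\{\operatorname{tr}(QQ^\top QQ^\top)-\operatorname{tr}(QQ^\top\circ QQ^\top)\}$, precisely the numerator of the ratio hypothesis imposed in Theorem \ref{thm:asymp normality}. Under that hypothesis this is $o_p(\sigma_{\bbF_2}^4)$, since $\operatorname{tr}(QQ^\top)-\operatorname{tr}(Q\circ Q)$ is of order $n$, which gives (i); and the eigenvalue bounds in Assumptions \ref{assumption_reg}--\ref{assumption_sigma_E} together with $p_x/n\to\tau_x$, $p_z/n\to\tau_z$ (Assumption \ref{assumption_large_p}) keep $Q$ well-conditioned, so $\max_i\sum_j Q_{ij}^2$ is of smaller order than $\operatorname{tr}(QQ^\top)-\operatorname{tr}(Q\circ Q)$, which yields (ii).

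The main obstacle is the conditional-variance stability in step (i): because the normalizer $\sigma_{\bbF_2}^2$ is itself random through $Q$, one cannot simply take expectations but must show the empirical conditional variance concentrates in probability, and this is exactly where the assumption $\{\operatorname{tr}(QQ^\top QQ^\top)-\operatorname{tr}(QQ^\top\circ QQ^\top)\}/\{\operatorname{tr}(QQ^\top)-\operatorname{tr}(Q\circ Q)\}=o_p(1)$ carried over from Theorem \ref{thm:asymp normality} is indispensable --- it forces the off-diagonal (``interaction'') mass of $QQ^\top$ to be asymptotically negligible relative to the variance, ruling out the non-Gaussian limits that degenerate bilinear forms can otherwise exhibit. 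Once (i) and (ii) hold, the martingale CLT delivers $\bbF_2/\sigma_{\bbF_2}\overset{d}{\to}N(0,1)$. A minor technical point is the index bookkeeping arising from $\mS_1$ and $\mS_2$ having possibly unequal sizes and $Q$ being non-square; this is handled by relabeling on a common index set, the discarded terms being of smaller order.
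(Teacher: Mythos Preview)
Your proposal is correct and follows essentially the same route as the paper: both construct the martingale-difference array $D_k$ (the paper's $W_{ni}$) by grouping the bilinear terms according to the larger index, verify the conditional-variance stability $\sum_k\E[D_k^2\mid\mathcal{F}_{k-1}]/\sigma_{\bbF_2}^2\overset{p}{\to}1$ using exactly the trace-ratio hypothesis on $Q$, and check Lindeberg via a fourth-moment bound. Your presentation is, if anything, a bit more careful than the paper's about conditioning on the design and about the non-square shape of $Q$ induced by the sample split.
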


\begin{proof}
Define 
\begin{equation*}
\bbF_{21}=n_{1}^{-1}\sum_{i=1,i}^{n_{1}}\sum_{j=1,j\neq i}^{n_{1}}Q_{ij}\epsilon_{i}v_{j}=\sum_{i=1}^{n}\mR_{i},
\end{equation*}
 where $\mR_{i}=n_{1}^{-1}\sum_{j=1,j\neq i}^{n_{1}}Q_{ij}\epsilon_{i}v_{j}$.

Then, 
\begin{equation*}
\E\left[\mR_{i}^{2} \right]=n_{1}^{-2}\sigma_{\epsilon}^{2}\sigma_{v}^{2}\sum_{j=1,j\neq i}^{n_{1}}Q_{ij}^{2},
\end{equation*}
and
\begin{equation*}
\E\left[\mR_{i}\mR_{j}\right]=0\quad\text{for i }\neq\text{j}.
\end{equation*}

Then, we have 
\begin{align}
\textnormal{Var}\left(\bbF_{21}\right)=n_{1}^{-2}\sigma_{v}^{2}\sigma_{\epsilon}^{2}\sum_{i=1,i}^{n_{1}}\sum_{j=1,j\neq i}^{n_{1}}Q_{ij}^{2}=n_{1}^{-2}\sigma_{v}^{2}\sigma_{\epsilon}^{2}\left\{ \operatorname{tr}(QQ^{\top})-\operatorname{tr}(Q\circ Q)\right\} .
\end{align}

Now we present some martingale-related definitions. 
\begin{align*}
W_{ni}=n^{-1}\sum_{j=1}^{i-1}(v_{i}\epsilon_{j}Q_{ij}+\epsilon_{i}v_{j}Q_{ji}),
\end{align*}
 and the filtration is $\mF_{i}=\left\{ (X,Z,\epsilon_{j},v_{j}),j\leq i\right\}$. Also, we have $\E\left[W_{ni}\mid\mF_{i-1}\right]=0$ for $i=2,\dots,n$. It means
that $\{\sum_{j=2}^{i}W_{nj},\mF_{i}:2\leq i\leq n_{1}\}$ is a zero
mean martingale. Thus, we define $\tW_{n,i}=\E\left[W_{ni}^{2}\mid\mF_{i-1}\right]$
and $\tW_{n}=\sum_{i=2}^{n_{1}}\tW_{ni}$.

It suffices to show by the Central Limit Theorem that 
\begin{align}
\frac{\tW_{n}}{\sigma_{\bbF_{21}}^{2}}\xrightarrow{p}1,
\end{align}
 and for any $\epsilon>0$, 
\begin{align}
\sum_{i=1}^{n_{1}}\sigma_{\bbF_{21}}^{2}\E\left[W_{ni}^{2}I(|W_{ni}|>\epsilon\sigma_{\bbF_{21}})\mid\mF_{i-1}\right]\rightarrow 0.
\end{align}

First, it is easy to see 
\begin{align}
\E\left[W_{ni}^{2}\right]=n_{1}^{-2}\sigma_{v}^{2}\sum_{j=1}^{i-1}\epsilon_{j}^{2}Q_{ij}^{2}+n^{-2}\sigma_{\epsilon}^{2}\sum_{j=1}^{i-1}v_{j}^{2}Q_{ji}^{2}.
\end{align}

Thus, 
\begin{align}
\E\left[\tW_{n}\right]=\E\left[\sum_{i=2}^{n_{1}}\tW_{ni}\right] & =\E\left[n_{1}^{-2}\sigma_{v}^{2}\sum_{i=2}^{n_{1}}\sum_{j=1}^{i-1}\epsilon_{j}^{2}Q_{ji}^{2}\right]+\E\left[n_{1}^{-2}\sigma_{\epsilon}^{2}\sum_{i=2}^{n_{1}}\sum_{j=1}^{i-1}v_{j}^{2}Q_{ij}^{2}\right]\nonumber \\
 & =n_{1}^{-2}\sigma_{\epsilon}^{2}\sigma_{v}^{2}\left\{ \operatorname{tr}(QQ^{\top})-\operatorname{tr}(Q\circ Q)\right\} .
\end{align}
 Now we check the variance of $\frac{\tW_{n}}{\sigma_{\bbF_{21}}^{2}}$
goes to zero.

\begin{align}
\frac{\tW_{n}}{\sigma_{\bbF_{21}}^{2}} & =\underset{\bbK_{1}}{\underbrace{\sigma_{\bbF_{21}}^{-2}n_{1}^{-2}\sigma_{v}^{2}\sum_{i=1}^{n_{1}}\sum_{j=1}^{i-1}\epsilon_{j}^{2}Q_{ij}^{2}+\sigma_{\bbF_{21}}^{-2}n_{1}^{-2}\sigma_{v}^{2}\sum_{i=1}^{n_{1}}\sum_{j=1,j\neq k}^{i-1}\sum_{k=1}^{i-1}\epsilon_{j}\epsilon_{k}Q_{ij}Q_{ik}}}\nonumber \\
 & +\underset{\bbK_{12}}{\underbrace{\sigma_{\bbF_{21}}^{-2}n_{1}^{-2}\sigma_{\epsilon}^{2}\sum_{i=1}^{n_{1}}\sum_{j=1}^{i-1}v_{j}^{2}Q_{ji}^{2}+\sigma_{\bbF_{21}}^{-2}n_{1}^{-2}\sigma_{\epsilon}^{2}\sum_{i=1}^{n_{1}}\sum_{j=1,j\neq k}^{i-1}\sum_{k=1}^{i-1}v_{j}v_{k}Q_{ji}Q_{ki}}}.
\end{align}
 
 Let $\bbK_{1}=\sigma_{\bbF_{21}}^{-2}\sum_{i=2}^{n_{1}}\bbQ_{1i}$, where $\bbQ_{1i}=n_{1}^{-2}\sigma_{v}\sum_{j=1}^{i-1}\epsilon_{j}^{2}Q_{ij}^{2}+n_{1}^{-2}\sigma_{\epsilon}\sum_{j=1}^{i-1}v_{j}^{2}Q_{ji}^{2}$. 

Note that
\begin{align*}
\E\left[\bbQ_{1i}^{2}\right] & =n_{1}^{-4}\sigma_{v}^{4}\sum_{j=1}^{i-1}\epsilon_{j}^{4}Q_{ij}^{4}+n_{1}^{-4}\sigma_{\epsilon}^{4}\sum_{j=1}^{i-1}v_{j}^{4}Q_{ji}^{4}\\
 & +n_{1}^{-4}\sigma_{v}^{4}\sum_{k=1,k\neq j}^{i-1}\sum_{j=1}^{i-1}\epsilon_{j}^{2}\epsilon_{k}^{2}Q_{ij}^{2}Q_{ik}^{2}+n_{1}^{-4}\sigma_{\epsilon}^{4}\sum_{k=1,k\neq j}^{i-1}\sum_{j=1}^{i-1}v_{j}^{2}v_{k}^{2}Q_{ji}^{2}Q_{ki}^{2}\\
 & +2n_{1}^{-4}\sigma_{v}^{2}\sigma_{\epsilon}^{2}\sum_{j=1}^{i-1}\epsilon_{j}^{2}\epsilon_{j}^{2}Q_{ij}^{2}Q_{ji}^{2}+n_{1}^{-4}\sigma_{\epsilon}^{2}\sigma_{v}^{2}\sum_{k=1,k\neq j}^{i-1}\sum_{j=1}^{i-1}\epsilon_{j}^{2}v_{k}^{2}Q_{ij}^{2}Q_{ki}^{2}\\
 & +n_{1}^{-4}\sigma_{\epsilon}^{2}\sigma_{v}^{2}\sum_{k=1,k\neq j}^{i-1}\sum_{j=1}^{i-1}v_{j}^{2}\epsilon_{k}^{2}Q_{ji}^{2}Q_{ik}^{2},
\end{align*}
 where
\begin{equation*}
\bbQ_{1i}=n_{1}^{-2}\sigma_{v}\sum_{j=1}^{i-1}\epsilon_{j}^{2}Q_{ij}^{2}+n_{1}^{-2}\sigma_{\epsilon}\sum_{j=1}^{i-1}v_{j}^{2}Q_{ji}^{2},
\end{equation*}
 and 
\begin{equation*}
\bbQ_{1k}=n_{1}^{-2}\sigma_{v}\sum_{l=1}^{k-1}\epsilon_{l}^{2}Q_{kl}^{2}+n_{1}^{-2}\sigma_{\epsilon}\sum_{l=1}^{k-1}v_{l}^{2}Q_{lk}^{2}.
\end{equation*}

Then, 
\begin{align*}
\E\left[\bbQ_{1i}\bbQ_{1k}\right] & =n_{1}^{-4}\sigma_{v}^{4}\sum_{l=1}^{k-1}\sum_{j=1}^{i-1}\epsilon_{j}^{2}\epsilon_{l}^{2}Q_{ij}^{2}Q_{kl}^{2}+n_{1}^{-4}\sigma_{\epsilon}^{4}\sum_{l=1}^{k-1}\sum_{j=1}^{i-1}v_{j}^{2}v_{l}^{2}Q_{ji}^{2}Q_{lk}^{2}\\
 & +n_{1}^{-4}\sigma_{\epsilon}^{2}\sigma_{v}^{2}\sum_{l=1}^{k-1}\sum_{j=1}^{i-1}\epsilon_{j}^{2}v_{l}^{2}Q_{ij}^{2}Q_{lk}^{2}+n_{1}^{-4}\sigma_{\epsilon}^{2}\sigma_{v}^{2}\sum_{l=1}^{k-1}\sum_{j=1}^{i-1}v_{j}^{2}\epsilon_{l}^{2}Q_{ji}^{2}Q_{kl}^{2}.
\end{align*}

Recall that
\begin{align*}\sigma_{\bbF_{21}}^{2}=n^{-4}\sigma_{v}^{2}\sigma_{\epsilon}^{2}\sum_{i=1,i}^{n_{1}}\sum_{j=1,j\neq i}^{n_{1}}Q_{ij}^{2}=n_{1}^{-2}\sigma_{v}^{2}\sigma_{\epsilon}^{2}\left\{ \operatorname{tr}(QQ^{\top})-\operatorname{tr}(Q\circ Q)\right\} .\end{align*}
Thus, by the assumption stated in the theorem on the matrix $Q$ we have that\footnote{See \cite{chen2010two} for details.}
\begin{align*}
\textnormal{Var}(\bbK_{1}) & =\frac{1}{\sigma_{\bbF_{21}}^{4}}\left\{ \sum_{i=2}^{n_{1}}\E\left[\bbQ_{1i}^{2}\right]+\sum_{k=2,k\neq i}^{n_{1}}\sum_{i=2}^{n_{1}}\E\left[\bbQ_{1i}\bbQ_{1k}\right]-\sigma_{\bbF_{21}}^{2}\right\} \nonumber\\
 & =\frac{1}{\sigma_{\bbF_{21}}^{4}}\left\{ \left(\E\left[\epsilon^{4}\right]-\sigma_{\epsilon}^{4}\right)n_{1}^{-4}\sigma_{v}^{4}\sum_{i=2,j\neq i}^{n_{1}}\sum_{j=1}^{n_{1}}Q_{ij}^{4}+\left(\E\left[\epsilon^{4}\right]-\sigma_{\epsilon}^{4}\right)n_{1}^{-4}\sigma_{v}^{4}\sum_{k=2}^{n_{1}}\sum_{i=2}^{n_{1}}\sum_{j=1,j\neq k}^{n_{1}}Q_{ij}^{2}Q_{kj}^{2}\right.\nonumber\\
 & \left.+\left(\E\left[v^{4}\right]-\sigma_{v}^{4}\right)n_{1}^{-4}\sigma_{\epsilon}^{4}\sum_{i=2,j\neq i}^{n_{1}}\sum_{j=1}^{n_{1}}Q_{ji}^{4}+\left(\E\left[v^{4}\right]-\sigma_{v}^{4}\right)n_{1}^{-4}\sigma_{\epsilon}^{4}\sum_{k=2}^{n_{1}}\sum_{i=2}^{n_{1}}\sum_{j=1,j\neq k}^{n_{1}}Q_{ji}^{2}Q_{jk}^{2}\right\} \nonumber\\
 & \leq C\frac{\operatorname{tr}\{QQ^{\top}QQ^{\top}\}-\operatorname{tr}\{QQ^{\top}\circ QQ^{\top}\}}{\operatorname{tr}(QQ^{\top})-\operatorname{tr}(Q\circ Q)}=o_{p}(1).
\end{align*}

Consider the following term $\bbK_{2}$, 
\begin{align*}
\bbK_{2}=\sigma_{\bbF_{21}}^{-2}n_{1}\sigma_{v}^{2}\sum_{i=1}^{n_{1}}\sum_{j=1,j\neq k}^{i-1}\sum_{k=1}^{i-1}\epsilon_{j}\epsilon_{k}Q_{ij}Q_{ik}+\sigma_{\bbF_{21}}^{-2}n_{1}^{-2}\sigma_{\epsilon}^{2}\sum_{i=1}^{n_{1}}\sum_{j=1,j\neq k}^{i-1}\sum_{k=1}^{i-1}v_{j}v_{k}Q_{ji}Q_{ki}.
\end{align*}

Let us write 
\begin{equation*}
\bbK_{2}=\sigma_{\bbF_{21}}^{-2}\sum_{i=2}^{n_{1}}\bbQ_{2i}
\end{equation*}
 and 
\begin{equation*}
\bbQ_{2i}=n_{1}^{-2}\sum_{k=2}^{i-1}\bbQ_{2ik}
\end{equation*}
 where $\bbQ_{2ik}=\sigma_{v}^{2}Q_{ik}\epsilon_{k}\sum_{j=1}^{k-1}Q_{ij}\epsilon_{j}+\sigma_{\epsilon}^{2}Q_{ki}v_{k}\sum_{j=1}^{k-1}Q_{ji}v_{j}$.

Then, we have 
\begin{align*}
\bbQ_{2ik}^{2} & =\sigma_{v}^{4}Q_{ik}^{2}\epsilon_{k}^{2}(\sum_{j=1}^{k-1}\sum_{l=1}^{k-1}Q_{ij}Q_{il}\epsilon_{j}\epsilon_{l})+\sigma_{\epsilon}^{4}Q_{ki}^{2}v_{k}^{2}(\sum_{j=1}^{k-1}\sum_{l=1}^{k-1}Q_{ji}Q_{li}v_{j}v_{l})\\
 & +2\sigma_{v}^{2}\sigma_{\epsilon}^{2}Q_{ik}Q_{ki}\epsilon_{k}v_{k}(\sum_{j=1}^{k-1}\sum_{l=1}^{k-1}Q_{ij}Q_{li}\epsilon_{j}v_{l}).
\end{align*}
 Note that 
\begin{align*}
\E\left[\bbQ_{2ik}^{2}\right]=\sigma_{v}^{4}\sigma_{\epsilon}^{4}Q_{ik}^{2}(\sum_{j=1}^{k-1}Q_{ij}^{2})+\sigma_{v}^{4}\sigma_{\epsilon}^{4}Q_{ki}^{2}(\sum_{j=1}^{k-1}Q_{ji}^{2})
\end{align*}
 and $\E\left[\bbQ_{2ik}\bbQ_{2is}\right]=0$ for $k\neq s$.

Thus, we have 
\begin{align*}
\E\left[\bbQ_{2i}^{2}\right]=n_{1}^{-2}\sigma_{v}^{4}\sigma_{\epsilon}^{4}\left(\sum_{k=2}^{i-1}\sum_{j=1}^{k-1}Q_{ik}^{2}Q_{ij}^{2}+\sum_{k=2}^{i-1}\sum_{j=1}^{k-1}Q_{ki}^{2}Q_{ji}^{2}\right).
\end{align*}

We denote that $\bbQ_{2i}=n^{-2}\sum_{s=2}^{i-1}\bbQ_{2is}$, $\bbQ_{2k}=n^{-2}\sum_{t=2}^{k-1}\bbQ_{2kt}$,
and hence
\begin{equation*}
\E\left[\bbQ_{2i}\bbQ_{2k}\right]=n_{1}^{-4}\Bigl\{\sum_{s=2}^{min(i,k)-1}\E\left[\bbQ_{2is}\bbQ_{2ks}\right]+\sum_{s=2,s\neq t}^{i-1}\sum_{t=2}^{k-1}\E\left[\bbQ_{2is}\bbQ_{2kt}\right]\Bigr\}.
\end{equation*}

First, we check the first-term 
\begin{equation*}
\bbQ_{2is}=\sigma_{v}^{2}Q_{is}\epsilon_{s}\sum_{j=1}^{s-1}Q_{ij}\epsilon_{j}+\sigma_{\epsilon}^{2}Q_{si}v_{s}\sum_{j=1}^{s-1}Q_{ji}v_{j},
\end{equation*}
 and 
\begin{equation*}
\bbQ_{2ks}=\sigma_{v}^{2}Q_{ks}\epsilon_{s}\sum_{j=1}^{s-1}Q_{kj}\epsilon_{j}+\sigma_{\epsilon}^{2}Q_{sk}v_{s}\sum_{j=1}^{s-1}Q_{jk}v_{j}.
\end{equation*}
 We can calculate the covariance as 
\begin{align*}
\E\left[\bbQ_{2is}\bbQ_{2ks}\right] & =\sigma_{v}^{4}\sigma_{\epsilon}^{2}Q_{is}Q_{ks}(\sum_{j=1}^{s-1}Q_{ij}\epsilon_{j})(\sum_{j=1}^{s-1}Q_{kj}\epsilon_{j})+\sigma_{\epsilon}^{4}\sigma_{v}^{2}Q_{si}Q_{sk}(\sum_{j=1}^{s-1}Q_{ji}v_{j})(\sum_{j=1}^{s-1}Q_{jk}v_{j})\\
 & =\sigma_{v}^{4}\sigma_{\epsilon}^{4}Q_{is}Q_{ks}(\sum_{j=1}^{s-1}Q_{ij}Q_{kj})+\sigma_{\epsilon}^{4}\sigma_{v}^{4}Q_{si}Q_{sk}(\sum_{j=1}^{s-1}Q_{ji}Q_{jk})\\
 & =\sigma_{v}^{4}\sigma_{\epsilon}^{4}\{Q_{is}Q_{ks}(\sum_{j=1}^{s-1}Q_{ij}Q_{kj})+Q_{si}Q_{sk}(\sum_{j=1}^{s-1}Q_{ji}Q_{jk})\}.
\end{align*}

It is simple to see that 
\begin{align*}
\E\left[\bbQ_{2is}\bbQ_{2kt}\right]=0.
\end{align*}
 Then, we have 
\begin{align*}
\E\left[\bbQ_{2i}\bbQ_{2k}\right] & =n_{1}^{-4}\sigma_{v}^{4}\sigma_{\epsilon}^{4}\Bigl\{\sum_{s=2}^{min(i,k)-1}\bigl\{ Q_{is}Q_{ks}(\sum_{j=1}^{s-1}Q_{ij}Q_{kj})+Q_{si}Q_{sk}(\sum_{j=1}^{s-1}Q_{ji}Q_{jk})\bigr\}\Bigr\}\\
 & =n_{1}^{-4}\sigma_{v}^{4}\sigma_{\epsilon}^{4}\sum_{s=2}^{min(i,k)-1}\sum_{j=1}^{s-1}\bigl\{ Q_{is}Q_{ks}Q_{ij}Q_{kj}+Q_{si}Q_{sk}Q_{ji}Q_{jk}\bigr\}.
\end{align*}

Thus, we have 
\begin{align*}
\textnormal{Var}(\bbK_{2}) & =\sigma_{\bbF_{21}}^{-4}\left\{ \sum_{i=2}^{n_{1}}\E\left[\bbQ_{2i}^{2}\right]+\sum_{i=1,i\neq k}^{n_{1}}\sum_{k=1}^{n_{1}}\E\left[\bbQ_{2i}\bbQ_{2k}\right]\right\} \\
 & =\sigma_{\bbF_{21}}^{-4}\sum_{i=2}^{n_{1}}\sigma_{v}^{4}\sigma_{\epsilon}^{4}\left(\sum_{k=2}^{i-1}\sum_{j=1}^{k-1}Q_{ik}^{2}Q_{ij}^{2}+\sum_{k=2}^{i-1}\sum_{j=1}^{k-1}Q_{ki}^{2}Q_{ji}^{2}\right)\\
 & +\sigma_{\bbF_{21}}^{-4}n_{1}^{-4}\sigma_{v}^{4}\sigma_{\epsilon}^{4}\sum_{s=2}^{min(i,k)-1}\sum_{j=1}^{s-1}\{Q_{is}Q_{ks}Q_{ij}Q_{kj}+Q_{si}Q_{sk}Q_{ji}Q_{jk}\}\\
 & \leq\left\{ \operatorname{tr}(QQ^{\top})-\operatorname{tr}(Q\circ Q)\right\} ^{-1}\\
 & \left(\left(\sum_{k=2}^{i-1}\sum_{j=1}^{k-1}Q_{ik}^{2}Q_{ij}^{2}+\sum_{k=2}^{i-1}\sum_{j=1}^{k-1}Q_{ki}^{2}Q_{ji}^{2}\right)+\sum_{s=2}^{min(i,k)-1}\sum_{j=1}^{s-1}\{Q_{is}Q_{ks}Q_{ij}Q_{kj}+Q_{si}Q_{sk}Q_{ji}Q_{jk}\}\right)\\
 & =\frac{1}{\operatorname{tr}(QQ^{\top})-\operatorname{tr}(Q\circ Q)}\operatorname{tr}(QQ^{\top}\circ QQ^{\top})\\
 & =o_{p}(1).
\end{align*}

To conclude, we obtain $\bbK_{1}\xrightarrow{p}1$ and $\bbK_{2}\xrightarrow{p}0$.

It remains to show that for any $\epsilon>0$, 
\begin{align*}
\sum_{i=1}^{n_{1}}\sigma_{\bbF_{21}}^{2}\E\left[W_{ni}^{2}I(|W_{ni}|>\epsilon\sigma_{\bbF_{21}})\mid\mF_{i-1}\right]\rightarrow 0.
\end{align*}

First, we apply the Markov inequality, for any $\epsilon^{*}$,
\begin{align*}
 P\left\{ \sum_{i=1}^{n_{1}}\sigma_{\bbF_{21}}^{2}\E\left[W_{ni}^{2}I(|W_{ni}|>\epsilon\sigma_{\bbF_{21}})\mid\mF_{i-1}\right]>\epsilon^{*}\right\} 
 & <\frac{1}{\epsilon^{*}\sigma_{\bbF_{21}}^{2}}\E\left[W_{ni}^{2}I(|W_{ni}|>\epsilon\sigma_{\bbF_{21}})\mid\mF_{i-1}\right].
\end{align*}

Then, we have that 
\begin{equation*}
\E\left[W_{ni}^{2}I(|W_{ni}|^{2}>\epsilon^{2}\sigma_{\bbF_{21}}^{2})\mid\mF_{i-1}\right]<\frac{1}{\epsilon^{2}\sigma_{\bbF_{21}}^{2}}\sum_{i=2}^{n_{1}}\E\left[W_{ni}^{4}\right]
\end{equation*}
due to the fact that $\E\left[yI(y>t)\right]=\int_{t}^{\infty}y^{2}\frac{1}{y}p(y)dy<\frac{1}{t}\E\left[y^{2}\right]$.

We also have that 
\begin{align*}
P\left\{ \sum_{i=1}^{n_{1}}\sigma_{\bbF_{21}}^{2}\E\left[W_{ni}^{2}I(|W_{ni}|>\epsilon\sigma_{\bbF_{21}})\mid\mF_{i-1}\right]>\epsilon^{*}\right\}  & <\frac{1}{\epsilon^{2}\epsilon^{*}}\frac{1}{\sigma_{\bbF_{21}}^{4}}\sum_{i=2}^{n_{1}}\E\left[W_{ni}^{4}\right].
\end{align*}

It suffices to verify the term $\sigma_{\bbF_{21}}^{-4}\sum_{i=2}^{n_{1}}\E\left[W_{ni}^{4}\right]$ converges. Consider
\begin{align*}
W_{ni}=n_{1}^{-1}\sum_{j=1}^{i-1}(v_{i}\epsilon_{j}Q_{ij}+\epsilon_{i}v_{j}Q_{ji}),
\end{align*}
and
\begin{align*}
W_{ni}^{2}=n_{1}^{-2}v_{i}^{2}(\sum_{j=1}^{i-1}\epsilon_{j}Q_{ij})^{2}+n_{1}^{-2}\epsilon_{i}^{2}(\sum_{j=1}^{i-1}v_{j}Q_{ji})^{2}+2n_{1}^{-2}\epsilon_{i}v_{i}(\sum_{j=1}^{i-1}v_{j}Q_{ji})(\sum_{j=1}^{i-1}\epsilon_{j}Q_{ij}).
\end{align*}

Thus, 
\begin{align*}
\E\left[W_{ni}^{4}\right] & =n_{1}^{-4}v_{i}^{4}(\sum_{j=1}^{i-1}\epsilon_{j}Q_{ij})^{4}+n^{-4}\epsilon_{i}^{4}(\sum_{j=1}^{i-1}v_{j}Q_{ji})^{4}+4n_{1}^{-4}\epsilon_{i}^{2}v_{i}^{2}(\sum_{j=1}^{i-1}v_{j}Q_{ji})^{2}(\sum_{j=1}^{i-1}\epsilon_{j}Q_{ij})^{2}\\
 & +2n_{1}^{-4}\epsilon_{i}^{2}v_{i}^{2}(\sum_{j=1}^{i-1}\epsilon_{j}Q_{ij})^{2}(\sum_{j=1}^{i-1}v_{j}Q_{ji})^{2}.
\end{align*}

Therefore, 
\begin{align*}
\sigma_{\bbF_{21}}^{-4}\sum_{i=2}^{n_{1}}\E\left[W_{ni}^{4}\right] & =n_{1}^{-4}\sigma_{\bbF_{21}}^{-4}\sum_{i=1}^{n_{1}}\left(v^{4}\epsilon^{4}\sum_{j=1}^{i-1}Q_{ij}^{4}+v^{4}\sigma_{\epsilon}^{4}\sum_{j=1,j\neq k}^{i-1}\sum_{k=1}^{i-1}Q_{ij}^{2}Q_{ik}^{2}\right.\nonumber\\
 & +v^{4}\epsilon^{4}\sum_{j=1}^{i-1}Q_{ji}^{4}+\epsilon^{4}\sigma_{v}^{4}\sum_{j=1,j\neq k}^{i-1}\sum_{k=1}^{i-1}Q_{ji}^{4}Q_{ki}^{4}\nonumber\\
 & +4\sigma_{v}^{4}\sigma_{\epsilon}^{4}\sum_{j=1}^{i-1}Q_{ji}^{2}Q_{ij}^{2}+4\sigma_{v}^{4}\sigma_{\epsilon}^{4}\sum_{j=1,j\neq k}^{i-1}\sum_{k=1}^{i-1}Q_{ji}^{2}Q_{ki}^{2}\nonumber\\
 & \left.+ 2\sigma_{v}^{4}\sigma_{\epsilon}^{4}\sum_{j=1}^{i-1}Q_{ji}^{2}Q_{ij}^{2}+2\sigma_{v}^{4}\sigma_{\epsilon}^{4}\sum_{j=1,j\neq k}^{i-1}\sum_{k=1}^{i-1}Q_{ji}^{2}Q_{ki}^{2} \right)\\
 & \leq C\frac{\operatorname{tr}(QQ^{\top}\circ QQ^{\top})}{\operatorname{tr}(QQ^{\top})-\operatorname{tr}(Q\circ Q)^{2}}=o_{p}(1).
\end{align*}
\end{proof}

\subsection{Proof of Theorem \ref{thm:variance}}
\label{proofs_theorem3}

\begin{proof}
Notice that $\sigma_{\alpha}^{2}=\left(d^{\top}\left(I-A_{n_{2}}\right)d\right)^{-1}d^{\top}\left(I-A_{n_{2}}\right)d\left(d^{\top}\left(I-A_{n_{2}}\right)d\right)^{-1}\sigma_{\epsilon}^{2}$.

It remains for us to discuss the estimation of $\sigma_{\epsilon}^{2}$
below. Write 
\[
\widehat{\sigma}_{\epsilon}^{2}=\frac{\check{\sigma}_{\epsilon}^{2}}{1-n_{2}^{-1}\operatorname{tr}\left(A_{n_{2}}\right)},
\]
 where $\check{\sigma}_{\epsilon}^{2}=\text{\ensuremath{\frac{1}{n_{2}}}}Y^{(2)\top}\left(I-P_{n_{2}}\right)Y^{(2)}$.

Denote $S_{n_{2}}=\left[d^{(2)},X^{(2)}\right]$ and $P_{n_{2}}$
is the projection matrix of $S_{n_{2}}$. Thus, we have that $P_{n_{2}}=S_{n_{2}}\left(S_{n_{2}}^{\top}S_{n_{2}}+\eta_{s}I\right)^{-1}S_{n_{2}}^{\top}$.
Now we establish the statistical properties of $\check{\sigma}_{\epsilon}^{2}$.
Let 
\begin{align*}
\check{\sigma}_{\epsilon}^{2} & =\text{\ensuremath{\frac{1}{n_{2}}}}Y^{(2)\top}\left(I-P_{n_{2}}\right)Y^{(2)}\\
 & =\frac{1}{n_{2}}\left(\alpha d^{(2)}+X^{(2)}\gamma_{x}+\epsilon^{(2)}\right)^{\top}\left(I-P_{n_{2}}\right)\left(\alpha d^{(2)}+X^{(2)}\gamma_{x}+\epsilon^{(2)}\right)\\
 & =\underset{A_{1}}{\underbrace{\frac{1}{n_{2}}\alpha^{2}d^{(2)\top}\left(I-P_{n_{2}}\right)d^{(2)}}}+\underset{A_{2}}{2\underbrace{\frac{1}{n_{2}}\alpha d^{(2)\top}\left(I-P_{n_{2}}\right)X^{(2)}\gamma_{x}}}+\underset{A_{3}}{2\underbrace{\frac{1}{n_{2}}\alpha d^{(2)\top}\left(I-P_{n_{2}}\right)\epsilon^{(2)}}}\\
 & \ +\underset{A_{4}}{\underbrace{\frac{1}{n_{2}}\gamma_{x}^{\top}X^{(2)\top}\left(I-P_{n_{2}}\right)X^{(2)}\gamma_{x}}}+\underset{A_{5}}{2\underbrace{\frac{1}{n_{2}}\gamma_{x}^{\top}X^{(2)\top}\left(I-P_{n_{2}}\right)\epsilon^{(2)}}}+\underset{A_{6}}{\underbrace{n_{2}^{-1}\epsilon^{(2)\top}\left(I-P_{n_{2}}\right)\epsilon^{(2)}}}.
\end{align*}

For the term $A_{1}$, 
\begin{align*}
A_{1} & =\frac{1}{n_{2}}\alpha^{2}d^{(2)\top}\left(I-P_{n_{2}}\right)d^{(2)}\\
 & =\frac{1}{n_{2}}\alpha^{2}\left(Z^{(2)}\gamma_{z}+V^{(2)}\right)^{\top}\left(I-P_{n_{2}}\right)\left(Z^{(2)}\gamma_{z}+V^{(2)}\right)
\end{align*}

The first term, $\frac{1}{n_{2}}\alpha^{2}\gamma_{z}^{\top}Z^{(2)\top}(I-P_{n_{2}})Z^{(2)}\gamma_{z}$,
converges to zero in probability because $d^{(2)}$ is included in
the projection matrix $P_{n_{2}}$, making $(I-P_{n_{2}})d^{(2)}$
approximately zero. The second term, $\frac{2}{n_{2}}\alpha^{2}V^{(2)\top}(I-P_{n_{2}})Z^{(2)}\gamma_{z}$,
is $o_{p}(1)$ by the CLT under the standard assumption of independence
between $V^{(2)}$ and $Z^{(2)}$, as $\frac{1}{n_{2}}V^{(2)\top}Z^{(2)}=O_{p}(n_{2}^{-1/2})$.
The third term, $\frac{1}{n_{2}}\alpha^{2}V^{(2)\top}(I-P_{n_{2}})V^{(2)}$,
is also $o_{p}(1)$ by the Law of LLN, as $(I-P_{n_{2}})$ is idempotent
and bounded.

Following a similar analysis as above, we examine $A_{2}=\frac{1}{n_{2}}\alpha d^{(2)\top}\left(I-P_{n_{2}}\right)X^{(2)}\gamma_{x}$,
where $P_{n_{2}}$ is the projection matrix of $[d^{(2)},X^{(2)}]$.
Substituting $d^{(2)}=Z^{(2)}\gamma_{z}+V^{(2)}$ yields $\frac{1}{n_{2}}\alpha(Z^{(2)}\gamma_{z}+V^{(2)})^{\top}(I-P_{n_{2}})X^{(2)}\gamma_{x}$.

As in the previous analysis, this expression decomposes into two terms.
The first term, $\frac{1}{n_{2}}\alpha\gamma_{z}^{\top}Z^{(2)\top}(I-P_{n_{2}})X^{(2)}\gamma_{x}$,
is $o_{p}(1)$ since $X^{(2)}$ is included in the projection matrix.
The second term, $\frac{1}{n_{2}}\alpha V^{(2)\top}(I-P_{n_{2}})X^{(2)}\gamma_{x}$,
is also $o_{p}(1)$ by the same arguments as before, utilizing the
Central Limit Theorem and the properties of the projection matrix.
Therefore, the entire expression converges to zero in probability.

For the term $A_{3}$, the calculation are similar to those for the
term $A_{2}$. Specifically,
\begin{align*}
A_{3} & =\frac{1}{n_{2}}\alpha d^{(2)\top}\left(I-P_{n_{2}}\right)\epsilon^{(2)}\\
 & =\frac{1}{n_{2}}\alpha\left(Z^{(2)}\gamma_{z}+V^{(2)}\right)^{\top}\left(I-P_{n_{2}}\right)\epsilon^{(2)}\\
 & =\frac{1}{n_{2}}\alpha\gamma_{z}^{\top}Z^{(2)\top}\left(I-P_{n_{2}}\right)\epsilon^{(2)}+\frac{1}{n_{2}}\alpha V^{(2)\top}\left(I-P_{n_{2}}\right)\epsilon^{(2)}.
\end{align*}
 Following the same arguments as above, both terms are $o_{p}(1)$,
making$A_{3}$ converge to zero in probability.

For the term $A_{4}$, we analyze $\frac{1}{n_{2}}\gamma_{x}^{\top}X^{(2)\top}(I-P_{n_{2}})X^{(2)}\gamma_{x}$,
where $P_{n_{2}}$ is the projection matrix of $[d^{(2)},X^{(2)}]$.
Following similar reasoning as above, since $X^{(2)}$ is included
in the projection matrix $P_{n_{2}}$, we have $(I-P_{n_{2}})X^{(2)}$
approximately zero. Therefore, this quadratic form converges to zero
in probability, making $A_{4}$ to be $o_{p}(1)$.

Regarding $A_{5},$ we have$\E_{X}\left[A_{5}\right]=0$, and 
\begin{align*}
\textnormal{Var}\left(A_{5}\right) & =4n_{2}^{-2}\E\left\{ \gamma_{x}^{\top}X^{(2)\top}\left(I-A_{n_{2}}\right)\epsilon\epsilon^{\top}\left(I-A_{n_{2}}\right)X^{(2)}\gamma_{x}\right\} \\
 & =4n_{2}^{-2}\sigma_{\epsilon}^{2}\gamma_{x}^{\top}X^{(2)\top}\left(I-A_{n_{2}}\right)^{2}X^{(2)}\gamma_{x}\\
 & =4n_{2}^{-2}\sigma_{\epsilon}^{2}\gamma_{x}^{\top}\left(X^{(2)\top}X^{(2)}-2X^{(2)\top}A_{n_{2}}X^{(2)}+X^{(2)\top}A_{n_{2}}^{2}X^{(2)}\right)\gamma_{x}\\
 & =4n_{2}^{-2}\sigma_{\epsilon}^{2}\gamma_{x}^{\top}\left[X^{(2)\top}X^{(2)}-2X^{(2)\top}X^{(2)}\left\{ I-\eta_{x}\left(\frac{X^{(2)\top}X^{(2)}}{n_{2}}+\eta_{x}I\right)^{-1}\right\} \right.\\
 & \left.+X^{(2)\top}X^{(2)}\left\{ I-\eta_{x}\left(\frac{X^{(2)\top}X^{(2)}}{n_{2}}+\eta_{x}I\right)^{-1}\right\} ^{2}\right]\gamma_{x}\\
 & =4n_{2}^{-2}\eta_{x}^{2}\sigma_{\epsilon}^{2}\gamma_{x}^{\top}X^{(2)\top}X^{(2)}\left(\frac{X^{(2)\top}X^{(2)}}{n_{2}}+\eta_{x}I\right)^{-2}\gamma_{x}\\
 & =4n_{2}^{-2}\eta_{x}^{2}\sigma_{\epsilon}^{2}\left\{ \gamma_{x}^{\top}\left(\frac{X^{(2)\top}X^{(2)}}{n_{2}}+\eta_{x}I\right)^{-1}\gamma_{x}-\eta_{x}\gamma_{x}^{\top}\left(\frac{X^{(2)\top}X^{(2)}}{n_{2}}+\eta_{x}I\right)^{-2}\gamma_{x}\right\} \\
 & \leq4n_{2}^{-2}\eta_{x}^{2}\sigma_{\epsilon}^{2}\left\{ \lambda_{\min}\left(\frac{X^{(2)\top}X^{(2)}}{n_{2}}\right)+\eta_{x}\right\} ^{-1}\|\gamma_{x}\|^{2}\\
 & \leq4n_{2}^{-2}\eta_{x}^{2}\|\gamma_{x}\|^{2}\sigma^{2}\\
 & =o(1),
\end{align*}
 where $\lambda_{\min}\left(\frac{X^{(2)\top}X^{(2)}}{n_{2}}\right)$
is the minimum eigenvalue of $\frac{X^{(2)\top}X^{(2)}}{n_{2}}$,
and $\E_{X}$ is the conditional expectation variance on $X$. If
$n_{2}^{-1}\eta_{x}\|\gamma_{x}\|^{2}=o(1)$, then 
\[
\mathrm{A}_{5}=o_{p}(1).
\]

Let $\gamma_{n}=\epsilon^{(2)\top}A_{n_{2}}\epsilon^{(2)}/\epsilon^{(2)\top}\epsilon^{(2)}$.
Then we have $\E_{X}\left(\epsilon^{(2)\top}A_{n_{2}}\epsilon^{(2)}\right)=\sigma_{\epsilon}^{2}\operatorname{tr}\left(A_{n_{2}}\right)$
and 
\[
1-\gamma_{n}=1-n_{2}^{-1}\operatorname{tr}\left(A_{n_{2}}\right)\left\{ 1+O_{p}\left(n_{2}^{-1/2}\right)\right\} =1-n_{2}^{-1}\operatorname{tr}\left(A_{n_{2}}\right)+O_{p}\left(n_{2}^{-1/2}\right)
\]
 due to the fact that $n_{2}^{-1}\operatorname{tr}\left(A_{n_{2}}\right)$
converges to $\tau\{1-\eta m(\eta)\}$ in probability by Lemma 2 of
the Supplementary Material in \citet{liu2020estimation}.
Thus, it follows that 
\[
A_{6}=\left(1-\gamma_{n}\right)\epsilon^{(2)\top}\epsilon^{(2)}/n_{2}=\left\{ 1-n_{2}^{-1}\operatorname{tr}\left(A_{n_{2}}\right)\right\} \epsilon^{(2)\top}\epsilon^{(2)}/n_{2}+O_{p}\left(n_{2}^{-1/2}\right).
\]

In conclusion, we have 
\[
\widehat{\sigma}_{\epsilon}^{2}=\frac{\check{\sigma}_{\epsilon}^{2}}{1-n_{2}^{-1}\operatorname{tr}\left(A_{n_{2}}\right)}=\frac{A_{6}}{1-n_{2}^{-1}\operatorname{tr}\left(A_{n_{2}}\right)}+o_{p}\left(1\right)=\sigma_{\epsilon}^{2}+o_{p}\left(1\right).
\]
 and that concludes the proof.
\end{proof}

\newpage

\bibliographystyle{chicago}
\bibliography{biblio}

\end{document}